\definecolor{qidong}{HTML}{000080}
\definecolor{ian}{HTML}{006EB8}
\newcommand{\N}{\mathbb N}
\newcommand{\Z}{\mathbb Z}
\newcommand{\Q}{\mathbb Q}
\newcommand{\C}{\mathbb C}
\newcommand{\R}{\mathbb R}
\newcommand{\indicator}[1]{\mathds{1}_{#1}}
\DeclareMathOperator{\interior}{int}
\DeclareMathOperator{\Int}{Int}
\DeclareMathOperator{\loc}{loc}
\DeclareMathOperator{\RHS}{RHS}
\DeclareMathOperator{\LHS}{LHS}
\DeclareMathOperator{\eff}{eff}
\DeclareMathOperator{\ex}{ex}
\newcommand{\norm}[1]{\left\lVert#1\right\rVert}
\newcommand{\abs}[1]{\left\lvert#1\right\rvert}
\newcommand{\set}[1]{\left\{#1\right\}}
\newcommand{\gen}[1]{\left\langle#1\right\rangle}
\newcommand{\evaluate}[2]{\left.#1\right\rvert_{#2}}
\newcolumntype{M}[1]{>{\centering\arraybackslash}m{#1}}
\newcommand\restr[2]{{
  \left.\kern-\nulldelimiterspace
  #1
  \mathchoice{\vphantom{\big|}}{}{}{}
  \right|_{#2}
  }}
\numberwithin{equation}{section}
\theoremstyle{plain}
\newtheorem{theorem}{Theorem}[section]
\newtheorem{lemma}[theorem]{Lemma}
\newtheorem{proposition}[theorem]{Proposition}
\newtheorem{assumption}{Assumption}
\theoremstyle{definition}
\newtheorem{definition}[theorem]{Definition}
\theoremstyle{remark}
\newtheorem{remark}[theorem]{Remark}
\let\vec\mathbf
\let\emptyset\varnothing
\title{High-fugacity expansion and crystallization in non-sliding hard-core lattice particle models without a tiling constraint}
\author{Qidong He}
\author{Ian Jauslin}
\affil{Department of Mathematics, Rutgers University}
\date{}
\begin{document}

\maketitle

\abstract{
  In this paper, we prove the existence of a crystallization transition for a family of hard-core particle models on periodic graphs in arbitrary dimensions.
  We establish a criterion under which crystallization occurs at sufficiently high densities.
  The criterion is more general than that in [Jauslin, Lebowitz, Comm. Math. Phys. {\bf 364}:2, 2018], as it allows models in which particles do not tile the space in the close-packing configurations, such as discrete hard-disk models.
  To prove crystallization, we prove that the pressure is analytic in the inverse of the fugacity for large enough complex fugacities, using Pirogov-Sinai theory.
  One of the main tools used for this result is the definition of a local density, based on a discrete generalization of Voronoi cells.
  We illustrate the criterion by proving that it applies to two examples: staircase models and the radius 2.5 hard-disk model on the square lattice.
}

\tableofcontents

\section{Introduction}

Crystallization is a very well-known phenomenon.
From a physical point of view, it has been studied extensively, and much of what has been observed can be understood using a combination of effective models and numerical simulations \cite{WJ57,AW57,PM86,St88,Mc10,BK11,Isobe2015}.
However, from a mathematical point of view, there is still much to do.
Even proving the existence of crystalline phases in (somewhat) realistic particle models can pose significant challenges \cite{Gaunt1965,Do68,HP74,Ba82,EB05,Hales2005,Theil2006,Mainini2014,Jauslin2018,Mazel2018,Mazel2019,Mazel2020,Mazel2021}.
One model which has received a considerable amount of attention for being both simple to state and somewhat realistic (as well as having applications to other fields such as coding theory \cite{Cohn2017,Viazovska2017}) is the hard-sphere model \cite{Adams1972,Berryman1983,Isobe2015}, in which particles are represented as identical spheres that interact via the constraint that no two spheres can overlap.
However, proving thermodynamic properties of this model has been a huge challenge: even proving that there is crystallization at zero temperature remained open for centuries, from the formulation of the problem by Kepler to the computer-assisted proof by Hales \cite{Hales2005}.
To this day, sphere packing problems are still the subject of active research, with recent breakthroughs in eight and twenty-four dimensions \cite{Cohn2017,Viazovska2017}.
At positive temperature, the problem of crystallization in the hard-sphere model is still wide open.
The main difficulty is that crystallization in a continuum model involves the breaking of the continuous translation symmetry of the system.
The intuition behind this difficulty is that very small defects of a crystal can destroy long-range order.

In this paper, we will focus on a simpler setup: lattice models (throughout this paper, we use the word \emph{lattice} to mean \emph{periodic graph}: we do not restrict ourselves to lattices in the algebraic sense), in which the particles occupy sites of a discrete periodic graph instead of the continuum.
This is a significant simplification, as the breaking of a continuous symmetry required for crystallization in the continuum is reduced to the breaking of a discrete symmetry, which is much easier to accomplish \cite{Mermin1966}.
Nevertheless, proving crystallization in lattice models is still challenging, and developing tools to overcome these difficulties may lead to advances in continuum models as well \cite{Ruelle1971}.
For simplicity, we will further restrict our attention to models in which particles interact solely via a hard-core repulsion, though the results presented below could be adapted to more general potentials, provided that they are short-ranged and sufficiently weak.
We will call these Hard-Core Lattice Particle (HCLP) models.
\bigskip

This paper builds upon \cite{Jauslin2017,Jauslin2018}, in which a class of HCLP models was considered, which satisfy a \emph{non-sliding} condition as well as a \emph{tiling} condition.
The non-sliding condition roughly means that, at high densities, neighboring particles are locked into place, and cannot \emph{slide} with respect to one another.
The tiling condition states that it is possible to tile the lattice with the supports of the particles.
(Technically, the condition in \cite{Jauslin2017,Jauslin2018} is stronger than just non-sliding and tiling, but those are the important aspects of the condition.)
Whereas the former condition is necessary for crystallization, the latter is purely technical.
In fact, Mazel, Stuhl, and Suhov, in their extensive review of lattice regularizations of hard-disk models \cite{Mazel2018,Mazel2019,Mazel2020,Mazel2021}, have constructed an infinite class of lattice regularizations of the hard-disk model that do not slide, but that do not tile the plane because of the presence of interstitial space in between the disks.

In the present paper, we will extend \cite{Jauslin2017,Jauslin2018} by relaxing the tiling condition and prove, using Pirogov-Sinai theory \cite{Borgs1989,Pirogov1975,Zahradnk1984}, that a much larger class of non-sliding HCLP models crystallize.
In particular, all the models studied by \cite{Jauslin2017,Jauslin2018} fall within our general framework.
In addition, we can treat models that do not tile the space, such as a model of hard disks of radius 2.5 on the square lattice; see Figure \ref{fig:octagon}.
We will discuss some explicit examples in Section \ref{sec:examples} of this paper (see Figures \ref{fig:staircase} and \ref{fig:octagon}), but the framework is rather general, and also applies to models in more than two dimensions.
In addition, we have simplified the condition in \cite{Jauslin2017,Jauslin2018}, so the current work makes that criterion for proving crystallization in a HCLP model more easily usable.
\bigskip

Let us be more specific on the condition under which we will prove crystallization.
In this introduction, we will not give formal definitions, which can be found in Section \ref{sec:model} below (see in particular Assumption \ref{assumption}).
We treat HCLP systems on periodic graphs in any dimension.
The crux of the condition concerns the close-packing configurations, which are configurations of particles that maximize the density.
The most important part of the condition is that we require the number of close-packing configurations to be finite; see Figure \ref{fig:ground_state} for an example.
This excludes \emph{sliding} models: if one can slide particles around without lowering the density, then the number of close-packings will be infinite, as is the case for instance in the $2\times2$-square model studied in \cite{Hadas2022}.

The argument we will use is based on controlling {\it defects} in close-packing configurations: if the density is sufficiently high (but not maximal), then the typical configurations will look similar to the close-packing ones.
In \cite{Jauslin2017,Jauslin2018}, the models considered have close-packing configurations that tile the space, so defects could be defined using sites in the lattice that are not covered by particles.
Since we allow non-tiling models, defining defects is more involved.
To do so, we decompose the lattice into {\it generalized Voronoi cells} (see Definition \ref{def:voronoi} below for a formal definition) which assign each and every point in the lattice to its nearest particles.
Thus, whereas the particles do not tile the space, the Voronoi cells cover it (with the caveat that we define Voronoi cells in such a way that they may overlap).
Defects are then defined from the size of Voronoi cells: in close-packing configurations, the cells are all the same size, so when the configuration deviates from a close-packing, cells will expand.
To quantify this, we introduce a {\it local density} for every particle in the configuration; see Definition \ref{def:local_density}.
We then identify defects by finding particles whose local density is lower than in the close-packing configurations.

To make this work without too many complications, we must impose some extra restrictions on the system.
For one, we require that the close-packing configurations be distinct enough, in the sense that two different close-packings cannot merge seamlessly.
In addition, we assume that whenever a particle does not belong to a close-packing configuration, it must impose a dip in the local density, and this dip cannot occur {\it arbitrarily far} from the particle.
These two are important assumptions, without which the discussion would fail dramatically.
In addition to these, we impose additional constraints, which make our arguments easier, but could, in principle, be relaxed in future work, without changing the method too much.
One of these is that we impose that different close-packing configurations are all related to each other by isometries, which ensures that the local density will be the same in different close-packings.
In addition, we exclude the possibility that the local density could exceed the total density.
This can happen in certain models \cite{Hales2010}, and this would break a number of arguments made in our proof.
\bigskip

Under this condition, we prove that crystallization occurs at sufficiently high densities.
To do so, we follow the same philosophy as in \cite{Jauslin2018}, and prove that the model has a convergent {\it high-fugacity expansion}, that is, an analytic expansion in the inverse of the fugacity (an expansion in $e^{-\mu}$ where $\mu$ us the chemical potential).
The idea of a high fugacity expansion for HCLP models dates back, at least, to Gaunt and Fisher \cite{Gaunt1965} (see also \cite{Do68,HP74}), and was systematized in \cite{Jauslin2018}.
The present work is a continuation of \cite{Jauslin2018}, and we extend the treatment of such expansions to a much wider class of models.
In particular, we prove that the Lee-Yang zeros \cite{YL52,LY52} are all located inside a finite-radius disk in the complex fugacity plane.
Combining this with a classical Mayer expansion argument \cite{Ma37,Ur27,Ru63}, we thus prove that the Lee-Yang zeros lie in a finite annulus in the complex fugacity plane.

To prove the convergence of the high-fugacity expansion, we use Pirogov-Sinai theory \cite{Borgs1989,Pirogov1975,Zahradnk1984}, which allows us to balance the costs coming from the drops in the density caused by defects with the entropy gains the defects produce.
\bigskip

The rest of this paper is structured as follows.
In Section \ref{sec:model}, we define the model more precisely, state the condition under which we will prove crystallization (see Assumption \ref{assumption}), and state our main results.
These are the staircase models, and the 12th nearest neighbor exclusion.
In Section \ref{sec:GFc}, we map the HCLP particle model to a {\it contour} model.
Following \cite{Jauslin2017,Jauslin2018}, we call these contours \emph{Gaunt-Fisher configurations}.
These formalize the notion of \emph{defect} mentioned above, which really should be understood as \emph{Gaunt-Fisher configurations}.
In Section \ref{sec:peierls}, we prove the crucial estimate that will allow Pirogov-Sinai theory to work for our system: the {\it Peierls condition}.
Roughly, that states that the cost of a Gaunt-Fisher configuration is exponentially large in its size, which will allow us to control the entropy of contours.
In Section \ref{sec:Pirogov_Sinai}, we carry out the Pirogov-Sinai analysis.
Our approach is similar to that of Zahradn\'ik \cite{Zahradnk1984}, and readers unfamiliar with Pirogov-Sinai theory may want to study that reference to understand the philosophy behind the method (see also the textbook \cite{Friedli2017}).
Finally, in Section \ref{sec:examples}, we discuss some explicit examples of models for which we prove Assumption \ref{assumption}.

\section{Model and main result}\label{sec:model}

Let $\Lambda_{\infty}$ be a periodic graph embedded in $\R^{d}$.
For example, $\Lambda_{\infty}$ could be $\mathbb Z^d$, the triangular lattice, or the honeycomb lattice (which is not, strictly speaking, a lattice, but rather a periodic graph).
Denote by $\mathrm{d}_{\Lambda_{\infty}}$ the (usual) graph distance on $\Lambda_{\infty}$.
Our interest is in Hard-Core Lattice Particle (HCLP) systems on $\Lambda_{\infty}$, which we formalize as follows.

Each particle has a \emph{shape} denoted by $\omega$, which is a bounded subset of $\R^{d}$ and, for convenience, is assumed to contain $\vec{0}$. 
Hence, a particle at $x\in\Lambda_{\infty}$ occupies the volume $\omega_{x}:= x+\omega$.
We require that each $\sigma_{x}:= \omega_{x}\cap\Lambda_{\infty}$ induce a connected subgraph of $\Lambda_{\infty}$. 
Pairs of particles interact via a hard-core repulsion, that is, their supports may not overlap.
Formally, given any $\Lambda\subseteq\Lambda_{\infty}$, we define the set of particle configurations on $\Lambda$ as
\begin{equation}
\Omega(\Lambda):=\set{X\subseteq\Lambda\mid\omega_{x}\cap\omega_{x'}=\emptyset\text{ for all }x\ne x'\in X}.
\end{equation}
We will study this system in the grand canonical ensemble:
if $\Lambda$ is finite, we define the partition function at fugacity $z$ ($:= e^{\beta\mu}$, where $\mu$ is the chemical potential and $\beta$ the inverse temperature) as
\begin{equation}
\Xi_{z}(\Lambda):=\sum_{X\in\Omega(\Lambda)}z^{\abs{X}},
\end{equation}
where $\abs X$ is the number of elements in $X$.
Let 
\begin{equation}
\rho_{\max}(\Lambda):=\frac{1}{\abs{\Lambda}}\max_{X\in\Omega(\Lambda)}\abs{X}
,\quad
\rho_{\max}:=\lim_{\Lambda\Uparrow\Lambda_{\infty}}\rho_{\max}(\Lambda)
\label{rhomax}
\end{equation} 
be the maximal density and its infinite-volume limit.
Finally, define the finite-volume pressure of the system as
\begin{equation}
p_{z}(\Lambda):=\frac{1}{\abs{\Lambda}}\log\Xi_{z}(\Lambda)
\end{equation} 
and its infinite-volume limit
\begin{equation}
p(z):=\lim_{\Lambda\Uparrow\Lambda_{\infty}}p_{z}(\Lambda).
\end{equation}

Our main result is that, provided the model satisfies a \emph{non-sliding} condition along with other geometric constraints (see Assumption \ref{assumption} below), the system crystallizes at high densities in the sense that there is long-range order in the positions of particles that breaks the translation symmetry of $\Lambda_{\infty}$.

\subsection{Assumption on the model}
To specify the assumption on the model, we will need a few definitions.

First, we will assume that $\Lambda_\infty$ is such that the boundary of any connected set is connected in a coarse-grained sense, which we will now define.

\begin{definition}\label{def:rconnected}
Two points $x,y\in\Lambda_\infty$ are \emph{neighbors} if and only if $d_{\Lambda_\infty}(x,y)\le1$, which gives us a natural notion of connectedness in $\Lambda_\infty$.
In addition, a set $S\subset\Lambda_\infty$ is said to be $r$-connected if $\forall x,y\in S$, there exists a path $x\equiv x_0,x_1,\cdots,x_N\equiv y\in S$ such that $d_{\Lambda_\infty}(x_i,x_j)\le r$.
\end{definition}

We will assume that $\Lambda_\infty$ is such that there exists $\mathcal R_0\in\mathbb N$ such that the interior and exterior boundaries (see Definition \ref{def:boundaries}) of any simply connected set are $\mathcal R_0$-connected (see Assumption \ref{assumption} below).
This is a very weak assumption that was shown to hold for a very large class of graphs \cite{Timar2013} including $\mathbb Z^d$ (for which $\mathcal R_0=d$), the triangular lattice (for which $\mathcal R_0=1$), and the honeycomb lattice (for which $\mathcal R_0=3$).
See Figure \ref{fig:boundary_rconnected} for an example.

\begin{definition}\label{def:boundaries}
  Given a connected set $\Lambda\subset\Lambda_{\infty}$, we define its interior boundary as
  \begin{equation}
    \partial^{\mathrm{in}}\Lambda:=\set{\lambda\in\Lambda:\mid d_{\Lambda_\infty}(\lambda,\Lambda^c)=1}
  \end{equation}
  and its exterior boundary as
  \begin{equation}
    \partial^{\mathrm{ex}}\Lambda:=\set{\lambda\in\Lambda^c\mid d_{\Lambda_\infty}(\lambda,\Lambda)=1}
    .
  \end{equation}
\end{definition}

\begin{figure}
  \hfil\includegraphics[width=4cm]{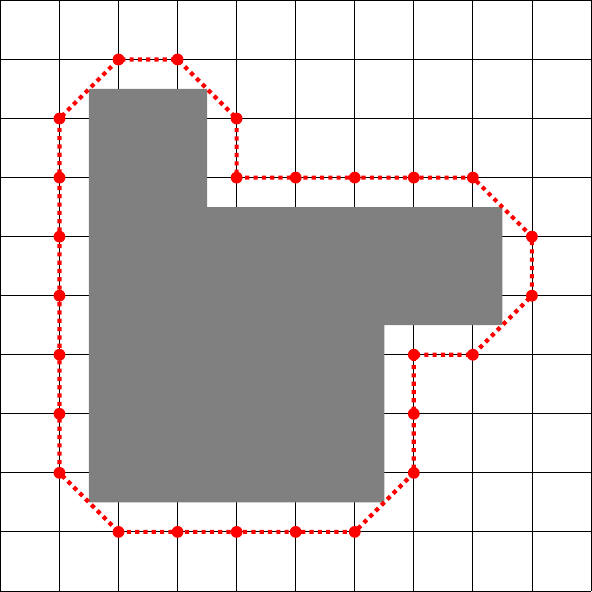}

  \caption{An example of a subset of $\mathbb Z^2$ and its boundary: the boundary is 2-connected.}
  \label{fig:boundary_rconnected}
\end{figure}

Now, let us define the notion of ground states, which could also be called \emph{close-packing states}.

\begin{definition}[ground state]
A \emph{ground state} in $\Lambda$ is a configuration $X\in\Omega(\Lambda)$ that maximizes the density: $\abs{X}=\abs{\Lambda}\rho_{\max}(\Lambda)$.
Taking the limit $\Lambda\Uparrow\Lambda_{\infty}$ in the sense of van Hove, the ground states tend to limiting configurations in $\Omega(\Lambda_{\infty})$.
An (infinite-volume) ground state is denoted by $\mathcal L^\#$ where $\#$ takes values in $\mathcal G$.
In other words, $\mathcal G$ is a set of indices, each of which specifies a ground state.
\end{definition}

See Figure \ref{fig:ground_state} for an example.

\begin{figure}
  \hfil\includegraphics[width=4.8cm]{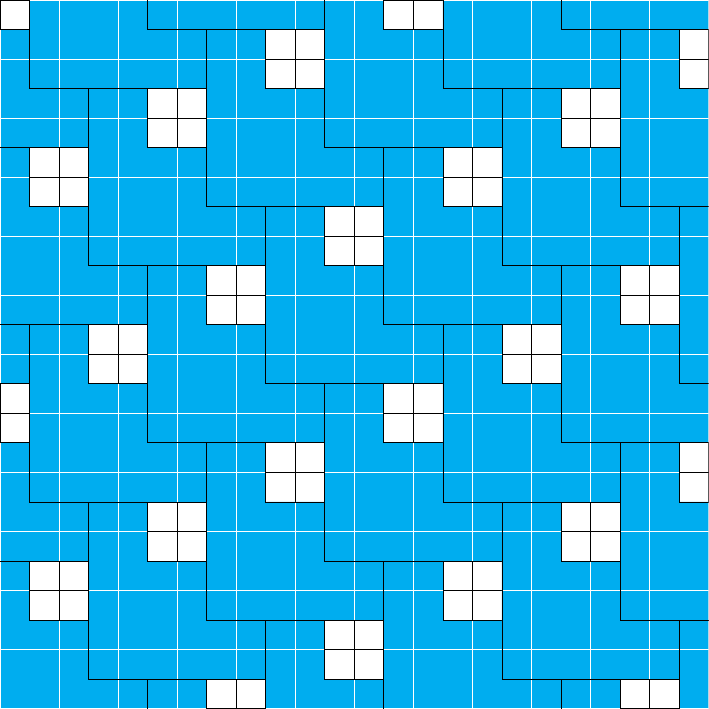}
  \caption{A section of one of the ground states for the 3-staircase model; see Section \ref{subsec:n_staircases}.}
  \label{fig:ground_state}
\end{figure}

We will assume that $\mathcal{G}$ is finite and that the ground states are periodic.
Moreover, we will assume that the different ground states are related to each other by \emph{species-preserving isometries}, which are invertible transformations of $\R^{d}$ that preserve the shapes of particles:

\begin{definition}[species-preserving isometry]
A species-preserving isometry is a Euclidean transformation $\psi$ satisfying the following properties:
\begin{enumerate}
\item the restriction $\restr{\psi}{\Lambda_{\infty}}$ induces a graph automorphism of $\Lambda_{\infty}$ (in particular, it is an isometry with respect to the graph distance $\mathrm{d}_{\Lambda_{\infty}}$);
\item it leaves the supports of the particles invariant: for all $\lambda\in\Lambda_{\infty}$, $\psi(\omega_{\lambda})=\omega_{\psi(\lambda)}$.
\end{enumerate}
\end{definition}

Whereas ground states maximize the density, it is a priori possible for them not to maximize the density \emph{locally}.
This can happen in certain models in which clusters that have a large local density may form at the expense of lowering the local density elsewhere, so that these clusters do not form ground states \cite{Hales2010}.
We will assume that this is not the case.
To state this more precisely, we need to define the notion of a \emph{local density}, which is in turn based on discrete Voronoi cells \cite{Erwig2000,Mehlhorn1988}.

\begin{definition}[discrete Voronoi cell]\label{def:voronoi}
For $X\in\Omega(\Lambda_{\infty})$ and $x\in X$, the (discrete) Voronoi cell of $\sigma_{x}$ ($:=\omega_{x}\cap\Lambda_{\infty}$) with respect to $X$ is defined as the set of points that are closer (inclusively) to $\sigma_{x}$ than to any other particle:
\begin{equation}
V_{X}(\sigma_{x}):=\set{\lambda\in\Lambda_{\infty}\mid\mathrm{d}_{\Lambda_{\infty}}(\lambda,\sigma_{x})=\min_{y\in X}\mathrm{d}_{\Lambda_{\infty}}(\lambda,\sigma_{y})}.
\end{equation}
\end{definition}

Examples of Voronoi cells are provided in Figures \ref{fig:staircase_close_packings} and \ref{fig:disk_close_packings} below.

\begin{remark} \label{rem:no_choice_voronoi_cell}
Usually, the Voronoi cells are constructed to form a partition of the whole space, which necessitates a choice on the cell boundaries. 
In contrast, the definition we take in this paper allows distinct Voronoi cells to overlap, with the benefit of enabling us to implement Pirogov-Sinai theory in a natural (i.e., choice-free) way, as we will see later.
\end{remark}

We can now define the local density, which is the inverse of the size of the Voronoi cell surrounding a particle, but adjusted for the fact that the Voronoi cells can overlap. 
Formally:

\begin{definition}[local density]\label{def:local_density}
Given $X\in\Omega(\Lambda_{\infty})$, define the local density at $x\in X$ as 
\begin{equation} \label{eqn:local_density}
\rho_{X}(x):=\left(\sum_{\lambda\in V_{X}(\sigma_{x})}\frac{1}{\abs{\set{z\in X\mid\lambda\in V_{X}(\sigma_{z})}}}\right)^{-1}.
\end{equation}
Accordingly, the maximum local density of the model is
\begin{equation}
\rho_{\max}^{\loc}:=\sup_{\substack{X\in\Omega(\Lambda_{\infty})\\X\ni\vec{0}}}\rho_{X}(\vec{0}).
\end{equation}
\end{definition}

Examples of the computation of $\rho_{\mathrm{max}}^{\mathrm{loc}}$ is provided in Figures \ref{fig:staircase_close_packings} and \ref{fig:disk_close_packings} below.

\begin{remark}
Notice that $\rho_{\max}^{\loc}=\rho_{\max}$ in the case of a tiling model, in which all sites are covered in the ground states.
For general models, however, this is not always the case.
For instance, in the hard-sphere model, it is possible to have localized configurations in which the local density (defined as the inverse of the volume of the standard Voronoi cell in the continuum) exceeds the close-packing density \cite{Hales2010}.
In this paper, we will only consider models for which the equality $\rho_{\max}^{\loc}=\rho_{\max}$ holds, but without requiring the tiling property.
\end{remark}

Next, we introduce the notion of $\#$-correctness, which is inspired by the construction of the (graph) Voronoi dual \cite{Honiden2010} and will later be used to identify defects in a configuration.

\begin{definition}[neighbor and $\#$-correctness]\label{def:correct}
For a given configuration $X$, we define the \emph{neighbors} of a particle to be those in Voronoi cells adjacent to that of the particle: given $x\in X$,
\begin{equation}
\mathcal{N}_{X}(x):=\set{z\in X\mid\mathrm{d}_{\Lambda_{\infty}}(V_{X}(\sigma_{x}),V_{X}(\sigma_{z}))\leq 1}.
\end{equation}
(Note that this definition implies that $x\in\mathcal N_X(x)$.)
A particle $x\in X$ is then said to be $\#$-correct if its neighbors in $X$ are exactly its neighbors in $\mathcal{L}^{\#}$, that is, $x\in\mathcal{L}^{\#}$ and
\begin{equation}
\mathcal{N}_{X}(x)=\mathcal{N}_{\mathcal{L}^{\#}}(x).
\end{equation}
Finally, $x\in X$ is said to be incorrect if $x$ is not $\#$-correct for any ground state $\#\in\mathcal{G}$.
\end{definition}

In addition, we define a coarse-grained notion of $\#$-correctness, which will come in useful throughout the discussion.

\begin{definition}[$\mathcal{R}_2$-neighbor and $(\#,\mathcal{R}_2)$-correctness]
\label{def:R_correct}
Given $\mathcal R_2\in\mathbb N$ (we use the subscript ${}_{2}$ for reasons that will become apparent later), for $X\in\Omega(\Lambda_{\infty})$ and $x\in X$, we define the set of $\mathcal{R}_2$-neighbors of $x$ as
\begin{equation}
  \mathcal{N}^{(\mathcal{R}_2)}_{X}(x):=\set{z\in X\mid\mathrm{d}_{\Lambda_{\infty}}(V_{X}(\sigma_{x}),V_{X}(\sigma_{z}))\leq\mathcal{R}_2}
  .
\end{equation}
A particle $x\in X$ is
is said to be $(\#,\mathcal{R}_2)$-correct if
\begin{equation}
  \textrm{for all } y\in\mathcal N_{X}^{(\mathcal R)}(x),\quad y\mathrm{\ is\ }\#\mathrm{-correct}.
\end{equation}
In addition, $x\in X$ is said to be $\mathcal{R}_2$-incorrect if $x$ is not $(\#,\mathcal{R}_2)$-correct for any $\#\in\mathcal G$.
Finally, given a ground state $\#$, let $\mathcal{C}^{(\mathcal{R}_2)}_{\#}(X)$ denote the set of $(\#,\mathcal{R}_2)$-correct particles in $X$, and $\mathcal{I}^{(\mathcal{R}_2)}(X):=X\setminus\bigcup_{\#\in\mathcal{G}}\mathcal{C}^{(\mathcal{R}_2)}_{\#}(X)$ the set of $\mathcal{R}_2$-incorrect particles.
\end{definition}

We can now state the conditions we need to require of the models we consider here.

\begin{assumption}\label{assumption}
We require that the model satisfy the following properties:
\begin{enumerate}
  \item $\Lambda_\infty$ is a periodic graph embedded in $\mathbb R^d$ with finite maximal coordination number (the number of neighbors is bounded), and is such that there exists $\mathcal R_0\in\mathbb N$ such that the interior and exterior boundaries (see Definition \ref{def:boundaries}) of any simply connected set are $\mathcal R_0$-connected; see Definition \ref{def:rconnected}.\label{asm:lattice}

  \item There exist only finitely many ground states: $\mathcal{G}$ is finite and nonempty. \label{asm:finitely_many_ground_states}

  \item
  The ground states are related by species-preserving isometries: given $\#,\#'\in\mathcal{G}$, there exists a species-preserving isometry $\psi$ such that $\psi(\mathcal{L}^{\#})=\mathcal{L}^{\#'}$.\label{asm:isometry}

\item The maximum density is equal to the maximum local density: $\rho_{\mathrm{max}}^{\mathrm{loc}}=\rho_{\mathrm{max}}$. \label{asm:density_max_local}

\item Different ground states cannot merge seamlessly: for any $X\in\Omega(\Lambda_{\infty})$, if $x\in X$ is $\#$-correct and $x'\in\mathcal{N}_{X}(x)$ is $\#'$-correct, then $\#=\#'$. \label{asm:imperfect_transition}

\item A particle that is incorrect leads to a localized dip in the local density: there exist $\mathcal R_1,\mathcal S_1\in\mathbb N$ and $\epsilon>0$ such that, for all $X\in\Omega(\Lambda)$ and $x\in X$, if $x$ is $\mathcal R_1$-incorrect, then there exists $y\in X$ such that $d_{\Lambda_{\infty}}(x,y)\le\mathcal S_1$ and $\rho_{X}^{-1}(y)\ge\rho_{\mathrm{max}}^{-1}+\epsilon$.\label{asm:density_local_density}
\end{enumerate}
\end{assumption}

\begin{remark}
  In the simplest cases, Item \ref{asm:density_local_density} holds for $\mathcal R_1=\mathcal S_1=0$.
  These are cases in which the local density is maximal if and only if a particle is $(\#,0)$-correct, which includes all of the tiling examples discussed in \cite{Jauslin2018}, as well as the staircase models discussed in \S\ref{subsec:n_staircases}.
  For these models, it is relatively straightforward to verify Item \ref{asm:density_local_density}.
  However, there also may be situations in which the local density may only dip at a finite, but large distance $\mathcal S_1$, in which case proving the assumption may be more difficult.
  To deal with such cases, we prove the following lemma, which provides an equivalent assumption that may be easier to verify.
\end{remark}

\begin{lemma}\label{lem:equiv_assum}
  Assuming Items \ref{asm:density_max_local} and \ref{asm:imperfect_transition} of Assumption \ref{assumption} hold, Item \ref{asm:density_local_density} holds if and only if
  \begin{equation}
    \set{\mathcal L^\#\mid\#\in\mathcal G}=
    g_m
    :=\set{X\in\Omega(\Lambda_{\infty})\mid\rho_{X}(x)=\rho_{\max}\text{ for all } x\in X}
    \label{Ggm}
  \end{equation}
  (that is, the set of configurations that have a constant local density which is maximal is equal to the set of ground states).
\end{lemma}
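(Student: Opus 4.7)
The lynchpin is the identity
\begin{equation}
\sum_{x\in X\cap\Lambda_0}\rho_X(x)^{-1}=|\Lambda_0|,
\end{equation}
valid for periodic $X$ on a fundamental domain $\Lambda_0$ (and in a corresponding per-unit-volume sense more generally), which follows by swapping the order of summation in Definition~\ref{def:local_density} and using that every $\lambda\in\Lambda_\infty$ lies in at least one Voronoi cell. Coupled with the uniform bound $\rho_X(x)\le\rho_{\max}^{\loc}=\rho_{\max}$ from Item~\ref{asm:density_max_local}, this identity shows that, for periodic $X$, all local densities equal $\rho_{\max}$ if and only if the total density is $\rho_{\max}$.

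For the direction $(\Rightarrow)$, I would first show $\{\mathcal L^\#\mid\#\in\mathcal G\}\subseteq g_m$ without using Item~\ref{asm:density_local_density}: the ground states are periodic with density $\rho_{\max}$, so averaging the identity above over a fundamental domain yields $\langle\rho_{\mathcal L^\#}^{-1}\rangle=\rho_{\max}^{-1}$, and since every summand is at least $\rho_{\max}^{-1}$ by Item~\ref{asm:density_max_local}, all must equal $\rho_{\max}^{-1}$. For the reverse inclusion, take $X\in g_m$ and argue by contradiction. If some $x\in X$ is incorrect, then $x$ is also $\mathcal R_1$-incorrect, so Item~\ref{asm:density_local_density} supplies $y\in X$ with $\rho_X^{-1}(y)\ge\rho_{\max}^{-1}+\epsilon$, contradicting $X\in g_m$. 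Otherwise every $x\in X$ is $\#_x$-correct for some $\#_x\in\mathcal G$; Item~\ref{asm:imperfect_transition} propagated along Voronoi adjacency forces a single label $\#$, and the rigidity $\mathcal N_X(z)=\mathcal N_{\mathcal L^\#}(z)$ (combined with connectedness of $\mathcal L^\#$ under neighbor adjacency) forces $X=\mathcal L^\#$, a ground state, again a contradiction.

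For the direction $(\Leftarrow)$, I would argue by contradiction via a compactness limit. If Item~\ref{asm:density_local_density} fails, then choosing the parameters $(\mathcal R_1,\mathcal S_1,\epsilon)=(n,n,1/n)$ at each $n$ produces configurations $X_n$ with an $n$-incorrect particle $x_n$ satisfying $\rho_{X_n}^{-1}(y)<\rho_{\max}^{-1}+1/n$ for all $y\in X_n$ with $d_{\Lambda_\infty}(x_n,y)\le n$. Translate so that $x_n=\vec 0$ and extract a pointwise subsequential limit $X_\infty$ using compactness of $\{0,1\}^{\Lambda_\infty}$, refining by a diagonal argument so that $X_n$ agrees with $X_\infty$ on an arbitrarily large ball around $\vec 0$. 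The quantitative density bound forces the Voronoi cells in that ball to have bounded geometric extent, so $V_{X_n}(\sigma_y)\to V_{X_\infty}(\sigma_y)$ and $\rho_{X_n}(y)\to\rho_{X_\infty}(y)$ for each $y\in X_\infty$; letting $n\to\infty$ yields $\rho_{X_\infty}(y)=\rho_{\max}$ for every $y\in X_\infty$, i.e., $X_\infty\in g_m$. By the hypothesis, $X_\infty=\mathcal L^\#$ for some $\#\in\mathcal G$, in which every particle, and in particular $\vec 0$, is $\#$-correct. Since $\#$-correctness and $(\#,n)$-correctness are local predicates (each $n$-neighbor of $\vec 0$ lies within a bounded distance controlled by the density bound), agreement on a sufficiently large ball makes $\vec 0$ $(\#,n)$-correct in $X_n$ for large $n$, contradicting its $n$-incorrectness in $X_n$.

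The main obstacle is the compactness step in $(\Leftarrow)$: turning the pointwise density bound $\rho_{X_n}^{-1}(y)<\rho_{\max}^{-1}+1/n$ into uniform control on the geometric \emph{diameter} of Voronoi cells near $\vec 0$. This is essential both for the continuity $\rho_{X_n}(y)\to\rho_{X_\infty}(y)$ under pointwise convergence and for pinning down where the $n$-neighbors used in testing $(\#,n)$-correctness can sit. The underlying geometric fact is that, in a region of near-maximal average local density, no cell can be too elongated, since an elongated cell would mean a long stretch with no particle of $X_n$, forcing a neighboring cell to be oversized and the local density there strictly below $\rho_{\max}$.
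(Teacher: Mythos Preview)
Your forward direction $(\Rightarrow)$ matches the paper's proof essentially verbatim: $\mathcal L^\#\in g_m$ via the averaging identity (the paper packages this as Lemma~\ref{lemma:rholoc_cst}), and $g_m\subseteq\{\mathcal L^\#\}$ by first ruling out incorrect particles via Item~\ref{asm:density_local_density}, then propagating a single label via Item~\ref{asm:imperfect_transition}.

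In the reverse direction $(\Leftarrow)$ your overall compactness strategy is also the paper's, but there are two differences worth flagging.

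First, for passing the density bound to the limit, the paper sidesteps your ``main obstacle'' entirely by using Fatou's lemma rather than diameter control on Voronoi cells. Writing $\rho_{X_{n_k}}(x)^{-1}$ as a nonnegative series over shells $S_m=\{\lambda: d_{\Lambda_\infty}(\lambda,\vec 0)=m\}$, Fatou gives $\rho_{X_\infty}(x)^{-1}\le\liminf_k\rho_{X_{n_k}}(x)^{-1}\le\rho_{\max}^{-1}$, and each individual summand converges because whether $\lambda\in V_{X}(\sigma_x)$ and the multiplicity $|\{z:\lambda\in V_X(\sigma_z)\}|$ are determined by $X$ restricted to a \emph{fixed} finite window (depending on $\lambda$ and $x$, not on $k$). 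This is both shorter and avoids having to argue that near-maximal density forces bounded cell diameter.

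Second, and more seriously, your choice $(\mathcal R_1,\mathcal S_1,\epsilon)=(n,n,1/n)$ introduces a genuine gap at the final contradiction. You need to show that $\vec 0$ is $(\#,n_k)$-correct in $X_{n_k}$, but $(\#,n_k)$-correctness requires checking $\#$-correctness of every $n_k$-neighbor of $\vec 0$, and even with cell diameters bounded by $C$ these neighbors can sit anywhere in a ball of radius $\sim n_k+2C$. Pointwise convergence $X_{n_k}\to X_\infty$ only gives agreement on balls of radius $R_k\to\infty$ indexed by the \emph{subsequence position} $k$; you have no control of $R_k$ against the \emph{original index} $n_k$, which may be vastly larger. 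So the locality argument does not close. The paper avoids this by holding $\mathcal R_1$ \emph{fixed} (it takes $\mathcal R_1=\mathcal R_0$, but any fixed value works) while sending only $\mathcal S_1\to\infty$, $\epsilon\to 0$. Then the contradiction at the end involves a single fixed-radius predicate, namely $(\#,\mathcal R_0)$-correctness of $\vec 0$, which is genuinely local and therefore transfers from $X_\infty=\mathcal L^\#$ back to $X_{n_k}$ for large $k$. The fix to your argument is exactly this: do not let $\mathcal R_1$ grow.
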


This lemma is proved in \S\ref{sec:equiv_assumption}.
It allows one to use our main result in situations where it is easier to prove that $\set{\mathcal L^\#\mid\#\in\mathcal G}=g_m$ than to prove Item \ref{asm:density_local_density} of Assumption \ref{assumption}.

We briefly describe how Assumption \ref{assumption} will enter our analysis.
\begin{itemize}
  \item
  Item \ref{asm:finitely_many_ground_states} allows us to control the number of ways a ground state can be perturbed, thus controlling the entropy of defects: if the number of ground states were infinite, there is the risk that there are too many ways to create defects, which could lead to defect-formation being likely; see Lemma \ref{lem:clusterexpansion}.

  \item
  Item \ref{asm:isometry} allows us to control the ratio of partition functions of defects of different ground states, which is a standard step in Pirogov-Sinai theory; see Proposition \ref{prop:central_estimates}, specifically \eqref{bulkterm}.

  \item
  Item \ref{asm:density_max_local} excludes situations in which the local density can be smaller than $\rho_{\mathrm{max}}$, which would break the proof of Proposition \ref{prop:central_estimates}; see Lemma \ref{lemma:rholoc_cst}.

  \item
  Items \ref{asm:lattice} and \ref{asm:imperfect_transition} allow us to map the particle model to a \emph{contour} model in a one-to-one way: using this assumption, we can fully specify a defect-free region by only looking at the particles neighboring this region; see Lemma \ref{lemma:unique_Rcorrect} and Proposition \ref{prop:GFc}.

  \item
  Item \ref{asm:density_local_density} ensures that the presence of a defect causes the local density to dip below its maximum value and that this dip occurs {\it close} to the defect, from which we can conclude the validity of the {\it Peierls condition}; see Lemma \ref{lem:Peierls}.
\end{itemize}

\begin{remark}[necessity of Assumption \ref{assumption}]
Assumption \ref{assumption} excludes models that exhibit the sliding phenomenon.
For example, consider the model of $2\times 2$ squares on $\Z^{2}$, in which one can shift entire columns of particles without disturbing a ground state \cite{Hadas2022}.
Such moves generate an infinite number of ground states, which violates Item \ref{asm:finitely_many_ground_states}.
Due to the abundance of ground states that differ from each other only by a number of columns, Item \ref{asm:imperfect_transition} is also violated.

Assumption \ref{assumption}, however, requires more of the model than simply that it does not permit sliding.
In principle, this suggests that the assumption can be relaxed.
For instance, the requirement in Item \ref{asm:lattice} that the graph $\Lambda_{\infty}$ is periodic is presumably not necessary, though this would require some changes in the argument, and it is not clear that the rest of the assumptions could be satisfied for aperiodic graphs.
More interestingly, there exist non-sliding models that violate Item \ref{asm:density_max_local}, but for which one nevertheless expects crystallization to occur at high densities.
In particular, the hard-sphere model in $\R^{3}$ is known to allow for local configurations to exceed the close-packing density but at the expense of the overall density \cite{Hales2010} and so $\rho_{\mathrm{max}}<\rho_{\mathrm{max}}^{\mathrm{loc}}$.
Thus, it may also be possible to weaken Item \ref{asm:density_local_density}, although our treatment of the Peierls condition in Lemma \ref{lem:Peierls} would need to be adapted accordingly.
\end{remark}

\begin{remark}[comparison of Assumption \ref{assumption} with \cite{Jauslin2018}]
  The condition in Assumption \ref{assumption}, under which we prove crystallization, is more general than the condition in \cite{Jauslin2018}.
  Obviously, we do not require the particle to tile $\Lambda_\infty$, but the generalization goes a little further.
  For one thing, we do not require that the ground states be periodic, as this can be proved from Items \ref{asm:lattice} and \ref{asm:finitely_many_ground_states}; see Lemma \ref{lemma:periodic_ground_state} below.
  In addition, the analog of Item \ref{asm:density_local_density} in \cite{Jauslin2018} requires the drop in the density to occur {\it right next} to the incorrect particle.
  Here, we are more general and allow for the drop in density to occur farther away, which allows us to treat models such as the hard-disk model in Section \ref{sec:examples}.
\end{remark}

\subsection{Main result}

Our main result is that, under Assumption \ref{assumption}, there is more than one extremal Gibbs state for large enough fugacities.
In each of these states, the translation invariance is broken, which shows that these high-fugacity states exhibit crystalline order.
This proves the existence of a phase transition: indeed, a Mayer expansion easily shows that the Gibbs state is unique at low fugacity.
Note, however, that our result does not give a value for the critical fugacity, or even on the number of phase transitions.
We will also derive, as a byproduct of our analysis, a convergent \emph{high-fugacity expansion} for the pressure:
\begin{equation}
p(z):=\lim_{\Lambda\Uparrow\Lambda_{\infty}}p_{z}(\Lambda)=\rho_{\max}\log z+f(z^{-1}),
\end{equation}
where $\rho_{\max}:=\limsup_{\Lambda\Uparrow\Lambda_{\infty}}\rho_{\max}(\Lambda)$, and $f$ is analytic in $z^{-1}$ for all sufficiently large $\abs{z}$.

To show that there are multiple extremal Gibbs states, we will consider a family of boundary conditions, each corresponding to a different instance of symmetry breaking. 
The precise definition of the boundary condition is of little importance for the techniques used here.
For the sake of convenience, we choose a boundary condition that is well-adapted to the Pirogov-Sinai construction detailed below, though a similar argument would allow us to treat more general situations.
To introduce this boundary condition, we must first define a coarse-graining, parametrized by a radius $\mathcal R_2\in\N$.
We will choose $\mathcal R_2$ to be large enough so as to satisfy \eqref{ineq_R2_1}, \eqref{ineq_R2_2}, and \eqref{ineq_R2_3} below.
The utility of the coarse-graining parameter $\mathcal R_2$ is technical and will become clear later.

\begin{definition}[boundary condition] \label{def:boundary_condition_correctness}
Given a ground state $\#$ and a finite region $\Lambda\Subset\Lambda_{\infty}$, we define the $\#$ boundary condition as follows: all the particles outside $\Lambda$ are in $\mathcal{L}^{\#}$, and all the particles along the boundary of $\Lambda$ are $(\#,\mathcal R_2)$-correct.
Formally, we define the set of configurations in $\Lambda$ subject to the boundary condition $\#$ as:
\begin{equation}
\begin{multlined}
\Omega^{\#}(\Lambda):=\left\{X\in\Omega(\Lambda_{\infty})\mid X\setminus\Lambda=\mathcal{L}^{\#}\setminus\Lambda,\right.
\\
\left.\text{for all } x\in X\text{ such\ that }\mathrm{d}_{\Lambda_{\infty}}(V_{X}(\sigma_{x}),\Lambda^{c})\le 1,\ x\ \mathrm{is\ }(\#,\mathcal R_2)\mathrm{-correct}\vphantom{\mathcal{L}^{\#}}\right\}.
\end{multlined}
\end{equation}
The (grand canonical) partition function on $\Lambda$ at fugacity $z$ with boundary condition $\#$ is
\begin{equation}\label{eqn:partitionfunctiondefinition}
\Xi^{\#}_{z}(\Lambda):=\sum_{X\in\Omega^{\#}(\Lambda)}z^{\abs{X\cap\Lambda}},
\end{equation}
and the probability of a configuration $X\in\Omega^{\#}(\Lambda)$ is defined as
\begin{equation}
  \frac{z^{\abs{X\cap\Lambda}}}{\Xi_{z}^{\#}(\Lambda)}
  .
\end{equation}
We denote the corresponding expectation by $\left<\cdot\right>_{z,\Lambda}^{\#}$.
\end{definition}

Our main result can be formally stated as follows.

\begin{theorem}[crystallization] \label{thm:crystallization}
  Under Assumption \ref{assumption}, there exists a constant $z_{0}>0$, independent of $\Lambda$, such that, if $z>z_{0}$, then
  \begin{equation}
    \lim_{\Lambda\Uparrow\Lambda_{\infty}}\left<\indicator{x}\right>_{z,\Lambda}^{\#}
    =\begin{cases}
    \rho_{\max}+O(z^{-1})&\mathrm{if\ }x\in \mathcal{L}^{\#}\\
      O(z^{-1})&\mathrm{otherwise}
    \end{cases},
  \end{equation}
where $\indicator{x}$ denotes the characteristic function that $x\in X$.
\end{theorem}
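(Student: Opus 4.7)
The plan is to carry out a Pirogov--Sinai analysis at high fugacity following the outline sketched in the introduction (Sections \ref{sec:GFc}--\ref{sec:Pirogov_Sinai}), exploiting the strong symmetry between ground states (Item \ref{asm:isometry}) to bypass the usual truncation procedure and reduce the problem to an abstract polymer system of small activity. The starting point is to rewrite $\Xi_z^\#(\Lambda)$ as a sum over Gaunt--Fisher configurations. Given $X\in\Omega^\#(\Lambda)$, the set $\mathcal I^{(\mathcal R_2)}(X)$ of $\mathcal R_2$-incorrect particles breaks up into $\mathcal R_2$-connected components, each of which, together with its neighboring $(\#,\mathcal R_2)$-correct particles, carries a label $\#\in\mathcal G$ on each connected piece of its complement (uniquely determined by Items \ref{asm:lattice} and \ref{asm:imperfect_transition}, as promised in Lemma \ref{lemma:unique_Rcorrect} and Proposition \ref{prop:GFc}). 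This gives a one-to-one correspondence between configurations and compatible families of labelled contours $\{\gamma_i\}$ over the reference ground state $\mathcal L^\#$, so $\Xi_z^\#(\Lambda)=z^{|\mathcal L^\#\cap\Lambda|}\sum_{\{\gamma_i\}}\prod_i\zeta(\gamma_i)$ for suitable contour activities $\zeta(\gamma)$.

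Next, I would establish the Peierls condition (Lemma \ref{lem:Peierls}): $\zeta(\gamma)\le z^{-\delta|\gamma|}$ for some $\delta>0$ and all sufficiently large $|z|$. The mechanism is the one anticipated in Item \ref{asm:density_local_density}: each incorrect particle of $\gamma$ forces, within distance $\mathcal S_1$, a particle whose inverse local density exceeds $\rho_{\max}^{-1}+\epsilon$. Summing these local deficits over the support of $\gamma$ and combining with Items \ref{asm:density_max_local} and \ref{asm:imperfect_transition} (to certify that the ground state is genuinely the pointwise maximizer of the local density, via Lemma \ref{lem:equiv_assum}), one obtains that $\gamma$ contains at least $c|\gamma|$ fewer particles than the corresponding ground-state patch, which translates into the factor $z^{-\delta|\gamma|}$. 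Crucially, Item \ref{asm:isometry} guarantees that the ``bulk'' weights of distinct labels cancel exactly in the ratio defining $\zeta(\gamma)$ (Proposition \ref{prop:central_estimates}, equation \eqref{bulkterm}), so all ground states are stable simultaneously and no Zahradn\'\i k-style iterative truncation of metastable weights is required. With the Peierls bound in hand, the finite coordination number of $\Lambda_\infty$ (Item \ref{asm:lattice}) provides a standard combinatorial bound on the number of contours of given size rooted at a point, so the Koteck\'y--Preiss criterion is met for $|z|\ge z_0$, and the cluster expansion for $\log\Xi_z^\#(\Lambda)$ converges uniformly in $\Lambda$ and is analytic in $z^{-1}$.

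Finally, to evaluate $\langle\indicator{x}\rangle_{z,\Lambda}^\#$, I would write it as a ratio $\Xi_z^{\#,(x)}(\Lambda)/\Xi_z^\#(\Lambda)$ where the numerator restricts the sum to configurations containing $x$, equivalently as a logarithmic derivative of $\Xi_z^\#(\Lambda)$ with respect to an activity attached to the site $x$. In the cluster expansion, the contribution of the empty contour family gives $\indicator{x\in\mathcal L^\#}$, while every contour that toggles the occupation of $x$ must contain $x$ in its support and hence costs at least a factor $z^{-\delta}$, producing the announced $O(z^{-1})$ correction; taking $\Lambda\Uparrow\Lambda_\infty$ is then immediate by uniform convergence. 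The two different values of the limit as $\#$ varies witness the symmetry breaking and hence the existence of multiple extremal Gibbs states. The main obstacle I anticipate is the Peierls step: because Voronoi cells may overlap (Remark \ref{rem:no_choice_voronoi_cell}) and the density dip promised by Item \ref{asm:density_local_density} may occur as far as $\mathcal S_1$ from the offending particle, attributing one unit of Peierls cost to each site of $\gamma$ without double counting requires a careful geometric argument based on the $\mathcal R_0$-connectedness of boundaries and an appropriately chosen coarse-graining radius $\mathcal R_2$.
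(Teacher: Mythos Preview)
Your outline matches the paper's approach closely: map to GFc's, establish the Peierls bound via the local-density dips of Item~\ref{asm:density_local_density}, run a cluster expansion, and compute $\langle\indicator{x}\rangle$ by differentiating $\log\Xi^\#_{\mathbf z}(\Lambda)$ with respect to a site-dependent fugacity $\mathbf z(x)$, splitting off $\indicator{x\in\mathcal L^\#}$ from $\log\mathbf z^\#(\Lambda)$ and bounding the remainder term-by-term in the cluster expansion (this is precisely what the paper does in \eqref{eqn:firstderivativestep1}--\eqref{eqn:derivativestep2.2}).

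The one point where you diverge is the claim that Item~\ref{asm:isometry} lets you bypass Zahradn\'\i k truncation entirely. The paper does use cutoff weights \eqref{eqn:cutoff_weight} in the inductive proof of Proposition~\ref{prop:central_estimates}. You are right that symmetry makes all ground states simultaneously stable, so the conceptual role of truncation (singling out a dominant phase) is absent here. But the contour weight \eqref{eqn:weightdefinition} still contains the ratio $\Xi^{\#'}_{\mathbf z}(\interior_{\#'}\gamma)/\Xi^{\#}_{\mathbf z}(\interior_{\#'}\gamma)$ on an arbitrary finite region that is not fixed by the isometry relating $\#$ to $\#'$; the exact cancellation you invoke applies only to the bulk term \eqref{bulkterm}, and a boundary contribution of order $|\partial^{\mathrm{in}}\interior_{\#'}\gamma|$ remains (Lemma~\ref{lem:intersection_estimate} and estimate \eqref{eqn:ratio_estimate}). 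Controlling this still requires a bootstrap (bounding the weights needs the ratio estimate, which needs the cluster expansion, which needs bounded weights). The paper breaks the circularity with cutoff weights plus induction on volume; a direct volume induction without cutoffs can be made to work in this symmetric setting, but you have not spelled out how that closes, and that is where the actual technical work lies.
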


\begin{remark}
We have computed a value for $z_0$, which is deferred to the appendix: see \eqref{estimate_z}.
This bound is quite far from optimal, but it is instructive to see how the parameters appearing in Assumption \ref{assumption} affect the radius of convergence.
\end{remark}

As a consequence of Theorem \ref{thm:crystallization}, there are at least as many extremal Gibbs distributions as there are close-packing configurations ($\abs{\mathcal{G}}$), in all of which the translational symmetry of $\Lambda_{\infty}$ is broken.
Also, as we have noted, an intermediate result in the proof of Theorem \ref{thm:crystallization} is the construction of an expansion of $p(z)-\rho_{\max}\log z$ in powers of $z^{-1}$, which is shown to be absolutely convergent when $\abs{z}$ is sufficiently large.
We summarize the latter as a standalone theorem.

\begin{theorem}[analyticity] \label{thm:analyticity}
Under Assumption \ref{assumption}, $p(z)-\rho_{\max}\log z$ is analytic in $z^{-1}$ on $\set{z\in\C\mid\abs{z}>z_{0}}$.
\end{theorem}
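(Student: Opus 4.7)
The plan is to obtain Theorem \ref{thm:analyticity} as a direct byproduct of the Pirogov-Sinai analysis carried out in Section \ref{sec:Pirogov_Sinai}. The starting point is the Gaunt-Fisher contour representation from Section \ref{sec:GFc}, which rewrites the partition function with boundary condition $\#$ as a ground-state contribution times a sum over compatible families of contours,
\begin{equation}
  \Xi_{z}^{\#}(\Lambda)=z^{\rho_{\max}|\Lambda|}\sum_{\{\gamma_i\}}\prod_i w_{z}^{\#}(\gamma_i),
\end{equation}
where each contour weight $w_z^{\#}(\gamma)$ encodes the entropic and energetic cost of a defect relative to the ground state $\mathcal{L}^\#$. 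The Peierls condition (Lemma \ref{lem:Peierls}) provides the crucial bound $|w_z^{\#}(\gamma)|\le z^{-\alpha|\gamma|}$ for some $\alpha>0$ depending only on the model parameters.

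First, I would follow Zahradník's route from Section \ref{sec:Pirogov_Sinai} and replace the bare weights $w_z^{\#}$ by truncated weights $\tilde w_z^{\#}$ that are well-defined on all of $\Lambda_\infty$ and for which the resulting polymer system satisfies a Kotecký-Preiss criterion as soon as $|z|>z_0$. The Peierls bound, combined with the finite maximal coordination number of $\Lambda_\infty$ (Item \ref{asm:lattice}) and the finiteness of $\mathcal G$ (Item \ref{asm:finitely_many_ground_states}), controls the number of contours of any given size and makes the cluster expansion of the truncated partition function absolutely convergent:
\begin{equation}
  \log\tilde\Xi_z^{\#}(\Lambda)=\rho_{\max}|\Lambda|\log z+\sum_{C}\Phi^T(C)\prod_{\gamma\in C}\tilde w_z^{\#}(\gamma),
\end{equation}
with each cluster $C$ contributing a factor decaying at least as $z^{-\alpha|C|}$.

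Dividing by $|\Lambda|$ and passing to $\Lambda\Uparrow\Lambda_\infty$, the sum over clusters splits into a bulk part, whose terms are invariant under a sublattice translation and yield a convergent power series in $z^{-1}$, and a boundary part of size $O(|\partial\Lambda|/|\Lambda|)=o(1)$. This gives
\begin{equation}
  \lim_{\Lambda\Uparrow\Lambda_{\infty}}\frac{1}{|\Lambda|}\log\tilde\Xi_z^{\#}(\Lambda)=\rho_{\max}\log z+f_{\#}(z^{-1}),
\end{equation}
with $f_\#$ analytic in $z^{-1}$ on $\{|z|>z_0\}$. The symmetry among ground states furnished by Item \ref{asm:isometry}, reflected in the identity \eqref{bulkterm}, implies that all $\#$ are equally stable in Zahradník's sense, so $f_\#=:f$ is independent of $\#$, and the standard Pirogov-Sinai identification of the true bulk free energy with any stable truncated one shows this also equals $\lim_{\Lambda}|\Lambda|^{-1}\log\Xi_z(\Lambda)-\rho_{\max}\log z$.

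The main obstacle will be verifying that all ground states are stable \emph{once the weights are truncated}, so that the cluster expansion of $\log\tilde\Xi_z^{\#}$ actually computes $p(z)-\rho_{\max}\log z$ rather than a spurious bulk free energy attached to one ground state only. This is the usual pinch point of Pirogov-Sinai theory and is handled in the symmetric case by combining Item \ref{asm:isometry}, which transports contour weight estimates from one ground state to another, with Item \ref{asm:density_local_density}, which guarantees that the Peierls exponent $\alpha$ can be chosen uniform in $\#$. Once stability is established, the absolute convergence of the cluster expansion delivers the analyticity of $f$ in $z^{-1}$ immediately.
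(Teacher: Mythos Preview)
Your proposal follows essentially the same route as the paper: GFc representation, Zahradn\'ik-style truncated weights, cluster expansion via a Kotecký–Preiss-type criterion, and identification of all ground states as stable via the species-preserving isometry (Item \ref{asm:isometry}). Two small corrections are worth flagging. First, the prefactor in the GFc representation is $z^{|\mathcal L^{\#}\cap\Lambda|}$, not $z^{\rho_{\max}|\Lambda|}$; these differ by a boundary term (Lemma \ref{lem:intersection_estimate}), which is harmless in the limit but matters for the finite-volume bookkeeping. Second, the absolute convergence of the cluster series on $\{|z|>z_0\}$ gives analyticity of the sum only once you know each summand $w_z^{\#}(\gamma)$ is itself analytic there; the paper makes this explicit by noting that $w_z^{\#}(\gamma)$ is a rational function of $z$ (ratio of polynomials, by \eqref{eqn:partitionfunctiondefinition} and \eqref{eqn:weightdefinition}) bounded on $|z|\ge z_0$ by \eqref{eqn:weightestimate}, hence analytic in $z^{-1}$. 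Your phrase ``convergent power series in $z^{-1}$'' skips this step.
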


Let us give a brief outline of the proof of Theorem \ref{thm:analyticity}, from which Theorem \ref{thm:crystallization} is proved.
The first step is to map the model to a contour model.
Here, a \emph{contour} will be called a \emph{Gaunt-Fisher configuration}, in honor of \cite{Gaunt1965}, and abbreviated GFc.
The GFc's are constructed from the incorrect particles, and are chosen to be thick enough so that, in the effective GFc model, pairs of GFc's only interact via a hard-core repulsion.
In addition, GFc's will retain information on the particles inside them, and on the index of the close-packing outside the GFc.
This will allow the mapping between configurations and GFc's to be one-to-one.
This is carried out in Section \ref{sec:GFc}.

Next, we prove that the weight of a GFc in the effective GFc model is {\it exponentially small} in its size.
This is called the {\it Peierls condition}.
To prove it, we use Item \ref{asm:density_local_density} to show that a certain proportion of sites in the support of the GFc has a local density that is $<\rho_{\mathrm{max}}$.
This is done in Section \ref{sec:peierls}; see Lemma \ref{lem:Peierls} in particular.

We then unravel the Pirogov-Sinai machinery.
The idea is to prove that the GFc model is dilute enough that we can compute its observables using a convergent cluster expansion.
We do so using the Peierls condition.
There is a difficulty we have to contend with: GFc's actually interact with each other through the fact that the close-packing outside two neighboring GFc's has to be the same.
To eliminate this long-range interaction, we use the Minlos-Sinai trick, which consists in \emph{flipping} contours in such a way that they all have the same external close-packing.
Doing so comes at a cost, which we can estimate using the cluster expansion inductively.
This is carried out in Section \ref{sec:Pirogov_Sinai}, and leads to the proof of Theorem \ref{thm:analyticity}.

To prove Theorem \ref{thm:crystallization}, we allow the fugacity to vary infinitesimally and locally, and compute derivatives with respect to local fluctuations of the fugacity.
Using the convergent expansion proved in Theorem \ref{thm:analyticity}, this allows us to compute the required observables to the required order in $z^{-1}$.

\section{Gaunt-Fisher configurations}\label{sec:GFc}

The main step toward deriving the analyticity in Theorem \ref{thm:analyticity} is to map the particle model to a contour model.
Intuitively, the contours associated to a particle configuration form a localized, complete record of defects, that is, deviations of the configuration from the ground states.
Following \cite{Gaunt1965,Jauslin2018}, we will refer to contours as \emph{Gaunt-Fisher configurations} (abbreviated as \emph{GFc's}).

\subsection{Some useful lemmas}

We will consider $\mathcal R_2$-incorrect particles as responsible for the formation of defects and accordingly construct GFc's from particle configurations using the $\mathcal R_2$-incorrect particles.
Before delving into the study of defects, however, let us first investigate the properties of particles living in defect-free configurations, that is, the ground states.

We first prove that ground states must be periodic, which will be useful in proving Lemmas \ref{lemma:rholoc_cst} and \ref{lem:finiteeffectiveparticle} below.

\begin{lemma}\label{lemma:periodic_ground_state}
  The ground states are periodic: for any $\#\in\mathcal G$, there exist linearly independent vectors $k_1,\cdots,k_d\in\mathbb R^d$ such that, for all $i=1,\cdots,d$,
  \begin{equation}
    \mathcal L^\#=\mathcal L^\#+k_i
    .
  \end{equation}
\end{lemma}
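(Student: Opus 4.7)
The plan is to lift the periodicity of the underlying graph $\Lambda_\infty$ to periodicity of each individual ground state, by combining translation invariance of the model with finiteness of $\mathcal G$.

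First, I would use Item \ref{asm:lattice} of Assumption \ref{assumption}: since $\Lambda_\infty$ is a periodic graph embedded in $\mathbb R^d$, the subgroup of $\mathbb R^d$-translations that leave $\Lambda_\infty$ invariant contains $d$ linearly independent vectors $v_1,\ldots,v_d$. For each such $v_i$, the map $X\mapsto X+v_i$ sends $\Omega(\Lambda_\infty)$ into itself (because $\Lambda_\infty+v_i=\Lambda_\infty$ and the hard-core constraint $\omega_x\cap\omega_{x'}=\emptyset$ depends only on relative positions), it preserves $|X|/|\Lambda|$ along any van Hove sequence, and van Hove sequences remain van Hove under translation by $v_i$. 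Thus translation by $v_i$ carries limiting configurations of finite-volume maximizers to limiting configurations of finite-volume maximizers, i.e.\ it restricts to a bijection of the set $\{\mathcal L^\#\mid\#\in\mathcal G\}$ onto itself.

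Next, I would exploit Item \ref{asm:finitely_many_ground_states}: this bijection descends to a permutation $\tau_i\colon\mathcal G\to\mathcal G$ characterized by $\mathcal L^\#+v_i=\mathcal L^{\tau_i(\#)}$. Because $\mathcal G$ is finite, $\tau_i$ has some finite order $n_i\ge 1$, and therefore
\begin{equation*}
\mathcal L^\#+n_i v_i=\mathcal L^\#\qquad\text{for every }\#\in\mathcal G.
\end{equation*}
Setting $k_i:=n_i v_i$ gives $d$ vectors that fix every ground state, and they are linearly independent since they are nonzero scalar multiples of the linearly independent $v_i$.

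I do not anticipate a real obstacle; the only point requiring care is verifying that translation by $v_i$ preserves the property of being an infinite-volume limit of finite-volume density maximizers, which comes down to the routine observations that $\rho_{\max}(\Lambda)=\rho_{\max}(\Lambda+v_i)$ and that a van Hove sequence translated by a fixed vector remains van Hove. Note that a common period for all ground states could be obtained by replacing $n_i$ with $|\mathcal G|!$, but the statement only asks for periods $k_i$ that may depend on $\#$, so the above suffices.
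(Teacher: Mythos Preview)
Your argument is correct and follows essentially the same approach as the paper: both use that lattice translations send ground states to ground states and then invoke the finiteness of $\mathcal G$ (Item~\ref{asm:finitely_many_ground_states}) to extract a period. Your version is a direct construction via the finite order of the induced permutation on $\mathcal G$, whereas the paper phrases it as a contradiction (infinitely many distinct translates would violate $|\mathcal G|<\infty$), but the content is the same.
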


\begin{proof}
  We prove this by contradiction: suppose that for every linearly independent family $k_1,\cdots,k_d$, $\mathcal L^\#$ is not invariant under $k_i$ translations.
  Choose an infinite family of $k_i$ that are translations of $\lambda_\infty$.
  In this case, there would be an infinite number of ground states, obtained from $\mathcal L^\#$ by translating by $k_i$, which contradicts Item \ref{asm:finitely_many_ground_states} of Assumption \ref{assumption}.
\end{proof}

We will refer to the Voronoi cell of a particle in a ground state as a \emph{reference} Voronoi cell.
To each point in a reference Voronoi cell (of a ground state $\#$), we will also assign a fractional weight that is the reciprocal of the number of particles in $\#$ to which it is equidistant (recall that the Voronoi cells overlap; see Definition \ref{def:voronoi}).
Later on, the weights will enable us to quantify the drop in density due to the presence of defects. 

\begin{definition}[reference Voronoi cell]
\label{def:effective_particle}
Given a ground state $\#$ and $x\in \mathcal{L}^{\#}$, we denote the reference Voronoi cell of a particle $x$ in the ground state $\#$ by 
\begin{equation}
\sigma^{\#}_{x}:= V_{\mathcal{L}^{\#}}(\sigma_{x}).
\label{sigmasharp}
\end{equation} 
Moreover, we define a weight function $v^{\#}:\sigma^{\#}_{x}\rightarrow\Q$ on the reference Voronoi cell, 
\begin{equation}
v^{\#}(\lambda):=\frac{1}{\abs{\set{z\in \mathcal{L}^{\#}\mid\lambda\in\sigma^{\#}_{z}}}}.
\label{vsharp}
\end{equation}
Note that, by Definition \ref{def:local_density},
\begin{equation}
  \sum_{\lambda\in\sigma^\#_x}v^\#(\lambda)
  =\rho_{\mathcal L^\#}(x)^{-1}
  .
  \label{sumv_rholoc}
\end{equation}
\end{definition}

We note a few basic properties of the reference Voronoi cells.

\begin{lemma}\label{lem:finiteeffectiveparticle}
The support of a reference Voronoi cell in any ground state $\#$ is bounded by a radius $r_{\eff}$ that is independent of $\#$.
\end{lemma}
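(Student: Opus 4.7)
The plan is to combine the periodicity of ground states established in Lemma \ref{lemma:periodic_ground_state} with the boundedness of the particle shape $\omega$ to obtain a uniform bound on how far a reference Voronoi cell can extend from its base particle. The entire argument amounts to unwinding Definition \ref{def:effective_particle} in light of periodicity.

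First, I would fix a ground state $\#\in\mathcal G$. By Lemma \ref{lemma:periodic_ground_state}, $\mathcal L^\#$ is invariant under translations by $d$ linearly independent vectors, so a compact fundamental domain contains only finitely many (and, because $\rho_{\max}>0$, at least one) particle centers of $\mathcal L^\#$. Combined with the boundedness of $\omega$ and the finite coordination number of $\Lambda_\infty$ (Item \ref{asm:lattice}), this produces a finite constant $D^\#$ such that every $\lambda\in\Lambda_\infty$ satisfies $\min_{y\in\mathcal L^\#}d_{\Lambda_\infty}(\lambda,\sigma_y)\le D^\#$.

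Next, given any $x\in\mathcal L^\#$ and any $\lambda\in\sigma^\#_x=V_{\mathcal L^\#}(\sigma_x)$, the defining property of the Voronoi cell yields
\[
  d_{\Lambda_\infty}(\lambda,\sigma_x)=\min_{y\in\mathcal L^\#}d_{\Lambda_\infty}(\lambda,\sigma_y)\le D^\#.
\]
Since $\omega$ is a bounded subset of $\mathbb R^d$ and $\Lambda_\infty$ is periodic with finite coordination, there is a graph-distance radius $R_\omega<\infty$, independent of $x$, such that $\sigma_x$ lies within graph distance $R_\omega$ of $x$. Hence $\sigma^\#_x$ lies within graph distance $r^\#:=D^\#+R_\omega$ of $x$, giving a finite radius that depends only on the ground state $\#$.

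Finally, to remove the dependence on $\#$, I would appeal to Item \ref{asm:finitely_many_ground_states} of Assumption \ref{assumption} and set $r_{\eff}:=\max_{\#\in\mathcal G}r^\#$; alternatively, Item \ref{asm:isometry} guarantees that species-preserving isometries between ground states preserve both graph distances and particle supports, so one can argue directly that $r^\#$ does not depend on $\#$. I do not foresee any genuine obstacle: the lemma is an almost immediate consequence of periodicity of ground states, the hard-core constraint, boundedness of $\omega$, and the finiteness of $\mathcal G$.
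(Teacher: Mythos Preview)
Your proof is correct, but it takes a different route from the paper's. The paper argues by contradiction directly from the density-maximizing property of a ground state: if some reference Voronoi cell $\sigma_x^\#$ were unbounded, one could find a lattice point $z$ arbitrarily far from every particle, so a ball of large radius around $z$ would be empty of particles; since $\omega$ is bounded, one could then insert an extra particle there, contradicting maximality of $\mathcal L^\#$. Periodicity (Lemma~\ref{lemma:periodic_ground_state}) is invoked only at the very end, together with finiteness of $\mathcal G$, to pass from a bound depending on $x$ and $\#$ to a uniform $r_{\mathrm{eff}}$.

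You instead front-load Lemma~\ref{lemma:periodic_ground_state}: periodicity of $\mathcal L^\#$ with positive density immediately gives a uniform bound $D^\#$ on the graph distance from any lattice point to the nearest particle support, and the Voronoi definition then caps the radius of $\sigma_x^\#$. This is arguably cleaner once periodicity is in hand, and it avoids the slightly delicate step in the paper's argument of translating ``$d_{\Lambda_\infty}(x,z)>r$'' into ``the ball of radius $r$ around $z$ contains no $\sigma_y$.'' On the other hand, the paper's argument has the conceptual advantage of using only the \emph{defining} property of ground states (density maximization) for the core step, making clearer why unbounded Voronoi cells are impossible even before one knows the ground states are periodic.
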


\begin{proof}
By contradiction, suppose that the reference Voronoi cell of $x\in\mathcal{L}^{\#}$ has infinite size.
Then, for all $r\in\N$, there exists $z\in\sigma_{x}^{\#}$ such that $\mathrm{d}_{\Lambda_{\infty}}(x,z)>r$.
By Definition \ref{def:voronoi}, the ball of radius $r$ around $z$ must be devoid of particles: for all $y\in\mathcal{L}^{\#}$, $\mathrm{d}_{\Lambda_{\infty}}(z,\sigma_{y})>r$.
Since the support $\omega$ of a particle is bounded, one can always choose $r$ large enough that a particle fits well inside the ball of radius $r$ centered at $z$, which contradicts the fact that $\mathcal{L}^{\#}$ is a ground state.
Thus, the support of the reference Voronoi cells in the ground state $\#$ is bounded.
Finally, we can make the bound $r_{\eff}$ independent of $\#$ by taking the largest value among the close-packings (using Item \ref{asm:finitely_many_ground_states} of Assumption \ref{assumption} and Lemma \ref{lemma:periodic_ground_state}).
\end{proof}

This lemma has several useful consequences.
The first is that the local density of every ground state is constant.

\begin{lemma}\label{lemma:rholoc_cst}
  The local density (see Definition \ref{def:local_density}) of a ground state is constant: for all $x\in\mathcal L^\#$,
  \begin{equation}
    \rho_{\mathcal L^\#}(x)=\rho_{\mathrm{max}}
    .
    \label{rho_max_cst}
  \end{equation}
  In particular,
  \begin{equation}
    \sum_{\lambda\in\sigma^\#_x}v^\#(\lambda)
    =\rho_{\mathrm{max}}^{-1}
    .
    \label{sumv_rhomax}
  \end{equation}
\end{lemma}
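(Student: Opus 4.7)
The plan is to combine the pointwise upper bound $\rho_{\mathcal{L}^\#}(x)\le\rho_{\max}^{\loc}=\rho_{\max}$ from Item \ref{asm:density_max_local} of Assumption \ref{assumption} with a double-counting identity that fixes the \emph{spatial average} of $\rho_{\mathcal{L}^\#}^{-1}$ along $\mathcal L^\#$ to equal $\rho_{\max}^{-1}$. Periodicity will then force every value to saturate the bound.

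For the average, I would use \eqref{sumv_rholoc} to write, for any finite $\Lambda\Subset\Lambda_\infty$,
\[
\sum_{x\in\mathcal L^\#\cap\Lambda}\rho_{\mathcal L^\#}(x)^{-1}
=\sum_{x\in\mathcal L^\#\cap\Lambda}\sum_{\lambda\in\sigma^\#_x}v^\#(\lambda),
\]
and exchange the order of summation. By the very definition of $v^\#$ in \eqref{vsharp}, for every $\lambda\in\Lambda_\infty$ one has $\sum_{x\in\mathcal L^\#:\,\lambda\in\sigma^\#_x}v^\#(\lambda)=1$, so the right-hand side equals $|\Lambda|$ up to a boundary correction. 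The correction is controlled because, by Lemma \ref{lem:finiteeffectiveparticle}, each cell $\sigma^\#_x$ lies within graph distance $r_{\eff}$ of $x$; hence only those $\lambda$ within distance $r_{\eff}$ of $\partial\Lambda$ are affected by the truncation $x\in\Lambda$, contributing a term of order $|\partial\Lambda|$. Dividing by $|\mathcal L^\#\cap\Lambda|$ and taking $\Lambda\Uparrow\Lambda_\infty$ in the sense of van Hove, the fact that $|\mathcal L^\#\cap\Lambda|/|\Lambda|\to\rho_{\max}$ for a ground state yields
\[
\lim_{\Lambda\Uparrow\Lambda_\infty}\frac{1}{|\mathcal L^\#\cap\Lambda|}\sum_{x\in\mathcal L^\#\cap\Lambda}\rho_{\mathcal L^\#}(x)^{-1}=\rho_{\max}^{-1}.
\]

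To conclude, I would invoke Lemma \ref{lemma:periodic_ground_state}: $\mathcal L^\#$ is invariant under a full-rank period lattice $T$, and because Voronoi cells, hence $\rho_{\mathcal L^\#}$, transform equivariantly under $T$, the function $x\mapsto\rho_{\mathcal L^\#}(x)$ takes only finitely many values, one per $T$-orbit of $\mathcal L^\#$. A function bounded pointwise below by $\rho_{\max}^{-1}$ that takes finitely many values and has spatial average $\rho_{\max}^{-1}$ must equal $\rho_{\max}^{-1}$ on every orbit, which is exactly \eqref{rho_max_cst}. The second identity \eqref{sumv_rhomax} is then immediate from \eqref{sumv_rholoc}.

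The only mildly delicate point is the clean handling of boundary terms in the double sum, but these are manifestly $O(|\partial\Lambda|)$ by the uniform radius bound of Lemma \ref{lem:finiteeffectiveparticle}. One can even sidestep them entirely by choosing $\Lambda$ to be an exact union of translates of a fundamental domain of $T$, in which case the double-counting identity is exact and no limiting argument is required.
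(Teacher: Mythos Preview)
Your argument is correct and follows essentially the same route as the paper's proof: both compute the spatial average of $\rho_{\mathcal L^\#}^{-1}$ by a double-counting identity with boundary terms controlled via Lemma~\ref{lem:finiteeffectiveparticle}, pass to the van Hove limit to obtain $\rho_{\max}^{-1}$, and then combine the pointwise bound from Item~\ref{asm:density_max_local} with periodicity (Lemma~\ref{lemma:periodic_ground_state}) to force equality everywhere. Your phrasing of the final step (finitely many values on $T$-orbits, average equal to the pointwise lower bound) is arguably cleaner than the paper's.
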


\begin{proof}
  By Definition \ref{def:local_density}, for any $X\Subset\mathcal L^\#$,
  \begin{equation}
    \sum_{x\in X}\rho_{\mathcal L^\#}(x)^{-1}=
    \sum_{x\in X}\sum_{\lambda\in \sigma_x^\#}\frac1{\abs{\set{z\in \mathcal L^{\#}\mid\lambda\in \sigma_z^\#}}},
  \end{equation}
  so, if $\Lambda_X:=\bigcup_{x\in X}\sigma_x^\#$,
  \begin{equation}
    \sum_{x\in X}\rho_{\mathcal L^\#}(x)^{-1}=
    \sum_{\lambda\in \Lambda_X}\frac{\abs{\set{x\in X\mid\lambda\in \sigma_x^\#}}}{\abs{\set{z\in \mathcal L^\#\mid\lambda\in \sigma_z^\#}}}
    =
    \sum_{\lambda\in \Lambda_X}\left(
      1-\frac{\abs{\set{x\in \mathcal L^\#\setminus X\mid\lambda\in \sigma_x^\#}}}{\abs{\set{z\in \mathcal L^\#\mid\lambda\in \sigma_z^\#}}}
    \right),
  \end{equation}
  and thus
  \begin{equation}
    \frac1{\abs{X}}
    \sum_{x\in X}\rho_{\mathcal L^\#}(x)^{-1}=
    \frac{\abs{\Lambda_X}}{\abs{X}}
    -\sum_{\lambda\in \Lambda_X}\frac{\abs{\set{x\in \mathcal L^\#\setminus X\mid\lambda\in \sigma_x^\#}}}{\abs{\set{z\in \mathcal L^\#\mid\lambda\in \sigma_z^\#}}}
    .
    \label{rhomax_avg_loc}
  \end{equation}
  Now, taking a limit in which $X\to\mathcal L^\#$ such that $\Lambda_X\Uparrow\Lambda_{\infty}$ in the sense of Van Hove (there are many senses in which this limit can be taken; see, for instance, Definition \ref{def:distance-between-configurations}), we find that
  \begin{equation}
    \lim_{X\to\mathcal L^\#}\frac1{\abs{X}}
    \sum_{x\in X}\rho_{\mathcal L^\#}(x)^{-1}=
    \rho_{\mathrm{max}}^{-1}
    .
  \end{equation}
  However, by Item \ref{asm:density_max_local} of Assumption \ref{assumption}, $\rho_{\mathrm{max}}=\rho_{\mathrm{max}}^{\mathrm{loc}}$, so, for all $x\in X$ other than a set of size $o(\abs{X})$, $\rho_{\mathcal L^\#}(x)=\rho_{\mathrm{max}}$.
  Finally, by Lemma \ref{lemma:periodic_ground_state}, $\mathcal L^\#$ is periodic, if one $x\in X$ has a larger density, then a fraction of order $\abs{X}$ will have a larger density as well.
  This proves \eqref{rho_max_cst}.

  Having done this, \eqref{sumv_rhomax} then follows from \eqref{sumv_rholoc}.
\end{proof}

\begin{lemma}\label{lemma:V_local}
  Given a configuration $X\in\Omega$, if $x$ is $\#$-correct, then
  \begin{equation}
    V_{X}(\sigma_{x})=\sigma_{x}^{\#}
    .
  \end{equation}
  In particular, if $x$ is $\#$-correct, then $V_{X}(\sigma_x)$ has a radius of at most $r_{\mathrm{eff}}$.
\end{lemma}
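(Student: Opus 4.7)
The plan is to prove the two inclusions $V_X(\sigma_x) \subseteq \sigma_x^\#$ and $\sigma_x^\# \subseteq V_X(\sigma_x)$ by symmetric contradiction arguments that trace a $\Lambda_\infty$-geodesic from $\sigma_x$ outward, and to invoke $\#$-correctness precisely at the index where the geodesic first exits one of the two cells. The starting observation is that $\sigma_x \subseteq V_X(\sigma_x)$ and $\sigma_x \subseteq \sigma_x^\#$: every point $p \in \sigma_x$ satisfies $d_{\Lambda_\infty}(p, \sigma_x) = 0$, which is automatically the minimum over any set of particles containing $x$. So both cells are anchored on $\sigma_x$, and a shortest path from $\sigma_x$ to an alleged counterexample $\lambda$ must register the first discrepancy at some unit-length edge.

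For $V_X(\sigma_x) \subseteq \sigma_x^\#$, suppose for contradiction that $\lambda \in V_X(\sigma_x) \setminus \sigma_x^\#$, and fix a shortest path $p_0, p_1, \ldots, p_m = \lambda$ with $p_0 \in \sigma_x$ and $m = d_{\Lambda_\infty}(\lambda, \sigma_x)$. The triangle inequality forces $d_{\Lambda_\infty}(p_i, \sigma_x) = i$ for each $i$, not merely $\leq i$. Let $i$ be the last index with $p_i \in \sigma_x^\#$, and choose $y \in \mathcal L^\#$ achieving $\min_{y' \in \mathcal L^\#} d_{\Lambda_\infty}(p_{i+1}, \sigma_{y'})$; since $p_{i+1} \notin \sigma_x^\#$, we have $y \neq x$, $p_{i+1} \in V_{\mathcal L^\#}(\sigma_y)$, and $d_{\Lambda_\infty}(p_{i+1}, \sigma_y) < d_{\Lambda_\infty}(p_{i+1}, \sigma_x) = i+1$, hence $\leq i$ (integer distances). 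Because $p_i, p_{i+1}$ are $\Lambda_\infty$-neighbors lying respectively in $V_{\mathcal L^\#}(\sigma_x)$ and $V_{\mathcal L^\#}(\sigma_y)$, these cells are adjacent, so $y \in \mathcal N_{\mathcal L^\#}(x) = \mathcal N_X(x) \subseteq X$. The triangle inequality then yields
\begin{equation}
d_{\Lambda_\infty}(\lambda, \sigma_y) \leq d_{\Lambda_\infty}(p_{i+1}, \sigma_y) + d_{\Lambda_\infty}(p_{i+1}, \lambda) \leq i + (m - i - 1) = m - 1 < d_{\Lambda_\infty}(\lambda, \sigma_x),
\end{equation}
contradicting $\lambda \in V_X(\sigma_x)$.

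The reverse inclusion $\sigma_x^\# \subseteq V_X(\sigma_x)$ runs the identical template with the roles of $X$ and $\mathcal L^\#$ interchanged: a purported $\lambda \in \sigma_x^\# \setminus V_X(\sigma_x)$ exits $V_X(\sigma_x)$ along the geodesic at some step $i$, producing a witness $\mu' \in \mathcal N_X(x) = \mathcal N_{\mathcal L^\#}(x) \subseteq \mathcal L^\#$ with $d_{\Lambda_\infty}(\lambda, \sigma_{\mu'}) < d_{\Lambda_\infty}(\lambda, \sigma_x)$, contradicting $\lambda \in V_{\mathcal L^\#}(\sigma_x) = \sigma_x^\#$. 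The ``in particular'' radius bound is then immediate from Lemma \ref{lem:finiteeffectiveparticle}. The only subtlety worth flagging is the need to carry the exact equality $d_{\Lambda_\infty}(p_i, \sigma_x) = i$ along the geodesic: the argument converts the unit gain from crossing one Voronoi boundary into a \emph{strict} improvement over $d_{\Lambda_\infty}(\lambda, \sigma_x)$, and anything weaker than equality at the intermediate points would collapse the contradiction.
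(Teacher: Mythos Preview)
Your proof is correct. The paper's own proof is a one-line assertion: since $x$ is $\#$-correct its neighbors lie in $\mathcal L^\#$, and ``the discrete Voronoi cell $V_X(\sigma_x)$ only depends on $x$ and its neighbors''. That sentence is intuitively right but borders on circular, because $\mathcal N_X(x)$ is itself defined through the Voronoi cells. Your geodesic argument is exactly what is needed to make that assertion rigorous: by tracing a shortest path outward from $\sigma_x$ and catching the first step where the two cells disagree, you manufacture a particle $y$ in $\mathcal N_{\mathcal L^\#}(x)=\mathcal N_X(x)$ (or vice versa) that is strictly closer to $\lambda$ than $x$ is, yielding the contradiction. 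So the approaches agree in spirit, but yours supplies the mechanism the paper leaves implicit; in particular, your remark that one must carry the exact equality $d_{\Lambda_\infty}(p_i,\sigma_x)=i$ along the geodesic is the right point to stress, since the strict inequality at the end hinges on it.
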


\begin{proof}
  Since $x$ is $\#$-correct, its neighbors are all in $\mathcal{L}^{\#}$, but the discrete Voronoi cell $V_{X}(\sigma_{x})$ only depends on $x$ and its neighbors; see Definition \ref{def:voronoi}.
\end{proof}

\begin{lemma}\label{lem:pointwiselowerbound}
$\mu:=\min_{\lambda\in\sigma^{\#}_{x}} v^{\#}(\lambda)>0$.
\end{lemma}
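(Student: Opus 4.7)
The plan is to reduce $\mu>0$ to two finiteness statements: (i) the minimum in the definition is taken over a finite index set $\sigma_x^\#$, and (ii) each $v^\#(\lambda)$ is bounded below by a strictly positive constant that does not depend on $\lambda$, $x$, or $\#$. Once both are in hand, the positivity of $\mu$ is immediate.

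First I would invoke Lemma \ref{lem:finiteeffectiveparticle} to conclude that $\sigma_x^\#$ is contained in a graph-ball of radius $r_{\mathrm{eff}}$ around $x$. Item \ref{asm:lattice} of Assumption \ref{assumption} provides a finite maximal coordination number, so any such ball in $\Lambda_\infty$ contains only finitely many vertices, and $\sigma_x^\#$ is finite. This reduces the minimum to a finite list of positive rationals.

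Next I would bound each $v^\#(\lambda)$ from below uniformly. By definition, $v^\#(\lambda)=1/n_\lambda$ with $n_\lambda:=|\{z\in\mathcal L^\#\mid \lambda\in\sigma_z^\#\}|\ge 1$, so it suffices to bound $n_\lambda$ from above. If $\lambda\in\sigma_z^\#$, then $d_{\Lambda_\infty}(\lambda,\sigma_z)\le r_{\mathrm{eff}}$ by Lemma \ref{lem:finiteeffectiveparticle}, and since the shape $\omega$ is bounded, $z$ itself lies in a fixed graph-ball centered at $\lambda$. Finite coordination (Item \ref{asm:lattice}) then gives an upper bound $N$ on the number of such candidate $z$'s, depending only on $r_{\mathrm{eff}}$, $\omega$, and $\Lambda_\infty$; in particular $v^\#(\lambda)\ge 1/N$ independently of $\lambda$, $x$, and $\#$.

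Combining these, the minimum of $v^\#$ over the finite set $\sigma_x^\#$ is at least $1/N>0$. To make the resulting constant $\mu$ independent of the choice of $x$ and $\#$, I would use Lemma \ref{lemma:periodic_ground_state} (periodicity of each $\mathcal L^\#$), which restricts $x$ to finitely many translation classes inside a fundamental domain, together with Item \ref{asm:finitely_many_ground_states} of Assumption \ref{assumption} to take a further minimum over the finite set $\mathcal G$. There is no genuine obstacle here; the only subtlety worth stating carefully is the translation from ``bounded Voronoi cells'' to ``boundedly many overlapping cells at a given point'', which is precisely where finite coordination is used.
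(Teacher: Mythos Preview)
Your proposal is correct and follows essentially the same route as the paper's proof: both reduce the claim to Lemma \ref{lem:finiteeffectiveparticle} (boundedness of reference Voronoi cells) combined with finite coordination to bound the number of particles $z$ with $\lambda\in\sigma_z^\#$. The paper's proof is a one-line sketch (``the maximum number of particles whose support is at a given distance from a fixed site is finite''), whereas you spell out the two steps and the uniformity in $x,\#$ explicitly; note that your step (ii) already yields a bound $1/N$ independent of $\lambda$, $x$, and $\#$, so your appeal to periodicity and $|\mathcal G|<\infty$ in the final paragraph is not actually needed.
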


\begin{proof}
This is a direct consequence of Lemma \ref{lem:finiteeffectiveparticle}: the maximum number of particles whose support (intersected with $\Lambda_{\infty}$) is at a given distance from a fixed site is finite.
\end{proof}

Finally, we prove that Item \ref{asm:imperfect_transition} of Assumption \ref{assumption} implies a coarse-grained version of the assumption.
In Definition \ref{def:gfc} below, this will ensure that the labeling function on a GFc is well-defined.

\begin{lemma}\label{lemma:unique_Rcorrect}
  If $\mathcal R_2>0$, then given a configuration $X\in\Omega(\Lambda)$ and $x,y\in X$ such that $\mathrm{d}_{\Lambda_{\infty}}(x,y)<\mathcal R_2$, if $x$ is $(\#,\mathcal R_2)$-correct and $y$ is $(\#',\mathcal R_2)$-correct, then $\#=\#'$.
\end{lemma}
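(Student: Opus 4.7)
The plan is to reduce the claim to a direct application of Item \ref{asm:imperfect_transition} of Assumption \ref{assumption} (no seamless merging) by showing that the single particle $y$ is simultaneously $\#$-correct and $\#'$-correct.

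First I would verify that $y\in\mathcal N_X^{(\mathcal R_2)}(x)$. Since the shape $\omega$ contains $\vec 0$, both $x\in\sigma_x\subseteq V_X(\sigma_x)$ and $y\in\sigma_y\subseteq V_X(\sigma_y)$ hold, so
\begin{equation}
  d_{\Lambda_\infty}(V_X(\sigma_x), V_X(\sigma_y)) \leq d_{\Lambda_\infty}(x,y) < \mathcal R_2,
\end{equation}
which is exactly the condition defining a $\mathcal R_2$-neighbor (since $\mathcal R_2\in\mathbb N$, strict inequality suffices). Because $x$ is $(\#, \mathcal R_2)$-correct, Definition \ref{def:R_correct} then forces $y$ to be $\#$-correct.

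Next I would observe that $y$ is its own $\mathcal R_2$-neighbor trivially (the distance from $V_X(\sigma_y)$ to itself is zero), so since $y$ is $(\#', \mathcal R_2)$-correct, the same definition gives that $y$ is $\#'$-correct. The particle $y$ is therefore $\#$-correct and $\#'$-correct at the same time, and since $y\in\mathcal N_X(y)$ by the parenthetical in Definition \ref{def:correct}, applying Item \ref{asm:imperfect_transition} to the pair $(y,y)$ yields $\#=\#'$.

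The argument is essentially a chain of definition-unfoldings and has no substantive obstacle. The only subtle point worth flagging is that coarse-grained $(\#,\mathcal R_2)$-correctness automatically transports ordinary $\#$-correctness to every particle within coarse-grained reach (including the base particle itself), so that the single particle $y$ gets labeled twice and the no-seamless-merging axiom forces the two labels to coincide; the hypothesis $\mathcal R_2>0$ is what makes the first neighborhood inclusion nonvacuous under $d_{\Lambda_\infty}(x,y)<\mathcal R_2$.
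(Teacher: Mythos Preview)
Your proof is correct and follows essentially the same approach as the paper: both arguments use the distance hypothesis to place one particle in the $\mathcal R_2$-neighborhood of the other, extract ordinary $\#$-correctness from coarse-grained correctness, and then invoke Item~\ref{asm:imperfect_transition} on a single particle that carries both labels. The only cosmetic difference is that the paper pivots on $x$ (showing $x$ is both $\#$- and $\#'$-correct) whereas you pivot on $y$; the argument is symmetric in $x$ and $y$, so this is immaterial.
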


\begin{proof}
  Since $y$ is $(\#',\mathcal R_2)$-correct, every particle inside the ball of radius $\mathcal R_2$ centered at $y$ is $\#'$-correct.
  Therefore, if $\mathrm{d}_{\Lambda_{\infty}}(x,y)<\mathcal R_2$, then $x$ is $\#'$-correct.
  Finally, since $x$ is $\#$-correct, by Item\-~\ref{asm:imperfect_transition} of Assumption \ref{assumption}, $\#=\#'$.
\end{proof}

\subsection{Gaunt-Fisher configurations}
\label{subsec:gfc}

We are now ready to construct GFc's.

\begin{definition}[Gaunt-Fisher configuration associated to a particle configuration]\label{def:gfc}
Consider a configuration $X\in\Omega^{\#}(\Lambda)$ for some $\#$. 
We associate a GFc to each connected component of the union of the Voronoi cells of the $\mathcal R_2$-incorrect particles in $X$ (recall Definition \ref{def:R_correct}), where $\mathcal R_2$ satisfies \eqref{ineq_R2_1}, \eqref{ineq_R2_2}, and \eqref{ineq_R2_3}:
\begin{equation}
  \bigcup_{x\in\mathcal{I}^{(\mathcal R_2)}(X)}V_{X}(\sigma_{x})
  =:\bigcup_{i}\bar\gamma_{i},
\end{equation}
where $\bar\gamma_{i}$ and $\bar\gamma_{j}$ are disconnected for $i\neq j$.
Each such GFc has a \emph{support}, an \emph{internal configuration}, and a \emph{labeling function}.

\begin{enumerate}
\item The support of a GFc is the set $\bar\gamma_{i}$ itself.

\item The internal configuration is $X_{\gamma_{i}}:= X\cap\bar{\gamma}_{i}$.
\item Let $\mathrm{ext}(\gamma_{i}),\mathrm{int}_{1}(\gamma_{i}),\dots,\mathrm{int}_{N}(\gamma_{i})$ be the connected components of $\bar{\gamma}_{i}^{c}:=\Lambda_{\infty}\setminus\bar{\gamma}_{i}$ with $\mathrm{ext}(\gamma_{i})$ being the unique unbounded component (so that $\mathrm{int}_{1}(\gamma_{i}),\cdots \mathrm{int}_{N}(\gamma_{i})$ are the \emph{holes} in $\bar\gamma_{i}$).
Now, given $A_{j}\in\set{\mathrm{int}_{1}(\gamma_{i}),\cdots,\mathrm{int}_{N}(\gamma_{i})}$, $A_j$ is simply connected, so, by Item \ref{asm:lattice} of Assumption \ref{assumption}, the interior boundary of $A_j$ is $\mathcal R_0$-connected (see Definition \ref{def:rconnected}).
Thus, provided that
\begin{equation}
  \mathcal R_2\ge \mathcal R_0,
  \label{ineq_R2_1}
\end{equation}
all the particles in $A_{j}$ whose Voronoi cell intersects $\partial^{\textrm{in}}A_{j}$ are $(\#_j,\mathcal R_2)$-correct for the same $\#_{j}\in\mathcal{G}$.
Similarly, if $A_j=\mathrm{ext}(\gamma_i)$, then we consider the exterior boundary of $A_j^c$, which is simply connected, and apply the same reasoning to find that the interior boundary of $A_j$ is lined with $(\#_j,\mathcal R_2)$-correct particles.
The labeling function $\mu_{\gamma_{i}}:\set{\mathrm{ext}(\gamma_{i}),\mathrm{int}_{1}(\gamma_{i}),\cdots,\mathrm{int}_{N}(\gamma_{i})}\rightarrow\mathcal{G}$ assigns this ground state $\#_{j}$ to $A_{j}$.
\end{enumerate}
\end{definition}

See Figure \ref{fig:GFc} for an example.

\begin{figure}
  \hfil\includegraphics[width=14cm]{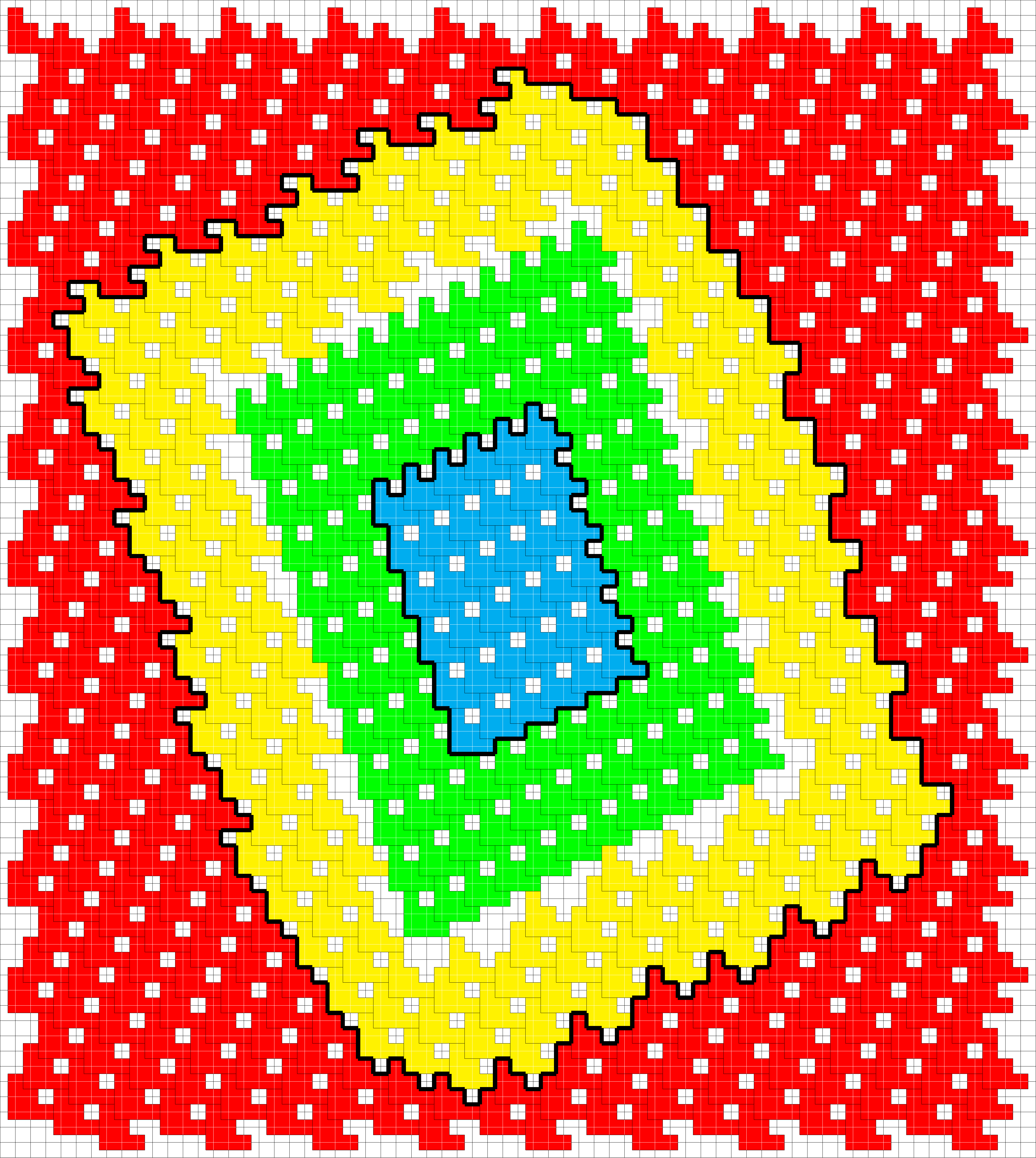}

  \caption{
    An example of a GFc associated to a particle configuration for the 3-staircase model; see Section \ref{subsec:n_staircases}.
    The red and yellow particles (the two outermost rings) are in one ground state and the green and cyan particles (the innermost) are in another ground state.
    For this model, $\mathcal R_2=3$.
    The green and yellow particles are $\mathcal R_2$-incorrect, and thus are in the GFc.
    Taking the union of the supports of their Voronoi cells, we find the support of the GFc, which is delineated by a thick line.
    \label{fig:GFc}
  }
\end{figure}

Hence, each GFc is a triplet $\gamma:=(\bar\gamma,X_{\gamma},\mu_{\gamma})$, but these are not arbitrary: there is a compatibility condition that ensures that a collection of GFc's corresponds to a particle configuration.
For one, the labels $\mu_{\gamma}$ must be compatible: if a GFc lies inside another, their labels must match up.
In addition, the hard-core repulsion imposes an a priori constraint on the $X_{\gamma}$, but we will now dispense with it by assuming that $\mathcal R_2$ is large enough, which prevents the $X_{\gamma}$ from interacting with each other.

\begin{lemma}
\label{lem:R_large_valid_configuration}
  Suppose that
  \begin{equation}
    \mathcal R_2>\max\set{\mathrm{d}_{\Lambda_{\infty}}(x,y)\mid x,y\in\Lambda,\ \omega_{x}\cap\omega_{y}\neq\emptyset}.
    \label{ineq_R2_2}
  \end{equation}
  Consider a configuration $X\in\Omega^{\#}(\Lambda)$ and its corresponding family of GFc's $\set{\gamma_{1},\cdots,\gamma_n}$.
  For each GFc $\gamma_{i}$, we construct a configuration $\xi_{\gamma_{i}}$ in which the holes of $\gamma_{i}$ are filled:
  \begin{equation}
  \label{eqn:canonical_configuration}
    \xi_{\gamma_{i}}:=X_{\gamma_{i}}\cup \left[\mathrm{ext}(\gamma_{i})\cap\mathcal{L}^{\mu_{\gamma_{i}}(\mathrm{ext}(\gamma_{i}))}\right]\cup\bigcup_{j}\left[\mathrm{int}_{j}(\gamma_{i})\cap\mathcal{L}^{\mu_{\gamma_{i}}(\mathrm{int}_{j}(\gamma_{i}))}\right]
    .
  \end{equation}
  This configuration is a valid particle configuration: $\xi_{\gamma_{i}}\in\Omega(\Lambda_{\infty})$.
\end{lemma}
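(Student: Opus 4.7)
The plan is to verify pairwise non-overlap in $\xi_{\gamma_i}$ by cases. Set $D:=\max\set{\mathrm{d}_{\Lambda_{\infty}}(x,y)\mid\omega_x\cap\omega_y\ne\emptyset}$, so that any pair of overlapping particles lies at graph distance at most $D<\mathcal R_2$. I would partition the particles of $\xi_{\gamma_i}$ into the ``internal piece'' $X_{\gamma_i}$ and the ``fillings'' $\mathcal L^{\#_k}\cap A_k$, where $A_k$ ranges over $\set{\mathrm{ext}(\gamma_i),\mathrm{int}_1(\gamma_i),\ldots,\mathrm{int}_N(\gamma_i)}$. Pairs inside $X_{\gamma_i}\subset X$ cannot overlap by the hard-core in $X$, and pairs inside a single filling $\mathcal L^{\#_k}\cap A_k$ cannot overlap because each $\mathcal L^{\#_k}$ is itself a valid configuration. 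The remaining cases are: (a) $p\in X_{\gamma_i}$ and $q\in\mathcal L^{\#_k}\cap A_k$ for some $k$, and (b) $p\in\mathcal L^{\#_k}\cap A_k$ and $q\in\mathcal L^{\#_{k'}}\cap A_{k'}$ with $k\ne k'$.

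In both remaining cases the plan is to show that the offending filler particle is already an element of $X$, whence the hard-core in $X$ yields the desired contradiction. Explicitly, the key claim is: any $q\in\mathcal L^{\#_k}\cap A_k$ with $\mathrm{d}_{\Lambda_{\infty}}(q,\bar\gamma_i)\le D$ satisfies $q\in X$. This settles case (a) directly, since $p\in\bar\gamma_i$ forces $q$ within $D$ of $\bar\gamma_i$. For case (b), since $A_k$ and $A_{k'}$ are distinct connected components of $\bar\gamma_i^c$, any lattice path from $p$ to $q$ of length $\le D$ must cross $\bar\gamma_i$, so both $p$ and $q$ lie within graph distance $D$ of $\bar\gamma_i$ and the claim applies to each.

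To prove the key claim, I would exploit item 3 of Definition \ref{def:gfc}: every $X$-particle in $A_k$ whose Voronoi cell touches $\partial^{\mathrm{in}}A_k$ is $(\#_k,\mathcal R_2)$-correct, and in particular $\#_k$-correct (taking $x$ itself in Definition \ref{def:R_correct}). Pick such a boundary particle $z\in X\cap A_k$ near $q$. By Lemma \ref{lemma:V_local}, $V_X(\sigma_z)=\sigma_z^{\#_k}$ and $\mathcal N_X(z)=\mathcal N_{\mathcal L^{\#_k}}(z)$, so the neighbors of $z$ in $X$ and in $\mathcal L^{\#_k}$ coincide. Because $z$ is $(\#_k,\mathcal R_2)$-correct, each of its $\mathcal R_2$-neighbors in $X$ is also $\#_k$-correct, and iterating Lemma \ref{lemma:V_local} produces a neighborhood of $z$ throughout which the particles and the Voronoi decomposition of $X$ coincide with those of $\mathcal L^{\#_k}$. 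The main obstacle, and the place where the assumption $\mathcal R_2>D$ actually enters, is to calibrate this ``radius of matching'' so that it reaches $q$: since reference Voronoi cells have bounded radius $r_{\mathrm{eff}}$ by Lemma \ref{lem:finiteeffectiveparticle}, each Voronoi hop corresponds to a uniformly bounded graph distance, and the hypothesis $\mathcal R_2>D$ (in combination with the tighter inequalities \eqref{ineq_R2_1} and \eqref{ineq_R2_3} that $\mathcal R_2$ will be required to satisfy) guarantees the match propagates far enough. Once $q$ sits in the matched region and $q\in\mathcal L^{\#_k}$, it must also lie in $X$, completing the argument.
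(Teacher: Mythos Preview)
Your approach is correct and follows essentially the same line as the paper. The paper frames the construction of $\xi_{\gamma_i}$ as ``start from $X$; in each $A_j$ remove the particles not in $\mathcal L^{\mu_{\gamma_i}(A_j)}$ and then fill with $\mathcal L^{\mu_{\gamma_i}(A_j)}\cap A_j$,'' and observes that only the newly added particles can cause trouble. It then asserts that those added particles are shielded by the layer of $(\#_j,\mathcal R_2)$-correct particles lining $\partial^{\mathrm{in}}A_j$, so that ``the nearest particles that are not in $\mathcal L^{\#_j}$ are at least a distance $\mathcal R_2$ away.'' Your case split (a)/(b) and your key claim---any filler $q\in\mathcal L^{\#_k}\cap A_k$ with $d_{\Lambda_\infty}(q,\bar\gamma_i)\le D$ already lies in $X$---are precisely the contrapositive of that assertion, and your reduction of case (b) to the key claim via ``any path between two components of $\bar\gamma_i^c$ crosses $\bar\gamma_i$'' is a nice touch that the paper leaves implicit.

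One remark on your final paragraph: you make the key claim sound harder than it needs to be. No iteration is required, and the additional inequalities on $\mathcal R_2$ are not used for this lemma; the hypothesis $\mathcal R_2>D$ alone suffices. A one-step argument does the job: pick $\lambda\in\partial^{\mathrm{in}}A_k$ on a shortest path from $q$ to $\bar\gamma_i$ (so $d_{\Lambda_\infty}(q,\lambda)\le D-1$). Since $\lambda$ lies in no $\bar\gamma_\ell$ (it is outside $\bar\gamma_i$ and, because distinct GFc's are disconnected, outside every other $\bar\gamma_\ell$), any $z\in X$ with $\lambda\in V_X(\sigma_z)$ is $(\#_k,\mathcal R_2)$-correct. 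Now let $w\in X$ be any particle with $q\in V_X(\sigma_w)$: from $d(V_X(\sigma_w),V_X(\sigma_z))\le d(q,\lambda)<\mathcal R_2$ we get $w\in\mathcal N_X^{(\mathcal R_2)}(z)$, hence $w$ is $\#_k$-correct and $V_X(\sigma_w)=\sigma_w^{\#_k}$. But then $q\in\sigma_w^{\#_k}$ together with $q\in\mathcal L^{\#_k}$ forces $q=w\in X$. This is the content the paper summarizes in one sentence; your outline captures it, just with more machinery than necessary.
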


\begin{proof}
Let $A_{j}\in\set{\mathrm{ext}(\gamma_{i}),\mathrm{int}_{1}(\gamma_{i}),\cdots,\mathrm{int}_{N}(\gamma_{i})}$ and consider the set $X\cap A_{j}$ of particles in $A_{j}$.
Let us remove the particles that are not in $\mathcal{L}^{\mu_{\gamma_{i}}(A_{j})}$, which will leave gaps, which we then fill with particles in $\mathcal{L}^{\mu_{\gamma_{i}}(A_{j})}$.
Proceeding in this way, we construct $\xi_{\gamma_{i}}$.
The proof then reduces to showing that the newly added particles do not overlap with any of the existing ones.
Obviously, particles in $\mathcal{L}^{\mu_{\gamma_{i}}(A_{j})}$ can only overlap with particles in $\mathcal{L}^{\#'}$ with $\#'\neq\mu_{\gamma_{i}}(A_{j})$.
But such particles in $\mathcal{L}^{\#'}$ must be a distance of at least $\mathcal R_2$ away (this follows straightforwardly from the fact that different GFc's are disconnected, so the extra particles are surrounded by $(\mu_{\gamma_{i}}(A_{j}),\mathcal R_2)$-correct particles, which implies that the nearest particles that are not in $\mathcal{L}^{\mu_{\gamma_{i}}(A_{j})}$ are at least a distance $\mathcal R_2$ away).
Therefore, as long as $\mathcal R_2> \max\set{\mathrm{d}_{\Lambda_{\infty}}(x,y)\mid x,y\in\Lambda,\ \omega_{x}\cap\omega_{y}\neq\emptyset}$, there can be no interaction.
\end{proof}

The converse to Lemma \ref{lem:R_large_valid_configuration} follows. 

\begin{proposition}
\label{prop:canonical_configuration_gfc_correspondence}
For each GFc $\gamma$, the canonical configuration $\xi_{\gamma}$ has $\gamma$ as its unique GFc.
\end{proposition}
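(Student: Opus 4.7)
The plan is to verify, for the triplet $(\bar\gamma', X_{\gamma'}, \mu_{\gamma'})$ produced by applying Definition \ref{def:gfc} to $\xi_\gamma$, that $\bar\gamma' = \bar\gamma$, $X_{\gamma'} = X_\gamma$, $\mu_{\gamma'} = \mu_\gamma$, and that no second GFc appears. The key tool is a locality principle: by the assumption \eqref{ineq_R2_3} that $\mathcal R_2$ is sufficiently large compared with the reference Voronoi radius $r_{\mathrm{eff}}$ of Lemma \ref{lem:finiteeffectiveparticle}, the data of $X$ in a bounded neighborhood of $\bar\gamma$ determines the Voronoi structure of $\xi_\gamma$ near $\bar\gamma$, while inside each region $A_j$ the configuration $\xi_\gamma$ is locally indistinguishable from $\mathcal L^{\mu_\gamma(A_j)}$.

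First I would show that for every $x \in X_\gamma$, $V_{\xi_\gamma}(\sigma_x) = V_X(\sigma_x)$. The particles that can affect $V_X(\sigma_x)$ lie within a bounded radius of $x$, and in that vicinity $X$ and $\xi_\gamma$ agree: inside $\bar\gamma$ both restrict to $X_\gamma$, and just outside $\bar\gamma$, in some region $A_j$, the particles in $X$ are $(\mu_\gamma(A_j),\mathcal R_2)$-correct by the construction of the labeling function in Definition \ref{def:gfc} together with \eqref{ineq_R2_1}. By Definition \ref{def:correct} and Lemma \ref{lem:finiteeffectiveparticle} their nearby particles coincide with $\mathcal L^{\mu_\gamma(A_j)} \cap A_j$, which is precisely what $\xi_\gamma$ contains there. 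Hence every $x \in X_\gamma$ retains the same neighbor set in $\xi_\gamma$, and in particular remains $\mathcal R_2$-incorrect.

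Symmetrically, for $y \in \xi_\gamma \cap A_j$ close to $\bar\gamma$ the same localization yields $V_{\xi_\gamma}(\sigma_y) = \sigma_y^{\mu_\gamma(A_j)}$ and places all $\mathcal R_2$-neighbors of $y$ in $\mathcal L^{\mu_\gamma(A_j)}\cap A_j$; for $y$ deep inside $A_j$ this is immediate since $\xi_\gamma$ agrees there with $\mathcal L^{\mu_\gamma(A_j)}$. Either way, $y$ is $(\mu_\gamma(A_j),\mathcal R_2)$-correct in $\xi_\gamma$, so no particle outside $X_\gamma$ is $\mathcal R_2$-incorrect. Consequently the union of Voronoi cells of $\mathcal R_2$-incorrect particles of $\xi_\gamma$ is exactly $\bar\gamma$ (still a single connected component), the internal configuration is $\xi_\gamma \cap \bar\gamma = X_\gamma$, and the labeling function $\mu_{\gamma'}(A_j)$, read off from the ground state of the $(\cdot,\mathcal R_2)$-correct particles along $\partial^{\mathrm{in}}A_j$, equals $\mu_\gamma(A_j)$ by construction of $\xi_\gamma$. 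Thus the unique GFc of $\xi_\gamma$ is exactly $\gamma$.

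The main obstacle is the locality statement threaded through the proof: ensuring that the $\mathcal R_2$-neighborhood (in the Voronoi-adjacency sense) of any particle in the boundary buffer of $A_j$ is entirely contained in the zone where $X$ and $\xi_\gamma$ coincide. This reduces to a quantitative estimate in terms of $r_{\mathrm{eff}}$ and $\mathcal R_2$, and should be precisely the content of the inequality \eqref{ineq_R2_3} imposed on $\mathcal R_2$. Once this buffer width is fixed, the three-part verification above is essentially bookkeeping.
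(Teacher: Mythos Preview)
Your two-step outline---show every $x\in X_\gamma$ remains $\mathcal R_2$-incorrect in $\xi_\gamma$ and every $y\in\xi_\gamma\setminus X_\gamma$ is $(\mu_\gamma(A_j),\mathcal R_2)$-correct---is exactly the paper's strategy (the four lemmas in Appendix~\ref{appx:canonical_configuration_gfc_correspondence}), and the preliminary equality $\mathcal N_{\xi_\gamma}(x)=\mathcal N_X(x)$ for $x\in X_\gamma$ is their Lemma~\ref{lem:GFc_particles_same_neighbors}.

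Two points need fixing. First, \eqref{ineq_R2_3} reads $\mathcal R_2\ge\mathcal R_1$; it concerns Item~\ref{asm:density_local_density} of Assumption~\ref{assumption} and says nothing about $r_{\mathrm{eff}}$ or any buffer width. None of the three constraints \eqref{ineq_R2_1}--\eqref{ineq_R2_3} supplies the single quantitative locality estimate you are hoping for. Second, the inference ``same neighbor set, hence remains $\mathcal R_2$-incorrect'' is too quick. The $\mathcal R_2$-neighbors of $x\in X_\gamma$ in the Voronoi-adjacency sense can reach particles in $A_j$ that belong to $\xi_\gamma$ but not to $X$ (or conversely, particles lying in another GFc of $X$ inside $A_j$), so you cannot simply transport the $\mathcal R_2$-correctness status of $x$ from $X$ to $\xi_\gamma$ by a locality argument on a fixed buffer. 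The paper handles this step by contradiction (Lemma~\ref{lem:GFc_particles_R_incorrect}): assuming $x$ is $(\#,\mathcal R_2)$-correct in $\xi_\gamma$, it uses the already established neighbor-set equalities to show $x$ would then be $(\#,\mathcal R_2)$-correct in $X$, contradicting $x\in X_\gamma$. The case analysis there (surface vs.\ deep, and $\mathcal R_2$-neighbor in $X_\gamma$ vs.\ in another GFc vs.\ outside every GFc) is precisely the ``main obstacle'' you flag, and it does not collapse to a single inequality on $\mathcal R_2$.
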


The proof of this fact is simple, though writing it out is a touch tedious, so we postpone it to Appendix \ref{appx:canonical_configuration_gfc_correspondence}.

Let us now define the set of configurations of GFc's.

\begin{definition}
We define the set $\mathcal{C}^{\#}(\Lambda)$ of GFc's as the set of triplets $(\bar\gamma,X_{\gamma},\mu_{\gamma})$, where $\bar\gamma$ is a connected subset of $\Lambda$, $X_{\gamma}\in\Omega(\bar\gamma)$, and $\mu_{\gamma}$ is a map from $\set{\mathrm{ext}(\gamma),\mathrm{int}_{1}(\gamma),\cdots,\mathrm{int}_{N}(\gamma)}\to\mathcal{G}$, in such a way that $\xi_{\gamma}\in\Omega^{\#}(\Lambda)$.
\end{definition}

Note that $\#$ is the exterior close-packing index of the GFc {\it regardless of whether the GFc is surrounded by another one}.
This does not cause problems, due to our using the Minlos-Sinai trick in Section \ref{sec:Pirogov_Sinai}: we only ever need to study GFc's that have $\#$ as their external index; see Section \ref{subsec:GFc_model}.

\subsection{Relations to species-preserving isometries}

We briefly comment on the relation between GFc's and species-preserving isometries.
The basic observation is that the latter preserve not only the continuous support $\omega_{x}$ but also the discrete support $\sigma_{x}$, in the sense that $\psi(\sigma_{x})=\sigma_{\psi(x)}$ for any species-preserving isometry $\psi$.
Since the Voronoi cells are constructed in a choice-free manner, a multitude of convenient properties follow without effort:
\begin{enumerate}
\item preservation of Voronoi cells: given $X\in\Omega(\Lambda_{\infty})$ containing $x$, $\psi(V_{X}(\sigma_{x}))=V_{\psi(X)}(\sigma_{\psi(x)})$;
\item preservation of $\#$- and $(\#,\mathcal R_2)$-correctness: $x\in X$ is $\#$-correct (resp. $(\#,\mathcal R_2)$-correct) if and only if $\psi(x)\in\psi(X)$ is $\psi(\#)$-correct (resp. $(\psi(\#),\mathcal R_2)$-correct);
\item preservation of reference Voronoi cells: if $x\in\mathcal{L}^{\#}$, then $\psi(\sigma^{\#}_{x})=\sigma^{\psi(\#)}_{\psi(x)}$;
\item action on GFc's: $\psi$ acts on the set of all GFc's by
\begin{equation}\label{eqn:action_on_gfc}
\psi\cdot\gamma
=\psi\cdot(\bar{\gamma},X_{\gamma},\mu_{\gamma})
:=(\psi(\bar{\gamma}),\psi(X_{\gamma}),\psi\circ\mu_{\gamma}).
\end{equation}
\end{enumerate}
This will be useful in the proof of Proposition \ref{prop:central_estimates} below.

\section{Peierls condition}\label{sec:peierls}

We now formulate the Peierls condition in the context of HCLP systems. 
Since the GFc's are defined from $\mathcal R_2$-incorrect particles, we can prove that they come at a large \emph{cost} by showing that \emph{somewhere near} an $\mathcal R_2$-incorrect particle (in a sense made precise below), the local density must be bounded away from $\rho_{\max}=\rho_{\max}^{\loc}$.
When the fugacity is sufficiently large, the probability of having such a dip in the local density is low, which in turn induces a cost for each GFc that is proportional to its volume.

\subsection{Cost of an $\mathcal R_2$-incorrect particle}

Thus, we must prove that any $\mathcal R_2$-incorrect particle induces a dip in the local density.
This is a similar statement to Item \ref{asm:density_local_density} of Assumption \ref{assumption}, except for the fact that the dip in the density must hold for $\mathcal R_2$-incorrect particles, instead of $\mathcal R_1$-incorrect ones, whenever $\mathcal R_2\ge \mathcal R_1$.

\begin{lemma}\label{lem:cost}
  For any $\mathcal R_2$ such that
  \begin{equation}
    \mathcal R_2\ge\mathcal R_1,
    \label{ineq_R2_3}
  \end{equation}
  we define
  \begin{equation}
    \mathcal S_0:=\mathcal S_1+\mathcal R_2+4 r_{\mathrm{eff}}
    \label{calS0}
  \end{equation}
  (recall that $r_{\mathrm{eff}}$ was defined in Lemma \ref{lem:finiteeffectiveparticle}).
  We have that, for all $X\in\Omega(\Lambda)$ and $x\in X$, if $x$ is $\mathcal R_2$-incorrect, then there exists $y\in X$ such that $\mathrm d_{\Lambda_{\infty}}(x,y)\le \mathcal S_0$ and
  \begin{equation}
    \rho_{X}^{-1}(y)\ge\rho_{\mathrm{max}}^{-1}+\epsilon,
    \label{ineqrho}
  \end{equation}
  where $\mathcal R_1$, $\mathcal S_1$ and $\epsilon$ are the quantities appearing in Item \ref{asm:density_local_density} of Assumption \ref{assumption}.
\end{lemma}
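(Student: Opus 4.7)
The plan is to reduce Lemma \ref{lem:cost} to Item \ref{asm:density_local_density} of Assumption \ref{assumption} by showing that whenever $x\in X$ is $\mathcal R_2$-incorrect, there must exist an $\mathcal R_1$-incorrect particle $w\in X$ within graph distance $\mathcal R_2+4r_{\eff}$ of $x$. Feeding $w$ into Item \ref{asm:density_local_density} produces some $y\in X$ with $d_{\Lambda_\infty}(w,y)\le\mathcal S_1$ and $\rho_X^{-1}(y)\ge\rho_{\mathrm{max}}^{-1}+\epsilon$, whereupon the triangle inequality yields $d_{\Lambda_\infty}(x,y)\le\mathcal S_1+\mathcal R_2+4r_{\eff}=\mathcal S_0$, as required.

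To construct such a $w$, I would first handle the trivial case in which $x$ itself is already $\mathcal R_1$-incorrect by taking $w=x$. Otherwise, $x$ is $(\#,\mathcal R_1)$-correct for some $\#\in\mathcal G$, hence $\#$-correct, and Lemma \ref{lemma:V_local} gives $V_X(\sigma_x)=\sigma_x^\#$, which sits within radius $r_{\eff}$ of $x$ by Lemma \ref{lem:finiteeffectiveparticle}. Since $x$ fails to be $(\#,\mathcal R_2)$-correct, there is a $\mathcal R_2$-neighbor $z\in X$ of $x$ that is not $\#$-correct. I would then choose witnesses $a\in V_X(\sigma_x)$ and $b\in V_X(\sigma_z)$ with $d_{\Lambda_\infty}(a,b)\le\mathcal R_2$, take a shortest path $a=\lambda_0,\lambda_1,\ldots,\lambda_N=b$ in $\Lambda_\infty$ (so $N\le\mathcal R_2$), and attach to each $\lambda_i$ a particle $y_i\in X$ with $\lambda_i\in V_X(\sigma_{y_i})$, choosing $y_0=x$ and $y_N=z$. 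Since consecutive $\lambda_i$ are adjacent in $\Lambda_\infty$, consecutive particles $y_i,y_{i+1}$ are either equal or neighbors in the sense of Definition \ref{def:correct}.

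Walking along this chain from $y_0=x$ (which is $\#$-correct) toward $y_N=z$ (which is not $\#$-correct), I isolate the first index $j$ with $y_j$ $\#$-correct but $y_{j+1}$ not. Item \ref{asm:imperfect_transition} of Assumption \ref{assumption} applied to the neighboring pair $(y_j,y_{j+1})$ forces $y_{j+1}$ to fail to be $\#'$-correct for any $\#'\in\mathcal G$, hence in particular to be $\mathcal R_1$-incorrect, so I set $w:=y_{j+1}$. Splitting $d_{\Lambda_\infty}(x,w)\le d_{\Lambda_\infty}(x,\lambda_{j+1})+d_{\Lambda_\infty}(\lambda_{j+1},w)$, the first summand is at most $r_{\eff}+(j+1)\le r_{\eff}+\mathcal R_2$, since $a\in\sigma_x^\#$ lies within $r_{\eff}$ of $x$ and $j+1\le N\le\mathcal R_2$.

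The main obstacle is bounding the second summand: $w$ is incorrect, so $V_X(\sigma_w)$ may be unboundedly large, ruling out a naive bound by the cell's radius. I would instead exploit Voronoi minimality to borrow the size of $V_X(\sigma_{y_j})$: since $\lambda_{j+1}\in V_X(\sigma_w)$, we have $d_{\Lambda_\infty}(\lambda_{j+1},\sigma_w)\le d_{\Lambda_\infty}(\lambda_{j+1},\sigma_{y_j})\le d_{\Lambda_\infty}(\lambda_{j+1},\lambda_j)+d_{\Lambda_\infty}(\lambda_j,y_j)\le 1+r_{\eff}$, using that $y_j$ is $\#$-correct so $\lambda_j\in\sigma_{y_j}^\#$ lies within $r_{\eff}$ of $y_j$. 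Combining this with the observation that the underlying particle shape $\sigma$ has radius at most $r_{\eff}$ (since $\sigma\subseteq\sigma^\#$ in any ground state) yields $d_{\Lambda_\infty}(\lambda_{j+1},w)\le 2r_{\eff}+1$, so $d_{\Lambda_\infty}(x,w)\le 3r_{\eff}+\mathcal R_2+1\le 4r_{\eff}+\mathcal R_2$, finishing the plan.
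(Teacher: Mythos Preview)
Your proposal is correct and follows essentially the same strategy as the paper's proof: both locate an incorrect particle near $x$ by walking along a chain of Voronoi-neighboring particles and invoking Item~\ref{asm:imperfect_transition} at the first transition away from $\#$-correctness, then apply Item~\ref{asm:density_local_density} to that particle and bound the distance using that correct particles have Voronoi cells of radius at most $r_{\eff}$. The only structural difference is that the paper phrases the search via a minimality argument (choosing the incorrect $z\in\mathcal N_X^{(\mathcal R_2)}(x)$ with $d_{\Lambda_\infty}(V_X(\sigma_z),V_X(\sigma_x))$ minimal), whereas you case-split on whether $x$ itself is $\mathcal R_1$-incorrect and then explicitly walk a shortest path; your Voronoi-minimality trick for bounding $d_{\Lambda_\infty}(\lambda_{j+1},w)$ is arguably a cleaner way to handle the potentially large cell of the incorrect particle.
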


\begin{proof}
  Consider $x\in X$ that is $\mathcal R_2$-incorrect.
  We first prove that there exists $z\in \mathcal N_{X}^{(\mathcal R_2)}(x)$ (see Definition \ref{def:R_correct}) such that $z$ is incorrect.
  We proceed by contradiction: suppose that for every $z\in\mathcal N_{X}^{(\mathcal R_2)}(x)$ there exists a $\#_z\in\mathcal G$ such that $z$ is $\#_z$-correct.
  A priori, $\#_z$ may depend on $z$, but we will now show that $\#_z$ has to be the same for all $z\in\mathcal N_{X}^{(\mathcal R_2)}$.
  Indeed, let us assume that $z\neq x$, and consider the shortest path through $\Lambda_{\infty}$ that goes from $z$ to $x$.
  By the definition of $\mathcal N_{X}^{(\mathcal R_2)}(x)$, all of the Voronoi cells that this path goes through belong to particles in $\mathcal N_{X}^{(\mathcal R_2)}(x)$.
  In particular, this path goes through at least one neighbor $z'\in\mathcal N_{X}(z)$ satisfying $\mathrm d_{\Lambda_{\infty}}(z',x)<\mathrm d_{\Lambda_{\infty}}(z,x)$.
  Now, by Item \ref{asm:imperfect_transition} of Assumption \ref{assumption},
  \begin{equation}
    \#_{z'}=\#_z
    .
  \end{equation}
  By induction, this shows that $\#_z=\#_x$.
  And, because $z$ was chosen arbitrarily, this implies that, for all $z\in\mathcal N_{X}^{(\mathcal R_2)}(x)$, $\#_z=\#_x$, and, therefore, that $x$ is $(\#_x,\mathcal R_2)$-correct, a contradiction.

  Thus, there exists $z\in\mathcal N_{X}^{(\mathcal R_2)}(x)$ such that $z$ is incorrect.
  We choose the incorrect $z$ such that $\mathrm d_{\Lambda_{\infty}}(V_{X}(z),V_{X}(x))$ is minimal.
  Since $z$ is incorrect, it is also $\mathcal R_1$-incorrect.
  By Item \ref{asm:density_local_density} of Assumption \ref{assumption}, there exists $y$ such that $\mathrm d_{\Lambda_{\infty}}(z,y)\le\mathcal S_1$ and $\rho_{X}^{-1}(y)\ge\rho_{\mathrm{max}}^{-1}+\epsilon$.

  Finally, we prove that $\mathrm d_{\mathrm{\Lambda_{\infty}}}(x,z)\le\mathcal R_2+4r_{\mathrm{eff}}$.
  Indeed, if $z=x$, then this is obvious.
  Otherwise, consider the shortest path from $x$ to $z$.
  If this path crosses the Voronoi cell of a particle $z'$, then $z'\in\mathcal N_{X}^{(\mathcal R_2)}(x)$.
  Among the possible choices of $z'$, we choose the one that is closest to the particle at $z$, which must therefore neighbor the particle at $z$.
  In addition, since $z$ minimizes the distance between $V_{X}(z)$ and $V_{X}(x)$ among incorrect particles, $z'$ must be $\#_{z'}$-correct for some $\#_{z'}\in\mathcal G$.
  Therefore, $z$ is the neighbor of $z'$ in $\mathcal L^{\#_z'}$, and, by Lemma \ref{lemma:V_local},
  \begin{equation}
    \mathrm d_{\Lambda_{\infty}}(z,z')\le 2r_{\mathrm{eff}}
    .
  \end{equation}
  In addition, since $z\neq x$, $x$ is also $\#_{z'}$-correct, so
  \begin{equation}
    \mathrm d_{\Lambda_{\infty}}(x,z')\le \mathcal R_2+2r_{\mathrm{eff}}
  \end{equation}
  and thus
  \begin{equation}
    \mathrm d_{\Lambda_{\infty}}(x,z)\le\mathcal R_2+4r_{\mathrm{eff}}
    .
  \end{equation}
\end{proof}

\subsection{Peierls condition}

To prove the Peierls condition from Lemma \ref{lem:cost}, we first need to define the \emph{effective volume} of a GFc.

\begin{definition}[effective volume]
\label{def:effective_volume}
Define the effective volume of a GFc $\gamma=(\bar{\gamma},X_{\gamma},\mu_{\gamma})$ by starting with $\bar\gamma$ and removing the weights due to the particles that are outside the GFc but whose Voronoi cell intersects its support (see Definitions \ref{def:R_correct} and \ref{def:effective_particle} for notation):
\begin{equation}\label{eqn:effective_volume}
\norm{\bar{\gamma}}:=\sum_{\lambda\in\bar{\gamma}}\left[1-\sum_{\#}\sum_{x\in \mathcal{C}_{\#}^{(\mathcal R_2)}(X):\ \sigma^{\#}_{x}\ni\lambda}v^{\#}(\lambda)\right],
\end{equation}
where $X$ is any configuration of which $\gamma$ is a GFc.
We note that it is not difficult to verify that \eqref{eqn:effective_volume} is independent of the choice of $X$.
\end{definition}

For an example, see the GFc in Figure \ref{fig:GFc}, in which the sites on $\bar\gamma$ that contribute less than 1 are the uncovered sites that neighbor the thick black line.

The following simple bound on the effective volume will be useful.

\begin{lemma}\label{lem:effective_volume_naive_bound}
For any GFc $\gamma$, $\norm{\bar{\gamma}}\ge\mu\abs{\bar{\gamma}}$, where $\mu$ is as in Lemma \ref{lem:pointwiselowerbound}.
\end{lemma}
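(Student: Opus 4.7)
The plan is to establish the pointwise bound $1-\sum_{\#}\sum_{x\in\mathcal C^{(\mathcal R_2)}_\#(X):\sigma^\#_x\ni\lambda}v^\#(\lambda)\ge\mu$ for every $\lambda\in\bar\gamma$; summing over $\lambda\in\bar\gamma$ then yields the lemma directly from \eqref{eqn:effective_volume}. Fix such a $\lambda$. First I would observe that at most one ground-state index can contribute to the inner sum: if $x\in\mathcal C^{(\mathcal R_2)}_\#(X)$ with $\sigma^\#_x\ni\lambda$ and $y\in\mathcal C^{(\mathcal R_2)}_{\#'}(X)$ with $\sigma^{\#'}_y\ni\lambda$, then by Lemma \ref{lemma:V_local} both $V_X(\sigma_x)$ and $V_X(\sigma_y)$ contain $\lambda$, so $y\in\mathcal N_X(x)$, and Item \ref{asm:imperfect_transition} of Assumption \ref{assumption} forces $\#=\#'$. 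Call the surviving index $\#_\lambda$; if there is none, the sum is empty and the bracket equals $1\ge\mu$ trivially.

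Next, set $M:=|\{z\in\mathcal L^{\#_\lambda}\mid\sigma^{\#_\lambda}_z\ni\lambda\}|$ and $k:=|\{x\in\mathcal C^{(\mathcal R_2)}_{\#_\lambda}(X)\mid\sigma^{\#_\lambda}_x\ni\lambda\}|$. By \eqref{vsharp}, $v^{\#_\lambda}(\lambda)=1/M$, and the bracket equals $1-k/M$. Since $v^{\#_\lambda}(\lambda)\ge\mu$ by Lemma \ref{lem:pointwiselowerbound}, the pointwise bound would follow from $k\le M-1$, because then $1-k/M\ge 1/M=v^{\#_\lambda}(\lambda)\ge\mu$.

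Proving $k\le M-1$ is the main obstacle, and I would argue by contradiction. Assume $k=M$, so every $z\in\mathcal L^{\#_\lambda}$ with $\sigma^{\#_\lambda}_z\ni\lambda$ lies in $X$ and is $(\#_\lambda,\mathcal R_2)$-correct. By Definition \ref{def:gfc}, the inclusion $\lambda\in\bar\gamma$ supplies an $\mathcal R_2$-incorrect particle $x'\in X$ with $\lambda\in V_X(\sigma_{x'})$, and this $x'$ is necessarily distinct from every such $z$ (the $z$'s are $(\#_\lambda,\mathcal R_2)$-correct, hence not $\mathcal R_2$-incorrect). For each $z$, Lemma \ref{lemma:V_local} gives $V_X(\sigma_z)=\sigma^{\#_\lambda}_z\ni\lambda$, so $V_X(\sigma_z)$ and $V_X(\sigma_{x'})$ share $\lambda$, whence $x'\in\mathcal N_X(z)$ by Definition \ref{def:correct}. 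Since $z$ is $\#_\lambda$-correct, $\mathcal N_X(z)=\mathcal N_{\mathcal L^{\#_\lambda}}(z)\subseteq\mathcal L^{\#_\lambda}$, so $x'\in\mathcal L^{\#_\lambda}$. Moreover, the presence of the correct $z$'s in $X$ pins down $\min_{y\in X}d_{\Lambda_\infty}(\lambda,\sigma_y)=d_{\Lambda_\infty}(\lambda,\sigma_z)$, which by $\lambda\in\sigma^{\#_\lambda}_z$ equals the minimum distance from $\lambda$ to $\mathcal L^{\#_\lambda}$. Combined with $\lambda\in V_X(\sigma_{x'})$ and $x'\in\mathcal L^{\#_\lambda}$, this forces $\lambda\in\sigma^{\#_\lambda}_{x'}$, placing $x'$ among the $M$ correct positions, a contradiction.

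The crux is excluding the scenario in which an incorrect particle $x'$ coexists with a \emph{complete} shell of ground-state-correct particles around $\lambda$. The leverage is that $\#$-correctness under Definition \ref{def:correct} is defined through full Voronoi-cell adjacency in $X$, so a single stray non-$\mathcal L^{\#_\lambda}$ neighbor of any of the $z_i$ would already spoil their correctness; the only way out would place $x'$ back in $\mathcal L^{\#_\lambda}$ at one of the $M$ reference positions, which is incompatible with $x'$ being incorrect.
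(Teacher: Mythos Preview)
Your proof is correct and follows essentially the same approach as the paper: reduce to a pointwise bound, observe that at most one ground-state index contributes, write the bracket as $1-k/M$, and show $k\le M-1$ by exhibiting an $\mathcal R_2$-incorrect particle $x'$ whose reference Voronoi cell also contains $\lambda$. The only cosmetic differences are that the paper works with the canonical configuration $\xi_\gamma$ and argues directly that $M\ge k+1$ (invoking $(\#,\mathcal R_2)$-correctness via Definition~\ref{def:R_correct} to place $x'$ in $\mathcal L^\#$), whereas you use an arbitrary $X$, argue by contradiction, and derive $x'\in\mathcal L^{\#_\lambda}$ from plain $\#_\lambda$-correctness of $z$; your explicit verification that $\lambda\in\sigma^{\#_\lambda}_{x'}$ is a step the paper leaves implicit.
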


\begin{proof}
Recall the definition of the configuration $\xi_{\gamma}$ in \eqref{eqn:canonical_configuration}.
It suffices to show that the summand in 
\begin{equation}
\norm{\bar{\gamma}}=\sum_{\lambda\in\bar{\gamma}}\left[1-\sum_{\#}\sum_{x\in \mathcal{C}_{\#}^{(\mathcal R_2)}(\xi_{\gamma}):\ \sigma^{\#}_{x}\ni\lambda}v^{\#}(\lambda)\right],
\end{equation} 
is bounded below by $\mu$ for each $\lambda\in\bar{\gamma}$. 
If the double sum does not vanish, then exactly one ground state $\#$ contributes by Item \ref{asm:imperfect_transition} of Assumption \ref{assumption}.
In this case, the summand reduces to 
\begin{equation}
1-\sum_{x\in \mathcal{C}_{\#}^{(\mathcal R_2)}(\xi_{\gamma}):\ \sigma^{\#}_{x}\ni\lambda}v^{\#}(\lambda)
=1-\frac{\abs{\set{x\in \mathcal{C}_{\#}^{(\mathcal R_2)}(\xi_{\gamma})\mid\lambda\in \sigma_{x}^{\#}}}}{\abs{\set{z\in \mathcal{L}^{\#}\mid\lambda\in\sigma^{\#}_{z}}}}.
\end{equation}
Here, $\lambda$ is in the support of the reference Voronoi cells associated to a particle outside the GFc.
However, since $\lambda\in\bar\gamma$, $\lambda$ is also in the Voronoi cell of some particle in $\mathcal{I}^{(\mathcal R_2)}(\xi_{\gamma})$ located on the boundary of the GFc.
By Definition \ref{def:R_correct}, this particle is also in $\mathcal L^\#$.
Therefore,
\begin{equation}
  \abs{\set{z\in \mathcal{L}^{\#}\mid\lambda\in\sigma^{\#}_{z}}}\ge
  1+\abs{\set{x\in \mathcal{C}_{\#}^{(\mathcal R_2)}(\xi_{\gamma})\mid\lambda\in \sigma_{x}^{\#}}}
\end{equation}
and so
\begin{equation}
1-\frac{\abs{\set{x\in \mathcal{C}_{\#}^{(\mathcal R_2)}(\xi_{\gamma})\mid\lambda\in \sigma_{x}^{\#}}}}{\abs{\set{z\in \mathcal{L}^{\#}\mid\lambda\in\sigma^{\#}_{z}}}}
\ge\frac{1}{\abs{\set{z\in \mathcal{L}^{\#}\mid\lambda\in\sigma^{\#}_{z}}}}
\ge\mu.
\end{equation}
\end{proof}

We are now ready to state the Peierls condition.

\begin{lemma}[Peierls condition]\label{lem:Peierls}
The Peierls condition is satisfied: defining
\begin{equation}
  \rho_{0}:=\frac1{\rho_{\mathrm{max}}^{-1}+\frac\epsilon{\mathcal N}}
  \label{rho0}
\end{equation}
where $\epsilon$ appears in Item \ref{asm:density_local_density} of Assumption \ref{assumption} and $\mathcal N$ is the number of particles inside the ball of radius $\mathcal S_0$ (see Lemma \ref{lem:cost}):
\begin{equation}
  \mathcal N:=\max_{X\in\Omega(\Lambda_{\infty})}\abs{
  \set{x\in X\mid\mathrm d_{\Lambda_{\infty}}(x,\mathbf 0)\le\mathcal S_0}
  },
  \label{calN}
\end{equation}
we have, for any GFc $\gamma$,
\begin{equation}\label{eqn:Peierls_condition}
\abs{X_{\gamma}}\le\rho_{0}\norm{\bar{\gamma}}.
\end{equation}
\end{lemma}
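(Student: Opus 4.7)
The strategy is to work in the canonical configuration $\xi_\gamma$ from Lemma \ref{lem:R_large_valid_configuration}. By Proposition \ref{prop:canonical_configuration_gfc_correspondence}, $\gamma$ is the unique GFc of $\xi_\gamma$, so the $\mathcal R_2$-incorrect particles of $\xi_\gamma$ are exactly $X_\gamma$, and every $y \in \xi_\gamma \setminus X_\gamma$ is $\#_y$-correct for the label of its component; hence $V_{\xi_\gamma}(\sigma_y) = \sigma_y^{\#_y}$ by Lemma \ref{lemma:V_local} and $\rho_{\xi_\gamma}(y) = \rho_{\mathrm{max}}$ by Lemma \ref{lemma:rholoc_cst}. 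I plan to establish the chain
\[
  |X_\gamma|\rho_0^{-1} \;\le\; \sum_{y \in X_\gamma} \rho_{\xi_\gamma}^{-1}(y) \;\le\; \|\bar\gamma\|,
\]
which rearranges into \eqref{eqn:Peierls_condition}.

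For the lower bound, Lemma \ref{lem:cost} applied to $\xi_\gamma$ assigns to each $x \in X_\gamma$ a particle $y(x) \in \xi_\gamma$ with $\mathrm d_{\Lambda_\infty}(x, y(x)) \le \mathcal S_0$ and $\rho_{\xi_\gamma}^{-1}(y(x)) \ge \rho_{\mathrm{max}}^{-1} + \epsilon$. Because any $\#$-correct particle has local density $\rho_{\mathrm{max}}$, $y(x)$ must itself be incorrect, hence $\mathcal R_2$-incorrect, hence $y(x) \in X_\gamma$. The map $x \mapsto y(x)$ has fibers of size at most $\mathcal N$ by \eqref{calN}, so its image $Y$ satisfies $|Y| \ge |X_\gamma|/\mathcal N$. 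Combining this $\epsilon$-excess on $Y$ with the universal bound $\rho_{\xi_\gamma}^{-1}(y) \ge \rho_{\mathrm{max}}^{-1}$ (Item \ref{asm:density_max_local} of Assumption \ref{assumption}),
\[
  \sum_{y \in X_\gamma} \rho_{\xi_\gamma}^{-1}(y) \;\ge\; |X_\gamma|\rho_{\mathrm{max}}^{-1} + |Y|\epsilon \;\ge\; |X_\gamma|\Bigl(\rho_{\mathrm{max}}^{-1} + \frac{\epsilon}{\mathcal N}\Bigr) \;=\; |X_\gamma|\rho_0^{-1}.
\]

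For the upper bound, the partition identity $\sum_{z: V_{\xi_\gamma}(\sigma_z) \ni \lambda} v_{\xi_\gamma}(\lambda) = 1$, where $v_{\xi_\gamma}(\lambda) := 1/|\{z \in \xi_\gamma: \lambda \in V_{\xi_\gamma}(\sigma_z)\}|$, combined with $V_{\xi_\gamma}(\sigma_y) \subseteq \bar\gamma$ for $y \in X_\gamma$, yields after swapping the order of summation
\[
  \sum_{y \in X_\gamma} \rho_{\xi_\gamma}^{-1}(y) \;=\; |\bar\gamma| - \sum_{y \in \xi_\gamma \setminus X_\gamma}\,\sum_{\lambda \in \sigma_y^{\#_y} \cap \bar\gamma} v_{\xi_\gamma}(\lambda),
\]
while $\|\bar\gamma\| = |\bar\gamma| - \sum_\# \sum_{x \in \mathcal C_\#^{(\mathcal R_2)}(\xi_\gamma)} \sum_{\lambda \in \sigma_x^\# \cap \bar\gamma} v^\#(\lambda)$. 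Since $\mathcal C_\#^{(\mathcal R_2)}(\xi_\gamma) \subseteq \xi_\gamma \setminus X_\gamma$ and the extra terms in the first correction (from particles in $\xi_\gamma \setminus X_\gamma$ that are $\#$-correct but not $(\#, \mathcal R_2)$-correct) are non-negative, it suffices to show the pointwise inequality $v_{\xi_\gamma}(\lambda) \ge v^\#(\lambda)$ on $\sigma_x^\# \cap \bar\gamma$ for every $(\#, \mathcal R_2)$-correct $x \in \xi_\gamma$.

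The main step is this weight comparison, which I expect to reduce cleanly to $(\#, \mathcal R_2)$-correctness: if $\lambda \in \sigma_x^\# = V_{\xi_\gamma}(\sigma_x)$ and $z \in \xi_\gamma$ satisfies $\lambda \in V_{\xi_\gamma}(\sigma_z)$, then the cells $V_{\xi_\gamma}(\sigma_x)$ and $V_{\xi_\gamma}(\sigma_z)$ share $\lambda$, so $z$ is a $0$-neighbor of $x$ and thus an $\mathcal R_2$-neighbor; by $(\#, \mathcal R_2)$-correctness of $x$, $z$ is $\#$-correct, hence $z \in \mathcal L^\#$ with $V_{\xi_\gamma}(\sigma_z) = \sigma_z^\#$ (Lemma \ref{lemma:V_local}). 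Therefore $\{z \in \xi_\gamma : \lambda \in V_{\xi_\gamma}(\sigma_z)\} \subseteq \{z \in \mathcal L^\# : \lambda \in \sigma_z^\#\}$, which delivers $v_{\xi_\gamma}(\lambda) \ge v^\#(\lambda)$. The principal bookkeeping hurdle is to handle carefully the case in which the same $\lambda \in \bar\gamma$ lies simultaneously in reference cells associated to different ground states, but Lemma \ref{lemma:unique_Rcorrect} ensures this does not occur near any $(\#, \mathcal R_2)$-correct $x$. Combining the upper and lower bounds then gives \eqref{eqn:Peierls_condition}.
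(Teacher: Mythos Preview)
Your proposal is correct and follows essentially the same approach as the paper: work in the canonical configuration $\xi_\gamma$, use Lemma~\ref{lem:cost} together with the fiber bound $|Y|\ge|X_\gamma|/\mathcal N$ for the lower bound, and relate $\sum_{x\in X_\gamma}\rho_{\xi_\gamma}(x)^{-1}$ to $\|\bar\gamma\|$. The one difference is that the paper records the stronger fact $\|\bar\gamma\|=\sum_{x\in X_\gamma}\rho_{\xi_\gamma}(x)^{-1}$ as an identity (your weight comparison $v_{\xi_\gamma}(\lambda)\ge v^\#(\lambda)$ is in fact an equality, and your ``extra terms'' are vacuous since $\xi_\gamma\setminus X_\gamma=\bigcup_\#\mathcal C_\#^{(\mathcal R_2)}(\xi_\gamma)$), so your upper-bound step can be shortened to a one-line observation.
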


\begin{proof}
Given a GFc $\gamma$, consider the associated configuration $\xi_{\gamma}$ defined in \eqref{eqn:canonical_configuration}.
By Proposition \ref{prop:canonical_configuration_gfc_correspondence},$X_{\gamma}\subset \xi_{\gamma}$ coincides with the set $\mathcal{I}^{(\mathcal R_2)}(\xi_{\gamma})$ of $\mathcal R_2$-incorrect particles in $\xi_{\gamma}$. 
Hence, by \eqref{eqn:local_density},
\begin{equation}
\norm{\bar{\gamma}}=\sum_{x\in X_{\gamma}}\rho_{\xi_{\gamma}}(x)^{-1}.
\end{equation}
By Lemma \ref{lem:cost}, for every $x\in X_\gamma$, there exists a ball of radius $\mathcal S_0$ where at least one of the particles satisfies \eqref{ineqrho}.
In addition, $X_\gamma$ will contain at least $\abs{X_\gamma}/\mathcal N$ such balls.
Therefore,
\begin{equation}
\norm{\bar{\gamma}}\ge\frac{\abs{X_{\gamma}}}{\mathcal{N}}(\rho_{\max}^{-1}+\epsilon)+\left(\abs{X_{\gamma}}-\frac{\abs{X_{\gamma}}}{\mathcal{N}}\right)\rho_{\max}^{-1}=\left(\rho_{\max}^{-1}+\frac{\epsilon}{\mathcal{N}}\right)\abs{X_{\gamma}}.
\end{equation}
Therefore, the Peierls condition is satisfied with 
\begin{equation}
\rho_{0}:=\frac{1}{\rho_{\max}^{-1}+\frac{\epsilon}{\mathcal{N}}}<\rho_{\max}.\qedhere
\end{equation}
\end{proof}

\subsection{Proof of Lemma \ref{lem:equiv_assum}}\label{sec:equiv_assumption}
To close this section, let us prove Lemma \ref{lem:equiv_assum}, which provides an alternative to Item \ref{asm:density_local_density} of Assumption \ref{assumption}.
First, we introduce a few definitions.

\begin{definition}
\label{def:distance-between-configurations}
Denote by $\triangle$ the symmetric difference of two sets.
Define a metric $d$ on the space $\Omega(\Lambda_{\infty})$ of all particle configurations by
\begin{equation}
d(X,Y):=\sum_{\lambda\in X\triangle Y}2^{-\mathrm{d}_{\Lambda_{\infty}}(\lambda,\vec{0})}.
\end{equation}
\end{definition}

Our choice of the metric $d$ is fairly standard: under this metric, \emph{two configurations are close to each other if they coincide on a large region containing the origin}  \cite[\S6.4.1]{Friedli2017}.
Formally, we characterize convergence under this metric in the following manner, which is straightforward to verify:

\begin{lemma}\label{lem:pointwiseconvergence}
Given $X\in\Omega(\Lambda_{\infty})$ and a sequence $\set{X_{n}}\subset\Omega(\Lambda_{\infty})$, the following are equivalent:
\begin{enumerate}
\item $d(X_{n},X)\rightarrow 0$ (we also write $X_{n}\rightarrow X$);
\item $\indicator{X_{n}}(\lambda)\rightarrow\indicator{X}(\lambda)$ for all $\lambda\in\Lambda_{\infty}$;
\item $\indicator{X_{n}}\rightarrow\indicator{X}$ on all finite $\Lambda\Subset\Lambda_{\infty}$.
\end{enumerate}
In addition,
$\Omega(\Lambda_{\infty})$ is sequentially compact under the metric topology induced by $d$.
\end{lemma}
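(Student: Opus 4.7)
The plan is to handle the three equivalences first and then deduce sequential compactness from (2) by a diagonal argument, exploiting the fact that $\Lambda_\infty$ is countable (since it is a periodic graph in $\mathbb R^d$) and that the hard-core constraint passes to pointwise limits.

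For the equivalences, I would first dispatch (2) $\Leftrightarrow$ (3): since $\mathbb 1_{X_n}(\lambda),\mathbb 1_X(\lambda)\in\{0,1\}$, pointwise convergence at $\lambda$ means eventually $\lambda\in X_n\iff\lambda\in X$; restricted to any finite $\Lambda\Subset\Lambda_\infty$ this gives eventual coincidence, which is exactly (3). The implication (1) $\Rightarrow$ (2) is immediate: if $d(X_n,X)\to 0$, then each nonnegative term $2^{-\mathrm d_{\Lambda_\infty}(\lambda,\vec 0)}\mathbb 1_{\lambda\in X_n\triangle X}$ tends to zero, forcing $\mathbb 1_{X_n}(\lambda)=\mathbb 1_X(\lambda)$ eventually. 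For (2) $\Rightarrow$ (1), I would note that Item \ref{asm:lattice} of Assumption \ref{assumption} (finite maximal coordination plus embedding in $\mathbb R^d$) ensures the number of vertices at graph distance $r$ from $\vec 0$ grows at most polynomially in $r$, so the series $\sum_{\lambda\in\Lambda_\infty}2^{-\mathrm d_{\Lambda_\infty}(\lambda,\vec 0)}$ converges. Given $\varepsilon>0$, choose a finite $\Lambda_0\Subset\Lambda_\infty$ so that the tail $\sum_{\lambda\notin\Lambda_0}2^{-\mathrm d_{\Lambda_\infty}(\lambda,\vec 0)}<\varepsilon/2$; by (2) applied on the finite set $\Lambda_0$, the head sum vanishes for $n$ large, yielding $d(X_n,X)<\varepsilon$.

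For sequential compactness, I would enumerate $\Lambda_\infty=\{\lambda_1,\lambda_2,\ldots\}$ and, given $\{X_n\}\subset\Omega(\Lambda_\infty)$, perform a standard diagonal extraction: since $\{\mathbb 1_{X_n}(\lambda_k)\}_n\subset\{0,1\}$ is a bounded integer sequence, we can iteratively extract nested subsequences along which $\mathbb 1_{X_n}(\lambda_k)$ stabilizes for each fixed $k$, and the diagonal subsequence $X_{n_j}$ satisfies $\mathbb 1_{X_{n_j}}(\lambda)\to f(\lambda)$ for every $\lambda\in\Lambda_\infty$. Set $X:=\{\lambda\in\Lambda_\infty\mid f(\lambda)=1\}$. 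The final step is to check $X\in\Omega(\Lambda_\infty)$: if $x\ne x'\in X$, then for $j$ sufficiently large both $x,x'\in X_{n_j}$, so the hard-core constraint $\omega_x\cap\omega_{x'}=\emptyset$ (inherited from $X_{n_j}\in\Omega(\Lambda_\infty)$) is satisfied in the limit. The equivalence (2) $\Rightarrow$ (1) then upgrades this pointwise convergence to convergence in the metric $d$.

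The proof is essentially routine, with no real obstacles. The one point to verify carefully is the convergence of the series defining $d$, which relies on the polynomial volume growth that follows from the periodic-embedding hypothesis in Item \ref{asm:lattice} of Assumption \ref{assumption}; everything else is the standard reduction of $d$-topology to pointwise convergence on a countable product of finite sets and an application of Tychonoff's theorem in disguise via diagonal extraction.
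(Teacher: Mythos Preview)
Your proof is correct. The paper does not actually prove this lemma: it states the result and declares it ``straightforward to verify,'' so there is no argument to compare yours against. Your write-up supplies exactly the routine details the paper omits, and the one nontrivial point you flag---that $\sum_{\lambda\in\Lambda_\infty}2^{-\mathrm d_{\Lambda_\infty}(\lambda,\vec 0)}<\infty$ requires the polynomial volume growth coming from periodicity in $\mathbb R^d$---is indeed what makes $d$ a genuine metric and makes (2)~$\Rightarrow$~(1) work.
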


We can now prove Lemma \ref{lem:equiv_assum}.

\begin{proof}[Proof of Lemma \ref{lem:equiv_assum}]
First of all, let us prove that Item \ref{asm:density_local_density} implies \eqref{Ggm}.
Given $\#\in\mathcal G$, by Lemma \ref{lemma:rholoc_cst}, $\rho_{\mathcal L^\#}(x)=\rho_{\mathrm{max}}$, so $\mathcal L^\#\in g_m$.
Now, given $X$ in $g_m$, we prove that all particles are $(\#,\mathcal R_1)$-correct for the same $\#$.
By Item \ref{asm:density_local_density}, no particle can be $\mathcal R_1$-incorrect, otherwise $X\not\in g_m$.
Therefore, every $x\in X$ is $(\#_x,\mathcal R_1)$-correct.
However, by Item \ref{asm:imperfect_transition}, all $\#_x$ must be the same, and therefore, all particles are $(\#,\mathcal R_1)$-correct for some $\#$.
In particular, every particle is in $\mathcal L^\#$, and, since the local density is maximal, $X=\mathcal L^\#$.

We now turn to the more difficult direction, namely that \eqref{Ggm} implies Item \ref{asm:density_local_density}.

We proceed by contradiction, assuming that Item \ref{asm:density_local_density} does not hold.
In other words, there exists a sequence $\set{X_{n}}\subset\Omega(\Lambda_{\infty})$ such that, for each $n$, there exists an $\mathcal{R}_{0}$-incorrect $x_{n}\in X_{n}$ for which all $y\in X_{n}$ with $\mathrm{d}_{\Lambda_{\infty}}(y,x_{n})\le n$ are such that 
\begin{equation}\label{eqn:inverselocaldensitybound}
\rho_{X_{n}}(y)^{-1}\le\rho_{\max}^{\loc\ -1}+\frac{1}{n}.
\end{equation}
By translation invariance, we can assume that $x_{n}=\vec{0}$ for all $n$. 
By Lemma \ref{lem:pointwiseconvergence}, there exists a subsequence $\set{X_{n_{k}}}$ and a configuration $X\in\Omega(\Lambda_{\infty})$ such that $X_{n_{k}}\rightarrow X$.
\medskip

Let us first prove that $X\in g_{m}$.
Let $x\in X$.
For all sufficiently large $k$, we have that $x\in X_{n_{k}}$ by Lemma \ref{lem:pointwiseconvergence}, and
\begin{equation}\label{eqn:inverselocaldensityboundspecialized}
\rho_{X_{n_{k}}}(x)^{-1}\le\rho_{\max}^{\loc\ -1}+\frac{1}{n_{k}}
\end{equation}
by setting $(y,x_{n_{k}})=(x,\vec{0})$ in \eqref{eqn:inverselocaldensitybound}. Decomposing 
\begin{equation}
\Lambda_{\infty}=\bigsqcup_{m=0}^{\infty} S_{m},\text{ where }S_{m}:=\set{\lambda\in\Lambda_{\infty}\mid\mathrm{d}_{\Lambda_{\infty}}(\lambda,\vec{0})=m},
\end{equation}
we have, by Fatou's lemma,
\begin{equation}\label{eqn:fatou}
\sum_{m=0}^{\infty}\left(\liminf_{k\rightarrow\infty}\sum_{\lambda\in S_{m}}\frac{\indicator{V_{X_{n_{k}}}(\sigma_{x})}(\lambda)}{\abs{\set{z\in X_{n_{k}}\mid\lambda\in V_{X_{n_{k}}}(\sigma_{z})}}}\right)\le\liminf_{k\rightarrow\infty}\left(\sum_{m=0}^{\infty}\sum_{\lambda\in S_{m}}\frac{\indicator{V_{X_{n_{k}}}(\sigma_{x})}(\lambda)}{\abs{\set{z\in X_{n_{k}}\mid\lambda\in V_{X_{n_{k}}}(\sigma_{z})}}}\right).
\end{equation}
By \eqref{eqn:inverselocaldensityboundspecialized}, we have that
\begin{equation}\label{eqn:rhs}
\RHS=\liminf_{k\rightarrow\infty}\rho_{X_{n_{k}}}(x)^{-1}\le\rho_{\max}^{\loc\ -1}.
\end{equation}
To study the LHS, notice first that, for a fixed $\lambda\in S_{m}$, the value of the indicator function is determined locally, that is, by the restriction of the configuration $X_{n_{k}}$ to the finite region 
\begin{equation}
\Lambda_{1}:=\set{\mu\in\Lambda_{\infty}\mid\mathrm{d}_{\Lambda_{\infty}}(\lambda,\sigma_{\mu})<\mathrm{d}_{\Lambda_{\infty}}(\lambda,\sigma_{x})}.
\end{equation}
Hence, by Lemma \ref{lem:pointwiseconvergence}, the value of the indicator function converges to $\indicator{V_{X}(\sigma_{x})}(\lambda)$ as $k\rightarrow\infty$. 
If $\lambda\in V_{X}(\sigma_{x})$, notice again that the denominator is determined locally, this time by the restriction of $X_{n_{k}}$ to the finite region 
\begin{equation}
\Lambda_{2}:=\set{\mu\in\Lambda_{\infty}\mid\mathrm{d}_{\Lambda_{\infty}}(\lambda,\sigma_{\mu})=\mathrm{d}_{\Lambda_{\infty}}(\lambda,\sigma_{x})}.
\end{equation}
Thus, by Lemma \ref{lem:pointwiseconvergence}, the denominator converges to $\abs{\set{z\in X\mid \lambda\in V_{X}(\sigma_{z})}}$ as $k\rightarrow\infty$. 
Since $S_{m}$ is finite for each $m$, 
\begin{equation}\label{eqn:lhs}
\LHS=\sum_{m=0}^{\infty}\sum_{\lambda\in S_{m}}\frac{\indicator{V_{X}(\sigma_{x})}(\lambda)}{\abs{\set{z\in X\mid\lambda\in V_{X}(\sigma_{z})}}}=\rho_{X}(x)^{-1}.
\end{equation}
Combining \eqref{eqn:fatou}, \eqref{eqn:rhs}, and \eqref{eqn:lhs} with the obvious bound $\rho_{X}(x)^{-1}\ge\rho_{\max}^{\loc\ -1}$, we deduce that
\begin{equation}
  \rho_{X}(x)=\rho_{\max}^{\loc},
\end{equation}
and, therefore, that $X\in g_{m}$.
\medskip

Since $\vec{0}\in X_{n_{k}}$ for all $k$, we have $\vec{0}\in X$ by Lemma \ref{lem:pointwiseconvergence}. 
By \eqref{Ggm}, $X$ is a ground state, so $\vec{0}\in X$ is $(X,\mathcal{R}_{0})$-correct.
Since $(X,\mathcal{R}_{0})$-correctness is a local property, we conclude from Lemma \ref{lem:pointwiseconvergence} that $\vec{0}\in X_{n_{k}}$ is $(X,\mathcal{R}_{0})$-correct for large enough $k$, a contradiction.
\end{proof}

\section{High-fugacity expansion}\label{sec:Pirogov_Sinai}

In this section, we prove Theorems \ref{thm:analyticity} and \ref{thm:crystallization}.
Our argument consists of two main steps: converting the particle model into a GFc model, and subsequently evaluating the partition functions using cluster expansion techniques within the framework of Pirogov-Sinai theory.

\subsection{Passing to a GFc model}\label{subsec:GFc_model}

We begin by restating the boundary condition (Definition \ref{def:boundary_condition_correctness}) in terms of GFc's. 
That the two formulations of boundary conditions are equivalent is clear from the construction.

\begin{definition}[boundary condition]
Given a ground state $\#$ and a finite region $\Lambda\Subset\Lambda_{\infty}$, define the set of configurations in $\Lambda$ subject to the boundary condition $\#$ as 
\begin{equation}
\begin{multlined}
\Omega^{\#}(\Lambda):=\{X\in\Omega(\Lambda_{\infty})\mid X\setminus\Lambda=\mathcal{L}^{\#}\setminus\Lambda,\text{ all GFc's associated to $X$ are contained in $\Lambda$}
\\
\text{and at distance $\mathrm{d}_{\Lambda_{\infty}}\ge1$ from $\partial^{\text{in}}\Lambda$}\}.
\end{multlined}
\end{equation}
\end{definition}

We will convert our particle model into a GFc model by rewriting the partition functions in terms of certain \emph{weights} of the GFc's.
During the process, we will make use of the following standard notions in Pirogov-Sinai-type arguments:
\begin{enumerate}
\item for each ground state $\#$, define $\mathcal{C}^{\#}:=\cup_{\Lambda\Subset\Lambda_{\infty}}\mathcal{C}^{\#}(\Lambda)$ as the set of all GFc's \emph{of type $\#$};
\item two GFc's $\gamma_{1},\gamma_{2}$ of the same type are said to be \emph{compatible} if and only if $\mathrm{d}_{\Lambda_{\infty}}(\bar{\gamma}_{1},\bar{\gamma}_{2})>1$;
\item define the \emph{support} of a collection $\Gamma$ of GFc's as $\bar{\Gamma}:=\cup_{\gamma\in\Gamma}\bar{\gamma}$, that is, the union of the supports of all the GFc's contained in $\Gamma$;
\item given a GFc $\gamma$, define $\interior_{\#'}\gamma$, called the interior of $\gamma$ of type $\#'$, as the union of the holes in $\bar{\gamma}$ (cf. Definition \ref{def:gfc}) that are assigned the label $\#'$ by $\mu_{\gamma}$.
\end{enumerate}

As we will see, to evaluate the correlation functions as in Theorem \ref{thm:crystallization} requires us to differentiate the partition functions with respect to the fugacity $z$.
To this end, it will be convenient to allow variable fugacities, which we encode in a single \emph{fugacity function} $\mathbf{z}:\Lambda_{\infty}\rightarrow(0,\infty)$.
Accordingly, for each ground state $\#$, we define the (grand canonical) partition function with boundary condition $\#$ and fugacity function $\mathbf{z}$ as
\begin{equation}\label{eqn:partitionfunctiondefinition}
\Xi^{\#}_{\mathbf{z}}(\Lambda):=\sum_{X\in\Omega^{\#}(\Lambda)}\prod_{x\in X\cap\Lambda}\mathbf{z}(x).
\end{equation}
Finally, given a ground state $\#$ and a finite region $\Lambda\Subset\Lambda_{\infty}$, we write $\mathbf{z}^{\#}(\Lambda):=\prod_{x\in \mathcal{L}^{\#}\cap\Lambda}\mathbf{z}(x)$.

\begin{proposition}[GFc model]\label{prop:GFc}
Define the weight $w^{\#}_{\mathbf{z}}(\gamma)$ of a GFc $\gamma\in \mathcal{C}^{\#}$ as
\begin{equation}
\label{eqn:weightdefinition}
w^{\#}_{\mathbf{z}}(\gamma)
:=\frac{\prod_{x\in X_{\gamma}}\mathbf{z}(x)}{\prod_{x\in (\mathcal{L}^{\#}\cap\bar{\gamma})}\mathbf{z}(x)}\prod_{\#'}\frac{\Xi_{\mathbf{z}}^{\#'}(\interior_{\#'}\gamma)}{\Xi^{\#}_{\mathbf{z}}(\interior_{\#'}\gamma)}.
\end{equation} 
Then,
\begin{equation} \label{eqn:gfc_partition_function}
\frac{\Xi^{\#}_{\mathbf{z}}(\Lambda)}{\mathbf{z}^{\#}(\Lambda)}=\sum_{\substack{\Gamma\subseteq \mathcal{C}^{\#}(\Lambda):\\\text{compatible}}}\prod_{\gamma\in\Gamma}w^{\#}_{\mathbf{z}}(\gamma).
\end{equation}
\end{proposition}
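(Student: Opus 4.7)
The plan is to prove \eqref{eqn:gfc_partition_function} by induction on $|\Lambda|$, following the classical Pirogov-Sinai contour decomposition augmented with the Minlos-Sinai flipping trick. The flipping is needed because a GFc nested inside another naturally carries an exterior label $\#'$ that may differ from the overall label $\#$ of the ambient family $\mathcal C^\#(\Lambda)$.

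For the inductive step I would first classify each $X\in\Omega^\#(\Lambda)$ by its family $\{\gamma_1,\ldots,\gamma_n\}$ of \emph{external} GFc's, namely those whose unbounded component of $\bar\gamma^c$ meets $\partial^{\mathrm{in}}\Lambda$; all such GFc's carry exterior label $\#$. Once the external family is fixed, $\Lambda$ decomposes disjointly into the exterior $E:=\Lambda\setminus\bigcup_i\bigl(\bar\gamma_i\cup\bigcup_j\interior_{\#_i^{(j)}}\gamma_i\bigr)$, on which $X$ is forced to equal $\mathcal L^\#\cap E$; the supports $\bar\gamma_i$, on which $X$ is the prescribed $X_{\gamma_i}$; and each hole $\interior_{\#_i^{(j)}}\gamma_i$, inside which $X$ ranges freely over $\Omega^{\#_i^{(j)}}(\interior_{\#_i^{(j)}}\gamma_i)$. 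The forcing on $E$ uses that every particle in $E$ is $(\#_z,\mathcal R_2)$-correct for some $\#_z$, that Item \ref{asm:imperfect_transition} of Assumption \ref{assumption} propagates $\#_z$ through the connected $E$, and that the external GFc's together with the boundary condition pin $\#_z=\#$. This yields the factorization
\begin{equation*}
\Xi^\#_{\mathbf z}(\Lambda)=\sum_{\{\gamma_i\}\ \mathrm{ext.}}\mathbf z^\#(E)\prod_i\prod_{x\in X_{\gamma_i}}\mathbf z(x)\prod_{i,j}\Xi^{\#_i^{(j)}}_{\mathbf z}\bigl(\interior_{\#_i^{(j)}}\gamma_i\bigr).
\end{equation*}

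Next I would divide by $\mathbf z^\#(\Lambda)$, expand $\mathbf z^\#(E)/\mathbf z^\#(\Lambda)$ via multiplicativity of $\mathbf z^\#$ over disjoint sets, and apply the Minlos-Sinai identity $\Xi^{\#'}_{\mathbf z}/\mathbf z^\#=(\Xi^{\#'}_{\mathbf z}/\Xi^\#_{\mathbf z})\cdot(\Xi^\#_{\mathbf z}/\mathbf z^\#)$ on each interior factor. The swap factor $\Xi^{\#'}_{\mathbf z}/\Xi^\#_{\mathbf z}$ combines with the $X_{\gamma_i}$ and $\mathbf z^\#(\bar\gamma_i)$ pieces to reproduce exactly $w^\#_{\mathbf z}(\gamma_i)$ of \eqref{eqn:weightdefinition}, while the leftover $\Xi^\#_{\mathbf z}(\interior_{\#_i^{(j)}}\gamma_i)/\mathbf z^\#(\interior_{\#_i^{(j)}}\gamma_i)$ is the left-hand side of the proposition on a strictly smaller region. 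Invoking the inductive hypothesis on each interior, and noting that attaching one compatible inner family per hole to the fixed external family is a bijection onto all compatible $\Gamma\subseteq\mathcal C^\#(\Lambda)$, closes the induction. The base case is a $\Lambda$ too small to contain any GFc, where $\Omega^\#(\Lambda)=\{\mathcal L^\#\}$ and both sides equal $1$.

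The main obstacle I anticipate is the geometric and combinatorial bookkeeping sustaining the bijection between configurations and contour families: verifying that $E$ cannot conceal any further GFc's (via Proposition \ref{prop:canonical_configuration_gfc_correspondence}) and that its ground-state label is uniformly $\#$ (via Item \ref{asm:imperfect_transition}); ensuring that the inner configurations on distinct holes vary independently; and carrying out the Minlos-Sinai substitution with enough care that each inner boundary is converted from its natural type-$\#'$ form to the canonical type-$\#$ form required for the recursion to close on $\mathcal C^\#$ without over- or under-counting.
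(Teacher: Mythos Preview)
Your proposal is correct and follows essentially the same route as the paper: decompose $X$ by its external GFc's, factorize the partition function over the exterior, the supports, and the holes, divide by $\mathbf z^\#(\Lambda)$, apply the Minlos--Sinai trick to convert each interior ratio into the weight times a $\#$-type ratio, and recurse. The paper phrases the last step as a terminating recursion rather than an induction on $|\Lambda|$, and is somewhat terser about the bookkeeping you flag (forcing on $E$, independence of holes, the bijection between nested families and compatible $\Gamma$), but the argument is the same.
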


\begin{proof}
Following a standard strategy in Pirogov-Sinai theory, we will first map the particle model into a model of external GFc's and then apply a recursion argument.

Let $\gamma_{1},\dots,\gamma_n$ be the GFc's associated to a configuration $X\in\Omega^{\#}(\Lambda)$.
By construction, these GFc's are mutually disconnected: $\mathrm{d}_{\Lambda_{\infty}}(\bar{\gamma}_{i},\bar{\gamma}_{j})>1$ whenever $i\ne j$.
Define the \emph{interior} of a GFc $\gamma_{i}$ as $\interior\gamma_{i}:=\cup_{\#'}\interior_{\#'}\gamma_{i}$.
We say that a GFc $\gamma_{i}$ is \emph{external} (relative to the collection $\set{\gamma_{1},\dots,\gamma_n}$) if $\gamma_{i}$ is not contained in the interior of any other GFc $\gamma_{j}$. 
Notice that any external GFc in the above collection is necessarily of type $\#$.

Conversely, given any compatible collection $\Gamma\subseteq \mathcal{C}^{\#}(\Lambda)$ of external GFc's (relative to $\Gamma$), there exist configurations $X\in\Omega^{\#}(\Lambda)$ from which the GFc's in $\Gamma$ are the only external GFc's that arise.
The general form of such configurations is given by 
\begin{equation}\label{eqn:configurationform}
X=\left(\mathcal{L}^{\#}\setminus\bigcup_{\gamma\in\Gamma}\Int\gamma\right)\cup\bigcup_{\gamma\in\Gamma}\left[X_{\gamma}\cup\bigcup_{\#'}\left(X^{\#'}_{\interior_{\#'}\gamma}\cap\interior_{\#'}\gamma\right)\right],
\end{equation} 
where we use the notation $\Int\gamma:=\bar{\gamma}\cup\bigcup_{\#'}\interior_{\#'}\gamma$.

Therefore, we can write
\begin{equation}
\Xi_{\mathbf{z}}^{\#}(\Lambda)=\sum_{\substack{\Gamma\subseteq \mathcal{C}^{\#}(\Lambda):\\\text{compatible}\\\text{external}}}\left(\prod_{x\in(\mathcal{L}^{\#}\cap\Lambda)\setminus\bigcup_{\gamma\in\Gamma}\Int\gamma}\mathbf{z}(x)\right)\left[\prod_{\gamma\in\Gamma}\left(\prod_{x\in X_{\gamma}}\mathbf{z}(x)\right)\left(\prod_{\#'}\Xi^{\#'}_{\mathbf{z}}(\interior_{\#'}\gamma)\right)\right].
\end{equation}
Dividing both sides by $\mathbf{z}^{\#}(\Lambda)$, we get
\begin{equation}
\frac{\Xi_{\mathbf{z}}^{\#}(\Lambda)}{\mathbf{z}^{\#}(\Lambda)}=\sum_{\substack{\Gamma\subseteq \mathcal{C}^{\#}(\Lambda):\\\text{compatible}\\\text{external}}}\left[\prod_{\gamma\in\Gamma}\left(\frac{\prod_{x\in X_{\gamma}}\mathbf{z}(x)}{\prod_{x\in (\mathcal{L}^{\#}\cap\bar{\gamma})}\mathbf{z}(x)}\right)\left(\prod_{\#'}\frac{\Xi^{\#'}_{\mathbf{z}}(\interior_{\#'}\gamma)}{\mathbf{z}^{\#}(\interior_{\#'}\gamma)}\right)\right].
\end{equation}
Applying the Minlos-Sinai trick \cite{Friedli2017}, we get
\begin{equation}
\frac{\Xi_{\mathbf{z}}^{\#}(\Lambda)}{\mathbf{z}^{\#}(\Lambda)}=\sum_{\substack{\Gamma\subseteq \mathcal{C}^{\#}(\Lambda):\\\text{compatible}\\\text{external}\\}}\prod_{\gamma\in\Gamma}\left(w^{\#}_{\mathbf{z}}(\gamma)\prod_{\#'}\frac{\Xi^{\#}_{\mathbf{z}}(\interior_{\#'}\gamma)}{\mathbf{z}^{\#}(\interior_{\#'}\gamma)}\right).
\end{equation} 

The computation can then be repeated for the inner ratios.
The recursion will terminate when $\interior_{\#'}\gamma$ is too small to accommodate any GFc's of type $\#$, in the sense that $\mathcal{C}^{\#}(\interior_{\#'}\gamma)=\emptyset$.
In this case, $\Xi^{\#}_{\mathbf{z}}(\interior_{\#'}\gamma)=\mathbf{z}^{\#}(\interior_{\#'}\gamma)$.
The proposition follows.
\end{proof}

\subsection{Technical estimates}

Here, we prepare several technical estimates for the proof of the main theorems.
We will use the following condition for the convergence of the cluster expansion, quoted directly from \cite{Friedli2017,Ueltschi2003}.

\begin{lemma}[cluster expansion] \label{lem:clusterexpansionfv}
Suppose that there exists a function $a:\cup_\# \mathcal{C}^{\#}\rightarrow\R_{>0}$ such that, given any ground state $\#$ and a GFc $\gamma_\ast\in \mathcal{C}^{\#}$,
\begin{equation}
\label{eqn:convergence_condition}
\sum_{\substack{\gamma\in \mathcal{C}^{\#}\\\mathrm{d}_{\Lambda_{\infty}}(\bar{\gamma},\bar{\gamma}_\ast)\le 1}}\abs{w^{\#}_{\mathbf{z}}(\gamma)}e^{a(\gamma)}\le a(\gamma_\ast).
\end{equation}
Then, given any ground state $\#$ and $\gamma_{1}\in \mathcal{C}^{\#}$,
\begin{equation}\label{eqn:velenik5-11}
1+\sum_{k=2}^{\infty} k\sum_{\gamma_{2}\in \mathcal{C}^{\#}}\dots\sum_{\gamma_{k}\in \mathcal{C}^{\#}}\abs{\varphi(\gamma_{1},\gamma_{2},\dots,\gamma_{k})}\prod_{j=2}^{k}\abs{w^{\#}_{\mathbf{z}}(\gamma_{j})}\le e^{a(\gamma_{1})},
\end{equation}
where $\varphi$ is the Ursell function. 
In addition, for any finite subregion $\Lambda\Subset\Lambda_{\infty}$, the series expansion
\begin{equation}\label{eqn:clusterexpansionform1}
\log\frac{\Xi_{\mathbf{z}}^{\#}(\Lambda)}{\mathbf{z}^{\#}(\Lambda)}
=\sum_{m=1}^{\infty}\sum_{\gamma_{1}\in \mathcal{C}^{\#}(\Lambda)}\dots\sum_{\gamma_{m}\in \mathcal{C}^{\#}(\Lambda)}\varphi(\gamma_{1},\dots,\gamma_{m})\prod_{i=1}^{m}w^{\#}_{\mathbf{z}}(\gamma_{i})
\end{equation} 
converges absolutely. 
\end{lemma}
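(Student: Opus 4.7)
The plan is to invoke the Koteck\'y--Preiss--Ueltschi cluster expansion theorem for abstract polymer systems, which is the standard machinery for results of this form. By Proposition \ref{prop:GFc}, the ratio $\Xi^\#_{\mathbf z}(\Lambda)/\mathbf z^\#(\Lambda)$ is the polymer partition function of a system in which the polymers are GFc's $\gamma\in\mathcal C^\#(\Lambda)$, the symmetric pairwise compatibility relation is $\mathrm d_{\Lambda_\infty}(\bar\gamma_1,\bar\gamma_2)>1$, and the activities are the weights $w^\#_{\mathbf z}(\gamma)$ defined by \eqref{eqn:weightdefinition}. Under this identification, the hypothesis \eqref{eqn:convergence_condition} is precisely the Koteck\'y--Preiss criterion, and the conclusions \eqref{eqn:velenik5-11} and \eqref{eqn:clusterexpansionform1} are its standard consequences, as proved for instance in Theorem 5.4 of \cite{Friedli2017} and in \cite{Ueltschi2003}.

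Strictly speaking, no new proof is required: one only needs to check that our setup fits the abstract polymer framework. This reduces to three essentially trivial observations: the relation $\mathrm d_{\Lambda_\infty}(\bar\gamma_1,\bar\gamma_2)>1$ is symmetric, each $\gamma\in\mathcal C^\#(\Lambda)$ has a finite support $\bar\gamma\subset\Lambda$, and $w^\#_{\mathbf z}(\gamma)$ is a well-defined complex number whenever $\mathbf z$ does not vanish, since the ground-state contribution alone makes each partition function in the denominator of \eqref{eqn:weightdefinition} strictly positive at positive $\mathbf z$, and the statement of the lemma only involves $\abs{w^\#_{\mathbf z}(\gamma)}$.

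For orientation, the proof of the cited theorem proceeds as follows. The Ursell function $\varphi(\gamma_1,\dots,\gamma_k)$ is the standard combinatorial weight on connected graphs over $\set{\gamma_1,\dots,\gamma_k}$ whose edges link incompatible pairs, so that \eqref{eqn:clusterexpansionform1} is the formal logarithm of the polymer partition function expanded as in Proposition \ref{prop:GFc}. The substantive content is the absolute-convergence bound \eqref{eqn:velenik5-11}, which is established by induction on cluster size combined with a tree-graph identity (such as Penrose's) that reduces the sum over connected graphs to a sum over spanning trees. At each tree edge, the convergence condition \eqref{eqn:convergence_condition} supplies an exponential damping factor, and the overall $e^{a(\gamma_1)}$ on the right of \eqref{eqn:velenik5-11} accumulates by iterating this damping along the branches of the tree.

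The only non-bookkeeping step in a from-scratch rederivation would be this tree-graph induction, which is the usual hard part in cluster-expansion proofs; since we are quoting the theorem directly, it is bypassed here, and the genuine work is deferred to the subsequent sections, where one must combine the Peierls estimate from Lemma \ref{lem:Peierls} with an explicit choice of the function $a$ to verify the hypothesis \eqref{eqn:convergence_condition} when $\abs{z}$ is large.
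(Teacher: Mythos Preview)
Your proposal is correct and matches the paper's treatment exactly: the paper does not prove this lemma but explicitly quotes it from \cite{Friedli2017,Ueltschi2003}, so your reduction to the Koteck\'y--Preiss--Ueltschi theorem via Proposition~\ref{prop:GFc} is precisely what is intended. Your additional sketch of the tree-graph induction and the verification that the GFc model fits the abstract polymer framework in fact goes beyond what the paper provides.
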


\begin{lemma} \label{lem:clusterexpansion}
There exists a constant $\tau_{0}>0$ such that the following holds: if $\tau\ge\tau_{0}$ and $\abs{w_{\mathbf{z}}^{\#}(\gamma)}\le e^{-\tau\abs{\bar{\gamma}}}$ uniformly in $\#$ and $\gamma$, then \eqref{eqn:convergence_condition} holds with $a(\gamma):=\chi\abs{\bar{\gamma}}$, where $\chi$ is the maximal coordination number of $\Lambda_{\infty}$.
\end{lemma}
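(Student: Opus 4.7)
The plan is to substitute the hypotheses directly into \eqref{eqn:convergence_condition} and reduce the verification to a purely combinatorial enumeration of GFc's, which can then be controlled by a geometric series. With $a(\gamma)=\chi|\bar\gamma|$ and $|w_{\mathbf z}^{\#}(\gamma)|\le e^{-\tau|\bar\gamma|}$, the inequality \eqref{eqn:convergence_condition} becomes
\[
\sum_{\substack{\gamma\in\mathcal C^{\#}\\ \mathrm{d}_{\Lambda_\infty}(\bar\gamma,\bar\gamma_\ast)\le 1}} e^{-(\tau-\chi)|\bar\gamma|}\le\chi|\bar\gamma_\ast|,
\]
so it suffices to show that the left-hand side is bounded by $\chi|\bar\gamma_\ast|$ once $\tau$ is large enough.

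To bound the number of GFc's with $|\bar\gamma|=n$ contributing to the sum, I stratify by an ``anchor'' vertex $v\in\bar\gamma$ chosen in the $1$-neighborhood of $\bar\gamma_\ast$, of which there are at most $(1+\chi)|\bar\gamma_\ast|$. For each such anchor I enumerate the triplet $(\bar\gamma,X_\gamma,\mu_\gamma)$ separately: the number of connected subsets of $\Lambda_\infty$ of size $n$ through $v$ is at most $(e\chi)^{n}$ by the standard tree-enumeration argument for graphs of maximum degree $\chi$; the internal configuration $X_\gamma\subseteq\bar\gamma$ contributes a factor of at most $2^{n}$; and the labeling $\mu_\gamma$ assigns a ground-state index to each connected component of $\bar\gamma^{c}$. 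Since each such component must contain at least one vertex of $\partial^{\mathrm{ex}}\bar\gamma$, and $|\partial^{\mathrm{ex}}\bar\gamma|\le\chi n$, there are at most $\chi n+1$ components, so this last factor is bounded by $|\mathcal G|^{\chi n+1}$.

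Combining these four bounds, the sum of interest is majorized by
\[
(1+\chi)|\mathcal G|\,|\bar\gamma_\ast|\sum_{n\ge 1}\bigl(2e\chi|\mathcal G|^{\chi}\bigr)^{n} e^{-(\tau-\chi)n}.
\]
For $\tau-\chi>\log(2e\chi|\mathcal G|^{\chi})$ this geometric sum converges, and its value tends to $0$ as $\tau\to\infty$; choosing $\tau_0$ large enough that the whole expression is at most $\chi|\bar\gamma_\ast|$ concludes the argument. I expect the only delicate point to be invoking the $(e\chi)^{n}$ bound for connected subgraphs inside the periodic graph $\Lambda_\infty$, which is a well-known consequence of mapping such subgraphs to walks on the universal covering tree; everything else is routine bookkeeping of the combinatorial data encoded in a GFc.
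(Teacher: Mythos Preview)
Your proof is correct and follows essentially the same route as the paper's: anchor the GFc at a vertex in the $1$-neighborhood of $\bar\gamma_\ast$, then bound the number of triples $(\bar\gamma,X_\gamma,\mu_\gamma)$ with $|\bar\gamma|=n$ by a product of three combinatorial factors, and sum the resulting geometric series. The only cosmetic differences are in the constants: the paper bounds connected sets of size $n$ through a fixed vertex by $\chi^{2n}$ (via walks of length $2n$) rather than $(e\chi)^n$, uses $|\mathcal G|^{\chi n}$ rather than $|\mathcal G|^{\chi n+1}$ for the labeling, and takes $\chi|\bar\gamma_\ast|$ rather than $(1+\chi)|\bar\gamma_\ast|$ for the number of anchor points. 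None of these affect the argument.
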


\begin{proof}
Fix $\#$ and $\gamma_\ast\in \mathcal{C}^{\#}$. 
Notice that any $\gamma\in \mathcal{C}^{\#}$ with $\mathrm{d}_{\Lambda_{\infty}}(\bar{\gamma},\bar{\gamma}_\ast)\le 1$ must intersect the set $\set{\lambda\in\Lambda_{\infty}\mid\mathrm{d}_{\Lambda_{\infty}}(\lambda,\bar{\gamma}_\ast)\le 1}$.
Bounding the size of the latter by $\chi\abs{\bar{\gamma}_\ast}$, we estimate
\begin{equation}\label{eqn:gfcentropybound}
\sum_{\substack{\gamma\in \mathcal{C}^{\#}\\\mathrm{d}_{\Lambda_{\infty}}(\bar{\gamma},\bar{\gamma}_\ast)\le 1}}\abs{w^{\#}_{\mathbf{z}}(\gamma)}e^{\chi\abs{\bar{\gamma}}}
\le\chi\abs{\bar{\gamma}_\ast}\sup_{\lambda\in\Lambda_{\infty}}\sum_{\substack{\gamma\in \mathcal{C}^{\#}\\\bar{\gamma}\ni\lambda}}e^{-(\tau-\chi)\abs{\bar{\gamma}}}.
\end{equation}

Recalling that a GFc $\gamma\in \mathcal{C}^{\#}$ is a triple $(\bar{\gamma},X_{\gamma},\mu_{\gamma})$, we bound the RHS of \eqref{eqn:gfcentropybound} as follows.
Suppose that $\abs{\bar{\gamma}}=n$.
We bound the number of $\bar{\gamma}$ with $\bar{\gamma}\ni\lambda$ and $\abs{\bar{\gamma}}=n$ by the number of walks on $\Lambda_{\infty}$ of length $2n$ starting at $\lambda$, which does not exceed $\chi^{2n}$.
Then, fixing $\bar{\gamma}$, each site therein is either occupied by a particle or not, so there are at most $2^{n}$ distinct configurations on $\bar{\gamma}$.
Finally, each hole in $\bar{\gamma}$ is adjacent to some point in $\bar{\gamma}$, so there are at most $\chi n$ such holes, each of which is assigned a label from the finite set $\mathcal{G}$.
Hence, for a fixed $\bar{\gamma}$, there are at most $\abs{\mathcal{G}}^{\chi n}$ possibilities for $\mu_{\gamma}$.
Therefore, uniformly in $\lambda\in\Lambda_{\infty}$, 
\begin{equation}\label{eqn:velenik5-28}
\sum_{\substack{\gamma\in \mathcal{C}^{\#}\\\bar{\gamma}\ni\lambda}}e^{-(\tau-\chi)\abs{\bar{\gamma}}}
\le\sum_{n=1}^{\infty} e^{-(\tau-\chi)n}\left(2\chi^{2}\abs{\mathcal{G}}^\chi\right)^{n}
\le 1
\end{equation}
for all sufficiently large $\tau$. 
In this case, from \eqref{eqn:gfcentropybound}, we get
\begin{equation}
\sum_{\substack{\gamma\in \mathcal{C}^{\#}\\\mathrm{d}_{\Lambda_{\infty}}(\bar{\gamma},\bar{\gamma}_\ast)\le 1}}\abs{w^{\#}_{\mathbf{z}}(\gamma)}e^{\chi\abs{\bar{\gamma}}}\le\chi\abs{\bar{\gamma}_\ast}.\qedhere
\end{equation}
\end{proof}

Next, we prove that the cardinality difference between the intersections of two ground states with the same region $\Lambda\Subset\Lambda_{\infty}$ is a boundary term.

\begin{lemma}\label{lem:intersection_estimate}
  For every finite region $\Lambda\Subset\Lambda_{\infty}$ and any ground state $\#\in\mathcal G$,
  \begin{equation}
    \abs{\abs{\Lambda\cap\mathcal L^\#}-\rho_{\max}\abs{\Lambda}}\le \mu^{-1}\abs{\partial^{\mathrm{in}}\Lambda}
  \end{equation}
  where $\partial^{\mathrm{in}}\Lambda$ is the {\it interior} boundary of $\Lambda$ (see Definition \ref{def:boundaries}) and $\mu$ was defined in Lemma \ref{lem:pointwiselowerbound}.
  In particular, for all $\#'\in\mathcal G$,
  \begin{equation}
    \abs{\abs{\Lambda\cap\mathcal L^\#}-\abs{\Lambda\cap\mathcal L^{\#'}}}\le 2\mu^{-1}\abs{\partial^{\mathrm{in}}\Lambda}
    .
  \end{equation}
\end{lemma}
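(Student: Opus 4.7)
The plan is to use the two complementary summation identities satisfied by the weight function $v^\#$ from Definition \ref{def:effective_particle}: by Lemma \ref{lemma:rholoc_cst}, one has $\sum_{\lambda\in\sigma_x^\#}v^\#(\lambda)=\rho_{\max}^{-1}$ for every $x\in\mathcal L^\#$, and directly from \eqref{vsharp}, $\sum_{x\in\mathcal L^\#:\,\lambda\in\sigma_x^\#}v^\#(\lambda)=1$ for every site $\lambda\in\Lambda_\infty$. Summing the first identity over $x\in\mathcal L^\#\cap\Lambda$ and the second over $\lambda\in\Lambda$, then taking the difference and swapping the order of summation, I obtain
\[
\rho_{\max}^{-1}\abs{\Lambda\cap\mathcal L^\#}-\abs{\Lambda}
=\sum_{x\in\mathcal L^\#\cap\Lambda}\sum_{\lambda\in\sigma_x^\#\setminus\Lambda}v^\#(\lambda)
-\sum_{x\in\mathcal L^\#\setminus\Lambda}\sum_{\lambda\in\sigma_x^\#\cap\Lambda}v^\#(\lambda),
\]
so the discrepancy is supported entirely on particles whose reference Voronoi cell straddles $\partial\Lambda$.

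Multiplying through by $\rho_{\max}$ and applying the triangle inequality, each such straddling particle contributes at most $\rho_{\max}\cdot\rho_{\max}^{-1}=1$, since its inner $v^\#$-sum over all of $\sigma_x^\#$ equals $\rho_{\max}^{-1}$. Thus the estimate reduces to bounding the cardinality of $B:=\{x\in\mathcal L^\#\mid\sigma_x^\#\cap\Lambda\neq\emptyset\text{ and }\sigma_x^\#\cap\Lambda^c\neq\emptyset\}$. For each $x\in B$, I would select a site $\lambda_x\in\sigma_x^\#\cap\partial^{\mathrm{in}}\Lambda$, arguing that since $\sigma_x^\#$ is connected (via a short argument exploiting the connectedness of $\sigma_x$ and the discrete Voronoi construction) and contains sites on both sides of $\partial\Lambda$, any internal path from the $\Lambda$-side to the $\Lambda^c$-side must cross $\partial^{\mathrm{in}}\Lambda$. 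By the definition of $\mu$ in Lemma \ref{lem:pointwiselowerbound}, every site belongs to at most $1/v^\#(\lambda)\leq\mu^{-1}$ reference cells, so the map $x\mapsto\lambda_x$ has fibers of size at most $\mu^{-1}$, yielding $\abs{B}\leq\mu^{-1}\abs{\partial^{\mathrm{in}}\Lambda}$.

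The ``in particular'' statement follows immediately from the first bound applied to both $\#$ and $\#'$ together with the triangle inequality. The principal technical point is the connectedness of the discrete Voronoi cell $\sigma_x^\#$ used in the construction of $x\mapsto\lambda_x$; while this is intuitively clear and should hold on all periodic graphs considered in the paper, it merits a brief verification, and in the event that it fails one can instead choose one representative $\lambda_x$ per connected component of $\sigma_x^\#$, at the cost of inflating $\mu^{-1}$ by a bounded combinatorial factor depending on $r_{\mathrm{eff}}$.
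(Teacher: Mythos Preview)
Your approach is correct and essentially the same as the paper's: both proofs use the two $v^\#$ identities from Definition~\ref{def:effective_particle} and Lemma~\ref{lemma:rholoc_cst} to reduce the discrepancy $\rho_{\max}^{-1}\abs{\Lambda\cap\mathcal L^\#}-\abs{\Lambda}$ to a sum over particles whose reference Voronoi cells straddle $\partial\Lambda$, then bound the number of such particles by $\mu^{-1}\abs{\partial^{\mathrm{in}}\Lambda}$ using connectedness of $\sigma_x^\#$ and the multiplicity bound from Lemma~\ref{lem:pointwiselowerbound}. Your organization is in fact slightly cleaner, since you write the difference symmetrically in the two boundary contributions rather than arguing via a sign observation.

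On your flagged technical point: the connectedness of $\sigma_x^\#$ is genuinely needed (the paper invokes it as well, without proof), but it is immediate from the definition of the discrete Voronoi cell. If $\lambda\in\sigma_x^\#$ and $\lambda'$ lies on a shortest path from $\lambda$ to $\sigma_x$, then for any $y\in\mathcal L^\#$ one has $\mathrm d_{\Lambda_\infty}(\lambda',\sigma_y)\ge \mathrm d_{\Lambda_\infty}(\lambda,\sigma_y)-\mathrm d_{\Lambda_\infty}(\lambda,\lambda')\ge \mathrm d_{\Lambda_\infty}(\lambda,\sigma_x)-\mathrm d_{\Lambda_\infty}(\lambda,\lambda')=\mathrm d_{\Lambda_\infty}(\lambda',\sigma_x)$, so $\lambda'\in\sigma_x^\#$; combined with the assumed connectedness of $\sigma_x$ this gives the claim, and no fallback constant depending on $r_{\mathrm{eff}}$ is needed.
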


\begin{proof}
  Recalling Definition \ref{def:effective_particle},
  \begin{equation}
    \abs{\Lambda}=\sum_{x\in\mathcal L^\#}\sum_{\lambda\in\sigma_x^\#\cap\Lambda}v^\#(\lambda)
  \end{equation}
  which we split as
  \begin{equation}
    \abs{\Lambda}=
    \sum_{x\in\mathcal L^\#:\ \sigma_x^\#\subset\Lambda}\sum_{\lambda\in\sigma_x^\#}v^\#(\lambda)
    +
    \sum_{x\in\mathcal L^\#:\ \sigma_x^\#\cap\Lambda^c\neq\emptyset}\sum_{\lambda\in\sigma_x^\#\cap\Lambda}v^\#(\lambda)
    .
  \end{equation}
  By \eqref{sumv_rhomax},
  \begin{equation}
    \abs{\Lambda}=
    \rho_{\mathrm{max}}^{-1}\abs{\set{x\in\mathcal L^\#\mid\sigma_x^\#\subset\Lambda}}
    +
    \sum_{x\in\mathcal L^\#\cap\Lambda:\ \sigma_x^\#\cap\Lambda^c\neq\emptyset}\sum_{\lambda\in\sigma_x^\#\cap\Lambda}v^\#(\lambda)
    .
  \end{equation}
  Therefore, splitting
  \begin{equation}
    \abs{\set{x\in\mathcal L^\#\mid\sigma_x^\#\subset\Lambda}}
    =
    \abs{\Lambda\cap\mathcal L^\#}
    -
    \abs{\set{x\in\mathcal L^\#\cap\Lambda\mid\sigma_x^\#\cap\Lambda^c\neq\emptyset}}
  \end{equation}
  we find that
  \begin{equation}
  \label{eqn:region-intersection-difference}
    \abs{\Lambda}-\rho_{\mathrm{max}}^{-1}\abs{\Lambda\cap\mathcal L^\#}
    =
    \sum_{x\in\mathcal L^\#\cap\Lambda:\ \sigma_x^\#\cap\Lambda^c\neq\emptyset}\sum_{\lambda\in\sigma_x^\#\cap\Lambda}v^\#(\lambda)
    -\rho_{\mathrm{max}}^{-1}\abs{\set{x\in\mathcal L^\#\cap\Lambda\mid\sigma_x^\#\cap\Lambda^c\neq\emptyset}}
    .
  \end{equation}
  By\-~\eqref{sumv_rhomax}, the RHS of \eqref{eqn:region-intersection-difference} is nonpositive, so
  \begin{equation}
    \abs{\abs{\Lambda}-\rho_{\mathrm{max}}^{-1}\abs{\Lambda\cap\mathcal L^\#}}
    \le
    \rho_{\mathrm{max}}^{-1}\abs{\set{x\in\mathcal L^\#\cap\Lambda\mid\sigma_x^\#\cap\Lambda^c\neq\emptyset}}
    .
  \end{equation}
  Since $\sigma_x^\#$ is connected,
  \begin{equation}
    \abs{\set{x\in\mathcal L^\#\cap\Lambda\mid\sigma_x^\#\cap\Lambda^c\neq\emptyset}}
    \le
    \abs{\set{x\in\mathcal L^\#\mid\sigma_x^\#\cap\partial^{\mathrm{in}}\Lambda\neq\emptyset}}
    .
  \end{equation}
  Now, recalling the definition of $\mu$ in Lemma \ref{lem:pointwiselowerbound}, we have that each point in $\sigma_x^\#$ can belong to the reference Voronoi cell of at most $\mu^{-1}$ particles.
  Therefore,
  \begin{equation}
    \abs{\set{x\in\mathcal L^\#\mid\sigma_x^\#\cap\partial^{\mathrm{in}}\Lambda\neq\emptyset}}
    \le
    \mu^{-1}\abs{\partial^{\mathrm{in}}\Lambda}.
  \end{equation}
\end{proof}

We now derive the central estimates of this subsection.

\begin{proposition} \label{prop:central_estimates}
Suppose that $\mathbf{z}(x)\equiv z$ for all but $n$ sites $x_{1},\dots,x_{n}\in\Lambda_{\infty}$, for which there exists a constant $c>0$ such that $e^{-\frac{c}{n}}\abs{z}\le\abs{\mathbf{z}(x_{i})}\le e^{\frac{c}{n}}\abs{z}$ for all $i$.
There exist constants $\tau\ge\tau_{0}$, $\varsigma$, $\eta>0$ and $z_{0}>1$ such that, whenever $\abs{z}\ge z_{0}$, the following hold for all finite regions $\Lambda\Subset\Lambda_{\infty}$, ground states $\#,\#'\in\mathcal G$, and GFc's $\gamma\in\mathcal{C}^{\#}(\Lambda)$:
\begin{equation}\label{eqn:weightestimate}
\abs{w_{\mathbf{z}}^{\#}(\gamma)}\le e^{-\tau\abs{\bar{\gamma}}},
\end{equation}
\begin{equation}\label{eqn:derivative_estimate}
\abs{\frac{\partial}{\partial\log \mathbf{z}(x_{i})}\log\frac{\Xi^{\#}_{\mathbf{z}}(\Lambda)}{\mathbf{z}^{\#}(\Lambda)}}\le \eta\indicator{x_{i}\in\Lambda},
\end{equation} 
\begin{equation}\label{eqn:ratio_estimate}
\abs{\frac{\Xi^{\#'}_{\mathbf{z}}(\Lambda)}{\Xi^{\#}_{\mathbf{z}}(\Lambda)}}\le\frac{\abs{z}^{\abs{\Lambda\cap \mathcal{L}^{\#'}}}}{\abs{z}^{\abs{\Lambda\cap \mathcal{L}^{\#}}}}e^{\varsigma\abs{\partial^{\text{in}}\Lambda}}.
\end{equation}
\end{proposition}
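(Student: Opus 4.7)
The three estimates are coupled, so I would prove them by a joint induction on $\abs{\Lambda}$ in the spirit of Zahradník \cite{Zahradnk1984}. The base case is $\abs{\Lambda}$ small enough that $\mathcal{C}^{\#}(\Lambda)=\emptyset$, whence $\Xi^{\#}_{\mathbf{z}}(\Lambda)/\mathbf{z}^{\#}(\Lambda)=1$ by Proposition \ref{prop:GFc}; then \eqref{eqn:weightestimate} and \eqref{eqn:derivative_estimate} are vacuous, and \eqref{eqn:ratio_estimate} reduces to a bound on $\abs{\mathbf{z}^{\#'}(\Lambda)}/\abs{\mathbf{z}^{\#}(\Lambda)}$, which follows from Lemma \ref{lem:intersection_estimate} together with the hypothesis $e^{-c/n}\abs{z}\le\abs{\mathbf{z}(x_i)}\le e^{c/n}\abs{z}$. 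For the inductive step, I would assume the three estimates hold for every region strictly smaller than $\Lambda$, and prove them for $\Lambda$ in the order \eqref{eqn:weightestimate}, then \eqref{eqn:ratio_estimate}, then \eqref{eqn:derivative_estimate}.

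For \eqref{eqn:weightestimate}, fix $\gamma\in\mathcal{C}^{\#}(\Lambda)$ and read off from \eqref{eqn:weightdefinition} that $\abs{w^{\#}_{\mathbf{z}}(\gamma)}$ factors as a ``bare'' ratio $\prod_{x\in X_\gamma}\abs{\mathbf{z}(x)}/\prod_{x\in \mathcal{L}^{\#}\cap\bar\gamma}\abs{\mathbf{z}(x)}$ times the product of interior ratios $\prod_{\#'}\abs{\Xi^{\#'}_{\mathbf{z}}(\interior_{\#'}\gamma)}/\abs{\Xi^{\#}_{\mathbf{z}}(\interior_{\#'}\gamma)}$. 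Since each $\interior_{\#'}\gamma\subsetneq\Lambda$, the inductive \eqref{eqn:ratio_estimate} applies; combined with Item \ref{asm:isometry} of Assumption \ref{assumption} and Lemma \ref{lem:intersection_estimate}, this bounds each interior ratio by $\abs{z}^{O(\abs{\partial^{\mathrm{in}}\interior_{\#'}\gamma})}\,e^{\varsigma\abs{\partial^{\mathrm{in}}\interior_{\#'}\gamma}}$, whose exponent is $O(\abs{\bar\gamma})$ because the boundary of every hole of $\bar\gamma$ sits inside $\bar\gamma$. For the bare ratio, the Peierls condition (Lemma \ref{lem:Peierls}) gives $\abs{X_\gamma}\le\rho_0\norm{\bar\gamma}$ with $\rho_0<\rho_{\max}$, while Lemma \ref{lem:intersection_estimate} gives $\abs{\mathcal{L}^{\#}\cap\bar\gamma}=\rho_{\max}\abs{\bar\gamma}+O(\abs{\partial^{\mathrm{in}}\bar\gamma})$; together with $\norm{\bar\gamma}\ge\mu\abs{\bar\gamma}$ (Lemma \ref{lem:effective_volume_naive_bound}), the net power of $\abs{z}$ in $\abs{w^{\#}_{\mathbf{z}}(\gamma)}$ is at most $-c_0\abs{\bar\gamma}\log\abs{z}+O(\abs{\bar\gamma})$ with some $c_0>0$. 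Taking $\abs{z}\ge z_0$ large enough delivers \eqref{eqn:weightestimate} for any prescribed $\tau\ge\tau_0$; the $e^{\pm c/n}$ fluctuations of $\mathbf{z}$ around $z$ contribute at most a bounded multiplicative constant.

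With \eqref{eqn:weightestimate} in hand for $\Lambda$, Lemma \ref{lem:clusterexpansion} yields the absolutely convergent cluster expansion \eqref{eqn:clusterexpansionform1} for $\log(\Xi^{\#}_{\mathbf{z}}(\Lambda)/\mathbf{z}^{\#}(\Lambda))$. For \eqref{eqn:ratio_estimate}, pick a species-preserving isometry $\psi$ with $\psi(\mathcal{L}^{\#})=\mathcal{L}^{\#'}$; by \eqref{eqn:action_on_gfc} and the invariance properties listed just above it, $\psi$ induces a weight- and Ursell-preserving bijection between the clusters of type $\#$ and those of type $\#'$. Clusters whose $\psi$-image lies entirely in $\Lambda$ therefore cancel in the difference $\log(\Xi^{\#}_{\mathbf{z}}(\Lambda)/\mathbf{z}^{\#}(\Lambda))-\log(\Xi^{\#'}_{\mathbf{z}}(\Lambda)/\mathbf{z}^{\#'}(\Lambda))$, while the remaining boundary-adjacent clusters are bounded by $\varsigma\abs{\partial^{\mathrm{in}}\Lambda}$ using \eqref{eqn:velenik5-11}. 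Rearranging and controlling $\abs{\mathbf{z}^{\#}(\Lambda)}/\abs{\mathbf{z}^{\#'}(\Lambda)}$ via Lemma \ref{lem:intersection_estimate} together with $\abs{\mathbf{z}(x_i)}\asymp\abs{z}$ yields \eqref{eqn:ratio_estimate}. For \eqref{eqn:derivative_estimate}, differentiate \eqref{eqn:clusterexpansionform1} termwise: $\log\mathbf{z}(x_i)$ enters only through clusters whose support touches $x_i$ (directly via a bare factor, or recursively via some $\interior_{\#'}\gamma$), and summing \eqref{eqn:velenik5-11} against the exponentially small $\abs{w^{\#}_{\mathbf{z}}(\gamma)}$ provides the uniform bound $\eta$.

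The main obstacle is the ratio estimate \eqref{eqn:ratio_estimate}: the bulk cancellation under $\psi$ requires the isometry to be \emph{species-preserving} so as to preserve Voronoi cells and $(\#,\mathcal{R}_2)$-correctness, and hence the entire sets $\mathcal{C}^{\#},\mathcal{C}^{\#'}$ with all their weights. Equally subtle is the bookkeeping inside \eqref{eqn:weightestimate}, where one must check that the Peierls gap $\rho_{\max}-\rho_0>0$ survives every boundary correction arising from the interior ratios and from the several uses of Lemma \ref{lem:intersection_estimate}. This is precisely where the non-tiling case genuinely differs from \cite{Jauslin2018} and justifies both the choice of effective volume in Definition \ref{def:effective_volume} and the insistence that $\mathcal{R}_2$ be large enough to satisfy \eqref{ineq_R2_1}, \eqref{ineq_R2_2}, and \eqref{ineq_R2_3}.
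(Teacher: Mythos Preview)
Your overall architecture---joint induction on $\abs{\Lambda}$ in the Zahradn\'ik style---matches the paper, and your base case is correct. But there is a genuine gap in your treatment of \eqref{eqn:weightestimate}, and a related ordering issue in the inductive step.

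For \eqref{eqn:weightestimate}, you propose to bound the interior ratios via the inductive \eqref{eqn:ratio_estimate} and then absorb the resulting exponents $\abs{\interior_{\#'}\gamma\cap\mathcal{L}^{\#'}}-\abs{\interior_{\#'}\gamma\cap\mathcal{L}^{\#}}$ using Lemma \ref{lem:intersection_estimate}, while separately approximating $\abs{\mathcal{L}^{\#}\cap\bar\gamma}$ by $\rho_{\max}\abs{\bar\gamma}$ plus a boundary error. The trouble is that all of these corrections sit in the \emph{exponent of $\abs{z}$}, and each is $O(\abs{\bar\gamma})$ with a model-dependent constant (of order $\mu^{-1}\chi$). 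There is no reason for that constant to be smaller than the Peierls gap $\rho_{\max}-\rho_{0}$, which can be arbitrarily small; so your claimed bound $-c_{0}\abs{\bar\gamma}\log\abs{z}+O(\abs{\bar\gamma})$ does not follow. The paper sidesteps this entirely by \emph{not} estimating the pieces separately: it keeps the exact exponent
\[
\abs{X_{\gamma}}+\sum_{\#'}\abs{\interior_{\#'}\gamma\cap\mathcal{L}^{\#'}}-\abs{\Int\gamma\cap\mathcal{L}^{\#}}
\]
and shows, via Lemma \ref{lemma:rholoc_cst} and the definition of $\norm{\bar\gamma}$, that this equals $\abs{X_{\gamma}}-\rho_{\max}\norm{\bar\gamma}$ \emph{identically}. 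Only then does the Peierls condition give $\le-(\rho_{\max}-\rho_{0})\norm{\bar\gamma}\le-\mu(\rho_{\max}-\rho_{0})\abs{\bar\gamma}$ with no competing boundary term in the $\abs{z}$-exponent. This identity is the real payoff of Definition \ref{def:effective_volume}; you allude to it in your closing paragraph but never actually use it.

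Second, your order (weight, then ratio, then derivative) breaks the ratio step for non-constant $\mathbf{z}$. The species-preserving isometry $\psi$ does map $\mathcal{C}^{\#}$ to $\mathcal{C}^{\#'}$ with matching supports and Ursell factors, but it does \emph{not} preserve the weights unless $\mathbf{z}\circ\psi=\mathbf{z}$; when $\mathbf{z}$ varies at the sites $x_{1},\dots,x_{n}$, the bulk cancellation you invoke fails. The paper handles this by proving \eqref{eqn:derivative_estimate} \emph{before} \eqref{eqn:ratio_estimate} and then using the mean-value theorem (in real $\mathbf{z}(x_i)$) to reduce the ratio estimate to the constant-fugacity case, where the isometry argument is clean. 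A smaller related point: the paper's ratio argument extends the cluster sums from $\mathcal{C}^{\#}(\Lambda)$ to all of $\mathcal{C}^{\#}$ in order to exhibit a $\#$-independent bulk term; to make that extension legitimate before the induction has reached large regions, it works throughout with Zahradn\'ik's cutoff weights $\hat w_{\mathbf z}^{\#}$ (which are bounded by $e^{-\tau_0\abs{\bar\gamma}}$ by fiat) and only afterwards identifies them with the true weights.
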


\begin{remark}
\label{rem:fugacity_real_vs_complex}
To derive the analyticity of the pressure, we will allow the fugacity function $\mathbf{z}$ to take complex values, but in this case we require that $\mathbf{z}$ be a constant function on $\Lambda_{\infty}$.
In the case of non-constant $\mathbf{z}$ as is needed in \eqref{eqn:derivative_estimate}, the function is required to take only real values.
\end{remark}

\begin{proof}
Following the version of Pirogov-Sinai theory proposed by Zahradn\'ik \cite{Zahradnk1984}, we assign a \emph{cutoff weight} to each GFc $\gamma$ in such a way that
\begin{equation}\label{eqn:cutoff_weight}
\hat{w}_{\mathbf{z}}^{\#}(\gamma):=
\min\set{\frac{\prod_{x\in X_{\gamma}}\abs{\mathbf{z}(x)}}{\prod_{x\in (\mathcal{L}^{\#}\cap\bar{\gamma})}\abs{\mathbf{z}(x)}}\prod_{\#'}\abs{\frac{\hat{\Xi}_{\mathbf{z}}^{\#'}(\interior_{\#'}\gamma)}{\hat{\Xi}^{\#}_{\mathbf{z}}(\interior_{\#'}\gamma)}},e^{-\tau_{0}\abs{\bar{\gamma}}}},
\end{equation}
where $\hat{\Xi}_{\mathbf{z}}^{\#}(\Lambda)$ is defined as in \eqref{eqn:gfc_partition_function} but with the true weight $w_{\mathbf{z}}^{\#}(\gamma)$ replaced by the cutoff weight $\hat{w}_{\mathbf{z}}^{\#}(\gamma)$.
It is a standard result that there exists a unique way to execute this assignment; see for instance \cite[Theorem 10.5.1.2]{Presutti2008}.
The benefit of using the cutoff weights is suggested by the presence of $\tau_{0}$ in \eqref{eqn:cutoff_weight}: since these weights are sufficiently small, Lemma \ref{lem:clusterexpansion} guarantees that we can use the cluster expansion to evaluate the associated partition functions.

In what follows, we prove \eqref{eqn:derivative_estimate} and \eqref{eqn:ratio_estimate} inductively, in addition to the following bound on the cutoff weights
\begin{equation}
\label{eqn:cutoff_weight_estimate}
\hat{w}_{\mathbf{z}}^{\#}(\gamma)\le e^{-\tau\abs{\bar{\gamma}}},
\end{equation}
which immediately implies \eqref{eqn:weightestimate} as $\tau\ge\tau_{0}$; see for instance \cite[Theorem 10.5.2.1]{Presutti2008}.

If $\abs{\Lambda}$ is too small to accommodate any GFc of any type, then there is nothing to prove about \eqref{eqn:cutoff_weight_estimate}.
On the other hand, $\Xi^{\#}_{\mathbf{z}}(\Lambda)=\mathbf{z}^{\#}(\Lambda)$ for each $\#$, so \eqref{eqn:derivative_estimate} holds trivially. As for \eqref{eqn:ratio_estimate}, given $\#$ and $\#'$, we estimate
\begin{equation}
\abs{\frac{\Xi^{\#'}_{\mathbf{z}}(\Lambda)}{\Xi^{\#}_{\mathbf{z}}(\Lambda)}}=\frac{\prod_{x\in\Lambda\cap \mathcal{L}^{\#'}}\abs{\mathbf{z}(x)}}{\prod_{x\in\Lambda\cap \mathcal{L}^{\#}}\abs{\mathbf{z}(x)}}\le e^{2c}\frac{\abs{z}^{\abs{\Lambda\cap \mathcal{L}^{\#'}}}}{\abs{z}^{\abs{\Lambda\cap \mathcal{L}^{\#}}}}\le e^{\varsigma\abs{\partial^{\text{in}}\Lambda}}\frac{\abs{z}^{\abs{\Lambda\cap \mathcal{L}^{\#'}}}}{\abs{z}^{\abs{\Lambda\cap \mathcal{L}^{\#}}}},
\end{equation} 
where the last inequality holds as long as
\begin{equation}
\label{eqn:boundary-coefficient-condition-1}
\varsigma\ge 2c.
\end{equation}

Assume henceforth that \eqref{eqn:cutoff_weight_estimate}, \eqref{eqn:derivative_estimate}, and \eqref{eqn:ratio_estimate} hold for all proper subregions of $\Lambda$.

We begin with \eqref{eqn:cutoff_weight_estimate}. 
Fix a ground state $\#$, and consider $\gamma\in \mathcal{C}^{\#}(\Lambda)$. 
Applying \eqref{eqn:cutoff_weight_estimate} and \eqref{eqn:ratio_estimate} inductively, we get 
\begin{equation}
\begin{split}
\abs{\hat{w}_{\mathbf{z}}^{\#}(\gamma)}&=\frac{\prod_{x\in X_{\gamma}}\abs{\mathbf{z}(x)}}{\prod_{x\in (\mathcal{L}^{\#}\cap\bar{\gamma})}\abs{\mathbf{z}(x)}}\prod_{\#'}\abs{\frac{\Xi_{\mathbf{z}}^{\#'}(\interior_{\#'}\gamma)}{\Xi^{\#}_\mathbf{z}(\interior_{\#'}\gamma)}}\\
&\le e^{2c}\frac{\abs{z}^{\abs{X_{\gamma}}}}{\abs{z}^{\abs{(\mathcal{L}^{\#}\cap\bar{\gamma})}}}\prod_{\#'}\left(\frac{\abs{z}^{\abs{\interior_{\#'}\gamma\cap \mathcal{L}^{\#'}}}}{\abs{z}^{\abs{\interior_{\#'}\gamma\cap \mathcal{L}^{\#}}}}e^{\varsigma\abs{\partial^{\text{in}}\interior_{\#'}\gamma}}\right)\\
&=e^{2c+\varsigma\sum_{\#'}\abs{\partial^{\text{in}}\interior_{\#'}\gamma}}\abs{z}^{\abs{X_{\gamma}}+\sum_{\#'}\abs{\interior_{\#'}\gamma\cap \mathcal{L}^{\#'}}-\abs{\Int\gamma\cap \mathcal{L}^{\#}}}.
\end{split}
\end{equation}
Notice that, by Lemma \ref{lemma:rholoc_cst},
\begin{equation}
\begin{split}
&\abs{X_{\gamma}}+\sum_{\#'}\abs{\interior_{\#'}\gamma\cap \mathcal{L}^{\#'}}-\abs{\Int\gamma\cap \mathcal{L}^{\#}}\\
=&\abs{X_{\gamma}}-\rho_{\max}\left(\sum_{x\in\Int\gamma\cap \mathcal{L}^{\#}}\norm{\sigma^{\#}_{x}}-\sum_{\#'}\sum_{x\in\interior_{\#'}\gamma\cap \mathcal{L}^{\#'}}\norm{\sigma^{\#'}_{x}}\right)\\
=&\abs{X_{\gamma}}-\rho_{\max}\norm{\bar{\gamma}}\le-(\rho_{\max}-\rho_{0})\norm{\bar{\gamma}}\le-\mu(\rho_{\max}-\rho_{0})\abs{\bar{\gamma}},
\end{split}
\end{equation}
where, in an abuse of notation, we write, for $x\in\mathcal{L}^{\#}$,
\begin{equation}
\norm{\sigma_{x}^{\#}}:=\sum_{\lambda\in\sigma^\#_x}v^\#(\lambda),
\end{equation}
and we use the Peierls condition in the first inequality and Lemma \ref{lem:effective_volume_naive_bound} in the second.
Hence,
\begin{equation}
\abs{\hat{w}_{\mathbf{z}}^{\#}(\gamma)}\le e^{2c+\varsigma\sum_{\#'}\abs{\partial^{\text{in}}\interior_{\#'}\gamma}}\abs{z}^{-\mu(\rho_{\max}-\rho_{0})\abs{\bar{\gamma}}}\le e^{-[\mu(\rho_{\max}-\rho_{0})\log z_{0}-2c-\varsigma\chi]\abs{\bar{\gamma}}}\le e^{-\tau\abs{\bar{\gamma}}},
\end{equation} 
where the last inequality holds as long as
\begin{equation}
\label{eqn:high-fugacity}
\mu(\rho_{\max}-\rho_{0})\log z_{0}-2c-\varsigma\chi\ge\tau.
\end{equation}

We now prove \eqref{eqn:derivative_estimate}. 
If $x_{i}\not\in\Lambda$, then the inequality holds trivially, so we assume otherwise. As long as
\begin{equation}
\tau\ge\tau_{0},
\end{equation} 
the cluster expansion
\begin{equation}
\log\frac{\Xi_{\mathbf{z}}^{\#}(\Lambda)}{\mathbf{z}^{\#}(\Lambda)}=\sum_{m=1}^{\infty}\sum_{\gamma_{1}\in \mathcal{C}^{\#}(\Lambda)}\dots\sum_{\gamma_{m}\in \mathcal{C}^{\#}(\Lambda)}\varphi(\gamma_{1},\dots,\gamma_{m})\prod_{j=1}^{m}w^{\#}_{\mathbf{z}}(\gamma_{j})
\end{equation}
converges absolutely by Lemma \ref{lem:clusterexpansion}. Differentiating the series term by term (which, by a corollary of the dominated convergence theorem, will be justified as soon as we show that the series of derivatives is bounded absolutely and uniformly for $\mathbf{z}(x_{i})\in [e^{-\frac{c}{n}}\abs{z},e^{\frac{c}{n}}\abs{z}]$; indeed, the latter is guaranteed by \eqref{eqn:derivativecondition1} and \eqref{eqn:derivativecondition2} as we compute below) and using Lemma \ref{lem:clusterexpansionfv}, we get 
\begin{equation}\label{eqn:derivativebeginning}
\begin{split}
&\abs{\frac{\partial}{\partial\log \mathbf{z}(x_{i})}\log\frac{\Xi^{\#}_{\mathbf{z}}(\Lambda)}{\mathbf{z}^{\#}(\Lambda)}}\\
\le&\sum_{m=1}^{\infty}\sum_{\gamma_{1}\in\mathcal{C}^{\#}(\Lambda)}\dots\sum_{\gamma_{m}\in\mathcal{C}^{\#}(\Lambda)}\abs{\varphi(\gamma_{1},\dots,\gamma_{m})}\sum_{j=1}^{m}\abs{\frac{\partial w^{\#}_{\mathbf{z}}(\gamma_{j})}{\partial\log \mathbf{z}(x_{i})}}\prod_{k\ne j}\abs{w^{\#}_{\mathbf{z}}(\gamma_{k})}\\
\le&\sum_{\gamma_{1}\in\mathcal{C}^{\#}(\Lambda)}\abs{\frac{\partial w^{\#}_{\mathbf{z}}(\gamma_{1})}{\partial\log \mathbf{z}(x_{i})}}\left[1+\sum_{m=2}^{\infty} m\sum_{\gamma_{2}\in\mathcal{C}^{\#}(\Lambda)}\dots\sum_{\gamma_{m}\in\mathcal{C}^{\#}(\Lambda)}\abs{\varphi(\gamma_{1},\dots,\gamma_{m})}\prod_{j=2}^{m}\abs{w^{\#}_{\mathbf{z}}(\gamma_{j})}\right]\\
\le&\sum_{\gamma\in\mathcal{C}^{\#}(\Lambda)}\abs{\frac{\partial w^{\#}_{\mathbf{z}}(\gamma)}{\partial\log \mathbf{z}(x_{i})}}e^{\chi\abs{\bar{\gamma}}}.
\end{split}
\end{equation}
Since
\begin{equation}
\frac{\partial w^{\#}_{\mathbf{z}}(\gamma)}{\partial\log \mathbf{z}(x_{i})}=w_{\mathbf{z}}^{\#}(\gamma)\frac{\partial\log w^{\#}_{\mathbf{z}}(\gamma)}{\partial\log \mathbf{z}(x_{i})},
\end{equation} 
we need to study 
\begin{equation}
\begin{multlined}
\log w_{\mathbf{z}}^{\#}(\gamma)=\left(\sum_{x\in X_{\gamma}}\log \mathbf{z}(x)-\sum_{x\in(\mathcal{L}^{\#}\cap\bar{\gamma})}\log \mathbf{z}(x)\right)\\
+\sum_{\#'}\left(\log\frac{\Xi^{\#'}_{\mathbf{z}}(\interior_{\#'}\gamma)}{\mathbf{z}^{\#'}(\interior_{\#'}\gamma)}-\log\frac{\Xi^{\#}_{\mathbf{z}}(\interior_{\#'}\gamma)}{\mathbf{z}^{\#}(\interior_{\#'}\gamma)}\right)\\
+\sum_{\#'}\left(\sum_{x\in\interior_{\#'}\gamma\cap \mathcal{L}^{\#'}}\log \mathbf{z}(x)-\sum_{x\in\interior_{\#'}\gamma\cap \mathcal{L}^{\#}}\log \mathbf{z}(x)\right).
\end{multlined}
\end{equation}
Differentiating the above and applying \eqref{eqn:derivative_estimate} inductively, we get 
\begin{equation}\label{eqn:firstderivativeindicatorbound}
\begin{split}
\abs{\frac{\partial\log w^{\#}_{\mathbf{z}}(\gamma)}{\partial\log \mathbf{z}(x_{i})}}\le&\abs{\indicator{x_{1}\in X_{\gamma}}-\indicator{x_{i}\in(\mathcal{L}^{\#}\cap\bar{\gamma})}}\\&+2\eta\sum_{\#'}\indicator{x_{i}\in\interior_{\#'}\gamma\cap \mathcal{L}^{\#'}}+\sum_{\#'}\abs{\indicator{x_{i}\in\interior_{\#'}\gamma\cap \mathcal{L}^{\#'}}-\indicator{x_{i}\in\interior_{\#'}\gamma\cap \mathcal{L}^{\#}}}\\
\le&3\indicator{x_{i}\in\Int\gamma},
\end{split}
\end{equation}
where the last inequality holds as long as 
\begin{equation}\label{eqn:derivativecondition1}
\eta\le 1.
\end{equation}
Therefore, 
\begin{equation}
\abs{\frac{\partial}{\partial\log \mathbf{z}(x_{i})}\log\frac{\Xi^{\#}_{\mathbf{z}}(\Lambda)}{\mathbf{z}^{\#}(\Lambda)}}\le3\sum_{\substack{\gamma\in\mathcal{C}^{\#}(\Lambda)\\x_{i}\in\Int\gamma}}e^{-(\tau-\chi)\abs{\bar{\gamma}}}.
\end{equation}
Notice that every $\gamma\in \mathcal{C}^{\#}$ with $x_{i}\in\bar{\gamma}$ has at most $\abs{\Int_{\gamma}}$ distinct translates $\gamma'\in \mathcal{C}^{\#}(\Lambda)$ such that $x_{1}\in\Int\gamma'$. 
Using the isoperimetric inequality, 
\begin{equation}\label{eqn:isoperimetric}
\abs{\Int\gamma}\le I_d\abs{\bar{\gamma}}^{d},
\end{equation}
we get 
\begin{equation}
\sum_{\substack{\gamma\in\mathcal{C}^{\#}(\Lambda)\\x_{i}\in\Int\gamma}}e^{-(\tau-\chi)\abs{\bar{\gamma}}}\le\sum_{\substack{\gamma\in\mathcal{C}^{\#}\\x_{i}\in\bar{\gamma}}}I_d\abs{\bar{\gamma}}^{d}e^{-(\tau-\chi)\abs{\bar{\gamma}}}\le I_d d!\sum_{\substack{\gamma\in\mathcal{C}^{\#}\\x_{i}\in\bar{\gamma}}}e^{-(\tau-\chi-1)\abs{\bar{\gamma}}}.
\end{equation} 
Bounding the series the same way as in \eqref{eqn:velenik5-28}, we get 
\begin{equation}\label{eqn:derivativeend}
\abs{\frac{\partial}{\partial\log \mathbf{z}(x_{i})}\log\frac{\Xi^{\#}_{\mathbf{z}}(\Lambda)}{\mathbf{z}^{\#}(\Lambda)}}\le3I_dd!\sum_{s=1}^{\infty} e^{-(\tau-\chi-1)s}\left(2\chi^{2}\abs{\mathcal{G}}^\chi\right)^s\le \eta,
\end{equation} 
where the last inequality holds as long as 
$\tau\ge\tau_{1}$, where $\tau_{1}$ satisfies $\tau_{1}\ge\tau_{0}$ and 
\begin{equation}\label{eqn:derivativecondition2}
\sum_{s=1}^{\infty} e^{-(\tau_{1}-\chi-1)s}\left(2\chi^{2}\abs{\mathcal{G}}^\chi\right)^s\le\frac{\eta}{3I_dd!}.
\end{equation}

It remains to prove \eqref{eqn:ratio_estimate}. 
If $\mathbf{z}$ is non-constant (in which case, recall from Remark \ref{rem:fugacity_real_vs_complex} that it can only take real values), then we first get rid of the non-constancy as follows.
By the mean-value theorem, there exists $\widetilde{\mathbf{z}}(x_{i})\in[z,\mathbf{z}(x_{i})]$ (or possibly $[\mathbf{z}(x_{i}),z]$) for each $i$ such that
\begin{equation}
\label{eqn:bye_variable_z}
\log\frac{\Xi^{\#'}_{\mathbf{z}}(\Lambda)}{\Xi^{\#}_{\mathbf{z}}(\Lambda)}-\log\frac{\Xi^{\#'}_{z}(\Lambda)}{\Xi^{\#}_{z}(\Lambda)}=\sum_{i=1}^{n}\left[\evaluate{\frac{\partial}{\partial \mathbf{z}(x_{i})}\log\frac{\Xi^{\#'}_{\mathbf{z}}(\Lambda)}{\Xi^{\#}_{\mathbf{z}}(\Lambda)}}{\widetilde{\mathbf{z}}}(\mathbf{z}(x_{i})-z)\right],
\end{equation}
where we extend $\widetilde{\mathbf{z}}(x):= z$ for all $x\ne x_{1},\dots,x_{n}$. 
Notice that \eqref{eqn:derivative_estimate} remains valid since nothing has been used about $\widetilde{\mathbf{z}}$ except that it satisfies the constraint stated in the proposition, so 
\begin{equation}
\label{eqn:bye_variable_z_error}
\begin{split}
&\sum_{i=1}^{n}\abs{\evaluate{\frac{\partial}{\partial \mathbf{z}(x_{i})}\log\frac{\Xi^{\#'}_{\mathbf{z}}(\Lambda)}{\Xi^{\#}_{\mathbf{z}}(\Lambda)}}{\widetilde{\mathbf{z}}}}\abs{\mathbf{z}(x_{i})-z}\\
\le&\sum_{i=1}^{n}\left(\abs{\evaluate{\frac{\partial}{\partial \mathbf{z}(x_{i})}\log\frac{\Xi^{\#'}_{\mathbf{z}}(\Lambda)}{\mathbf{z}^{\#'}(\Lambda)}}{\widetilde{\mathbf{z}}}}+\abs{\evaluate{\frac{\partial}{\partial \mathbf{z}(x_{i})}\log\frac{\Xi^{\#}_{\mathbf{z}}(\Lambda)}{\mathbf{z}^{\#}(\Lambda)}}{\widetilde{\mathbf{z}}}}\right.\\
&\hspace{1.5in}+\left.\abs{\evaluate{\frac{\partial\log \mathbf{z}^{\#'}(\Lambda)}{\partial \mathbf{z}(x_{i})}}{\widetilde{\mathbf{z}}}-\evaluate{\frac{\partial\log \mathbf{z}^{\#}(\Lambda)}{\partial \mathbf{z}(x_{i})}}{\widetilde{\mathbf{z}}}}\right)\abs{\mathbf{z}(x_{i})-z}\\
\le&\sum_{i=1}^{n}\frac{\abs{\mathbf{z}(x_{i})-z}}{\abs{\widetilde{\mathbf{z}}(x_{i})}}\left(2\eta\indicator{x_{i}\in\Lambda}+\abs{\indicator{x_{i}\in \mathcal{L}^{\#'}\cap\Lambda}-\indicator{x_{i}\in \mathcal{L}^{\#}\cap\Lambda}}\right)\\
\le&\sum_{i=1}^{n}(e^\frac{c}{n}+1)(2\eta+1).
\end{split}
\end{equation}
We now turn to
\begin{equation}
\label{eqn:ratio_of_partition_functions}
\log\frac{\Xi^{\#'}_{z}(\Lambda)}{\Xi^{\#}_{z}(\Lambda)}=\log\frac{\Xi^{\#'}_{z}(\Lambda)}{\abs{z}^{\abs{\mathcal{L}^{\#'}\cap\Lambda}}}-\log\frac{\Xi^{\#}_{z}(\Lambda)}{\abs{z}^{\abs{\mathcal{L}^{\#}\cap\Lambda}}}+\log\frac{\abs{z}^{\abs{\mathcal{L}^{\#'}\cap\Lambda}}}{\abs{z}^{\abs{\mathcal{L}^{\#}\cap\Lambda}}}.
\end{equation}
By Lemma \ref{lem:clusterexpansionfv}, we expand 
\begin{equation}
\label{eqn:cluster_expansion_cutoff_weight}
\log\frac{\Xi^{\#}_{z}(\Lambda)}{\abs{z}^{\abs{\mathcal{L}^{\#}\cap\Lambda}}}
=\sum_{m=1}^{\infty}\sum_{\gamma_{1}\in \mathcal{C}^{\#}(\Lambda)}\dots\sum_{\gamma_{m}\in \mathcal{C}^{\#}(\Lambda)}\varphi(\gamma_{1},\dots,\gamma_{m})\prod_{i=1}^{m}\hat{w}^{\#}_{z}(\gamma_{i}),
\end{equation}
where we substitute in the cutoff weights (which coincide with the true weights for the GFc's in $\mathcal{C}^{\#}(\Lambda)$ by \eqref{eqn:cutoff_weight} and \eqref{eqn:cutoff_weight_estimate}) in anticipation of extending the summation to be over tuples of GFc's in $\mathcal{C}^{\#}$.
Using the identity
\begin{equation}
1=\frac{1}{\abs{\bar{\Gamma}\cap \mathcal{L}^{\#}}}\sum_{\lambda\in\Lambda\cap \mathcal{L}^{\#}}\indicator{\lambda\in\bar{\Gamma}}
\end{equation}
for each tuple $\Gamma=(\gamma_{1},\dots,\gamma_{m})$ of GFc's in $\mathcal{C}^{\#}(\Lambda)$ contributing to \eqref{eqn:cluster_expansion_cutoff_weight} (where, by an abuse of notation, we write $\bar{\Gamma}:=\cup_{i}\bar{\gamma}_{i}$),
we get 
\begin{equation}
\label{eqn:cutoff_partition_function_evaluated}
\begin{split}
&\sum_{m=1}^{\infty}\sum_{\gamma_{1}\in \mathcal{C}^{\#}(\Lambda)}\dots\sum_{\gamma_{m}\in \mathcal{C}^{\#}(\Lambda)}\varphi(\gamma_{1},\dots,\gamma_{m})\prod_{i=1}^{m}\hat{w}^{\#}_{z}(\gamma_{i}) 
\\
=&\sum_{\lambda\in\Lambda\cap \mathcal{L}^{\#}}
\sum_{m=1}^{\infty}\sum_{\gamma_{1}\in \mathcal{C}^{\#}(\Lambda)}\dots\sum_{\gamma_{m}\in \mathcal{C}^{\#}(\Lambda)}
\indicator{\lambda\in\bar{\Gamma}}
\frac{1}{\abs{\bar{\Gamma}\cap\mathcal{L}^{\#}}}
\varphi(\gamma_{1},\dots,\gamma_{m})\prod_{i=1}^{m}\hat{w}^{\#}_{z}(\gamma_{i})
\\
=&
\begin{multlined}[t]
\sum_{\lambda\in\Lambda\cap \mathcal{L}^{\#}}
\left[
\sum_{m=1}^{\infty}\sum_{\gamma_{1}\in \mathcal{C}^{\#}}\dots\sum_{\gamma_{m}\in \mathcal{C}^{\#}}
\indicator{\lambda\in\bar{\Gamma}}
\frac{1}{\abs{\bar{\Gamma}\cap\mathcal{L}^{\#}}}
\varphi(\gamma_{1},\dots,\gamma_{m})\prod_{i=1}^{m}\hat{w}^{\#}_{z}(\gamma_{i})
\right.
\\
\left.
-\sum_{m=1}^{\infty}\sum_{\gamma_{1}\in \mathcal{C}^{\#}}\dots\sum_{\gamma_{m}\in \mathcal{C}^{\#}}
\indicator{\Gamma\not\sqsubset\mathcal{C}^{\#}(\Lambda)}
\indicator{\lambda\in\bar{\Gamma}}
\frac{1}{\abs{\bar{\Gamma}\cap\mathcal{L}^{\#}}}
\varphi(\gamma_{1},\dots,\gamma_{m})\prod_{i=1}^{m}\hat{w}^{\#}_{z}(\gamma_{i})
\right],
\end{multlined}
\end{split}
\end{equation}
where the involvement of GFc's in $\mathcal{C}^{\#}$ is justified by the estimate
\begin{equation}\label{eqn:largercluster}
\begin{split}
&\sum_{m=1}^{\infty}\sum_{\gamma_{1}\in \mathcal{C}^{\#}}\dots\sum_{\gamma_{m}\in \mathcal{C}^{\#}}
\indicator{\lambda\in\bar{\Gamma}}
\frac{1}{\abs{\bar{\Gamma}\cap\mathcal{L}^{\#}}}
\abs{\varphi(\gamma_{1},\dots,\gamma_{m})}
\prod_{i=1}^{m}\abs{\hat{w}^{\#}_{z}(\gamma_{i})}
\\
\le&
\sum_{\substack{\gamma_{1}\in \mathcal{C}^{\#}\\\bar{\gamma}_{1}\ni\lambda}}
\abs{\hat{w}^{\#}_{z}(\gamma_{1})}
\sum_{m=1}^{\infty}
m
\sum_{\gamma_{2}\in \mathcal{C}^{\#}}\dots\sum_{\gamma_{m}\in \mathcal{C}^{\#}}
\abs{\varphi(\gamma_{1},\dots,\gamma_{m})}
\prod_{i=2}^{m}\abs{\hat{w}^{\#}_{z}(\gamma_{i})}
\\
\le&\sum_{\substack{\gamma\in \mathcal{C}^{\#}\\\lambda\in\bar{\gamma}}}e^{-(\tau_{0}-\chi)\abs{\bar{\gamma}}}\le 1
\end{split}
\end{equation}
as a consequence of Lemmas \ref{lem:clusterexpansionfv} and \ref{lem:clusterexpansion} (in particular \eqref{eqn:velenik5-28}).
By \eqref{eqn:action_on_gfc}, the first series 
\begin{equation}
\sum_{m=1}^{\infty}\sum_{\gamma_{1}\in \mathcal{C}^{\#}}\dots\sum_{\gamma_{m}\in \mathcal{C}^{\#}}
\indicator{\lambda\in\bar{\Gamma}}
\frac{1}{\abs{\bar{\Gamma}\cap\mathcal{L}^{\#}}}
\varphi(\gamma_{1},\dots,\gamma_{m})\prod_{i=1}^{m}\hat{w}^{\#}_{z}(\gamma_{i})
\label{bulkterm}
\end{equation}
appearing in \eqref{eqn:cutoff_partition_function_evaluated} is independent of the ground state $\#$ and the site $\lambda\in \mathcal{L}^{\#}$.
On the other hand, the \emph{support} $\bar{\Gamma}$ of each tuple $\Gamma$ that contributes to the other series in \eqref{eqn:cutoff_partition_function_evaluated},
\begin{equation}
\sum_{\lambda\in\Lambda\cap \mathcal{L}^{\#}}
\sum_{m=1}^{\infty}\sum_{\gamma_{1}\in \mathcal{C}^{\#}}\dots\sum_{\gamma_{m}\in \mathcal{C}^{\#}}
\indicator{\Gamma\not\sqsubset\mathcal{C}^{\#}(\Lambda)}
\indicator{\lambda\in\bar{\Gamma}}
\frac{1}{\abs{\bar{\Gamma}\cap\mathcal{L}^{\#}}}
\varphi(\gamma_{1},\dots,\gamma_{m})\prod_{i=1}^{m}\hat{w}^{\#}_{z}(\gamma_{i})
\end{equation}
necessarily intersects $\partial^{\text{in}}\Lambda$ by the definition of the Ursell function $\varphi$.
Hence, this latter series is a boundary term:
\begin{equation}\label{eqn:clusterexpansionboundaryterm}
\begin{multlined}
\sum_{\lambda\in\Lambda\cap \mathcal{L}^{\#}}
\sum_{m=1}^{\infty}\sum_{\gamma_{1}\in \mathcal{C}^{\#}}\dots\sum_{\gamma_{m}\in \mathcal{C}^{\#}}
\indicator{\Gamma\not\sqsubset\mathcal{C}^{\#}(\Lambda)}
\indicator{\lambda\in\bar{\Gamma}}
\frac{1}{\abs{\bar{\Gamma}\cap\mathcal{L}^{\#}}}
\abs{\varphi(\gamma_{1},\dots,\gamma_{m})}
\prod_{i=1}^{m}\abs{\hat{w}^{\#}_{z}(\gamma_{i})}
\\
\le
\abs{\partial^{\text{in}}\Lambda}
\max_{\lambda\in\partial^{\text{in}}\Lambda}
\sum_{\substack{\gamma_{1}\in \mathcal{C}^{\#}\\\bar{\gamma}_{1}\ni\lambda}}
\abs{\hat{w}^{\#}_{z}(\gamma_{1})}
\sum_{m=1}^{\infty}
m
\sum_{\gamma_{2}\in \mathcal{C}^{\#}}\dots\sum_{\gamma_{m}\in \mathcal{C}^{\#}}
\abs{\varphi(\gamma_{1},\dots,\gamma_{m})}
\prod_{i=2}^{m}\abs{\hat{w}^{\#}_{z}(\gamma_{i})}
\le\abs{\partial^{\text{in}}\Lambda},
\end{multlined}
\end{equation}
again using \eqref{eqn:largercluster}. 
Therefore, 
\begin{equation}
\label{eqn:difference_of_partition_functions}
\abs{\log\frac{\Xi^{\#}_{z}(\Lambda)}{\abs{z}^{\abs{\mathcal{L}^{\#}\cap\Lambda}}}-\log\frac{\Xi^{\#'}_{z}(\Lambda)}{\abs{z}^{\abs{\mathcal{L}^{\#'}\cap\Lambda}}}}
\le\abs{\abs{\Lambda\cap \mathcal{L}^{\#}}-\abs{\Lambda\cap \mathcal{L}^{\#'}}}+2\abs{\partial^{\text{in}}\Lambda}
\le\left(2+2\mu^{-1}\right)\abs{\partial^{\text{in}}\Lambda}
\end{equation}
where we use Lemma \ref{lem:intersection_estimate} in the last inequality.
Putting together \eqref{eqn:bye_variable_z}, \eqref{eqn:bye_variable_z_error}, \eqref{eqn:ratio_of_partition_functions}, and \eqref{eqn:difference_of_partition_functions}, we get
\begin{equation}
\abs{\log\frac{\Xi^{\#'}_{\mathbf{z}}(\Lambda)}{\Xi^{\#}_{\mathbf{z}}(\Lambda)}-\log\frac{\abs{z}^{\abs{\mathcal{L}^{\#'}\cap\Lambda}}}{\abs{z}^{\abs{\mathcal{L}^{\#}\cap\Lambda}}}}\le\sum_{i=1}^{n}(e^\frac{c}{n}+1)(2\eta+1)+\left(2+2\mu^{-1}\right)\abs{\partial^{\text{in}}\Lambda}\le\varsigma\abs{\partial^{\text{in}}\Lambda},
\end{equation} 
where the last inequality holds as long as
\begin{equation}
\label{eqn:boundary-coefficient-condition-2}
\varsigma\ge 3n(e^{\frac{c}{n}}+1)+2+2\mu^{-1}.\qedhere
\end{equation}
\end{proof}

\subsection{Proof of the main theorems}

Our main theorems follow directly from the estimates in Proposition \ref{prop:central_estimates}.

\begin{proof}[Proof of Theorem \ref{thm:analyticity}]
By \eqref{eqn:weightestimate} and \eqref{eqn:cutoff_weight}, the cutoff weights coincide exactly with the true weights. 
Thus, it follows from \eqref{eqn:cutoff_partition_function_evaluated} that
\begin{equation}
\label{eqn:finite_volume_pressure_expanded}
\begin{multlined}
p^{\#}_{z}(\Lambda)
:=\frac{1}{\abs{\Lambda}}\log\Xi^{\#}_{z}(\Lambda)
\\
=\frac{\abs{\mathcal{L}^{\#}\cap\Lambda}}{\abs{\Lambda}}
\left[
\log z
+\sum_{m=1}^{\infty}\sum_{\gamma_{1}\in \mathcal{C}^{\#}}\dots\sum_{\gamma_{m}\in \mathcal{C}^{\#}}
\indicator{\lambda\in\bar{\Gamma}}
\frac{1}{\abs{\bar{\Gamma}\cap\mathcal{L}^{\#}}}
\varphi(\gamma_{1},\dots,\gamma_{m})\prod_{i=1}^{m}w^{\#}_{z}(\gamma_{i})
\right]
+\frac{O(\abs{\partial^{\text{in}}\Lambda})}{\abs{\Lambda}}.
\end{multlined}
\end{equation}
Taking the limit $\Lambda\Uparrow\Lambda_{\infty}$, we obtain the expansion
\begin{equation}\label{eqn:pressurehighfugacityexpansion}
p(z)-\rho_{\max}\log z
=\rho_{\max}\sum_{m=1}^{\infty}\sum_{\gamma_{1}\in \mathcal{C}^{\#}}\dots\sum_{\gamma_{m}\in \mathcal{C}^{\#}}
\indicator{\lambda\in\bar{\Gamma}}
\frac{1}{\abs{\bar{\Gamma}\cap\mathcal{L}^{\#}}}
\varphi(\gamma_{1},\dots,\gamma_{m})\prod_{i=1}^{m}w^{\#}_{z}(\gamma_{i}),
\end{equation}
where the series converges uniformly for $\abs{z}\ge z_{0}$ by \eqref{eqn:clusterexpansionboundaryterm}. 
Hence, to prove the analyticity of \eqref{eqn:pressurehighfugacityexpansion}, it suffices to check that the summands are analytic. 
Indeed, for each GFc $\gamma\in\mathcal{C}^{\#}$, the weight $w_{z}^{\#}(\gamma)$ is a rational function of $z$ (cf. \eqref{eqn:partitionfunctiondefinition} and \eqref{eqn:weightdefinition}) and bounded on $\abs{z}\ge z_{0}$ by \eqref{eqn:weightestimate}, hence analytic. 
The proof is now complete.
\end{proof}

\begin{proof}[Proof of Theorem \ref{thm:crystallization}]
First, we prove that the series
\begin{equation}\label{eqn:firstderivativestep1}
\sum_{m=1}^{\infty}\sum_{\gamma_{1}\in \mathcal{C}^{\#}}\dots\sum_{\gamma_{m}\in \mathcal{C}^{\#}}
\abs{\varphi(\gamma_{1},\dots,\gamma_{m})}
\abs{\evaluate{\frac{\partial}{\partial\log \mathbf{z}(x)}\prod_{i=1}^{m}w^{\#}_{\mathbf{z}}(\gamma_{i})}{z}}
\end{equation} 
converges for $z\ge z_{0}$.
Indeed, bounding \eqref{eqn:firstderivativestep1} by Line 2 of \eqref{eqn:derivativebeginning} with $\mathcal{C}^{\#}(\Lambda)$ replaced by $\mathcal{C}^{\#}$, the computations from \eqref{eqn:derivativebeginning} to \eqref{eqn:derivativeend} can be repeated with all instances of $\mathcal{C}^{\#}(\Lambda)$ replaced by $\mathcal{C}^{\#}$, which shows that \eqref{eqn:firstderivativestep1} is uniformly bounded by $1$.
Note that this justifies the interchange of differentiation and summation in 
\begin{equation}\label{eqn:firstderivativestep1.5}
\evaluate{\frac{\partial}{\partial\log \mathbf{z}(x)}\log\frac{\Xi^{\#}_{\mathbf{z}}(\Lambda)}{\mathbf{z}^{\#}(\Lambda)}}{z}
=\sum_{m=1}^{\infty}\sum_{\gamma_{1}\in \mathcal{C}^{\#}(\Lambda)}\dots\sum_{\gamma_{m}\in \mathcal{C}^{\#}(\Lambda)}\varphi(\gamma_{1},\dots,\gamma_{m})\evaluate{\frac{\partial}{\partial\log \mathbf{z}(x)}\prod_{i=1}^{m}w^{\#}_{\mathbf{z}}(\gamma_{i})}{z}.
\end{equation}

Second, we show that the RHS of \eqref{eqn:firstderivativestep1.5} converges to 
\begin{equation}\label{eqn:1pointhighfugacityexpansion}
\sum_{m=1}^{\infty}
\sum_{\gamma_{1}\in \mathcal{C}^{\#}}\dots\sum_{\gamma_{m}\in \mathcal{C}^{\#}}
\varphi(\gamma_{1},\dots,\gamma_{m})
\evaluate{\frac{\partial}{\partial\log \mathbf{z}(x)}\prod_{i=1}^{m}w^{\#}_{\mathbf{z}}(\gamma_{i})}{z}
\end{equation}
uniformly for $z\ge z_{0}$ in the limit $\Lambda\Uparrow\Lambda_{\infty}$. 
Indeed, their difference is bounded by
\begin{equation}\label{eqn:derivativestep2.1}
\sum_{m=1}^{\infty}\sum_{\gamma_{1}\in \mathcal{C}^{\#}}\dots\sum_{\gamma_{m}\in \mathcal{C}^{\#}}
\indicator{\Gamma\not\sqsubset\mathcal{C}^{\#}(\Lambda)}
\abs{\varphi(\gamma_{1},\dots,\gamma_{m})}
\abs{\evaluate{\frac{\partial}{\partial\log \mathbf{z}(x)}\prod_{i=1}^{m}w^{\#}_{\mathbf{z}}(\gamma_{i})}{z}}.
\end{equation}
Notice that the support of any tuple $\Gamma$ contributing to \eqref{eqn:derivativestep2.1} must intersect $\partial^{\text{in}}\Lambda$ and, by Line 2 of \eqref{eqn:derivativebeginning} and \eqref{eqn:firstderivativeindicatorbound}, enclose the point $x$ in the sense that $x\in\cup_{\gamma\in X'}\Int\gamma$. 
Hence, \eqref{eqn:derivativestep2.1} is bounded by
\begin{equation}\label{eqn:derivativestep2.2}
\sum_{m=1}^{\infty}\sum_{\gamma_{1}\in \mathcal{C}^{\#}}\dots\sum_{\gamma_{m}\in \mathcal{C}^{\#}}
\indicator{\operatorname{vol}(\Gamma)\ge\mathrm{d}_{\Lambda_{\infty}}(x,\partial^{\text{in}}\Lambda)}
\abs{\varphi(\gamma_{1},\dots,\gamma_{m})}
\abs{\evaluate{\frac{\partial}{\partial\log \mathbf{z}(x)}\prod_{i=1}^{m}w^{\#}_{\mathbf{z}}(\gamma_{i})}{z}},
\end{equation}
where we define the \emph{volume} of a tuple $\Gamma$ as $\operatorname{vol}(\Gamma):=\sum_{i}\abs{\bar{\gamma}_{i}}$.
Since $\mathrm{d}_{\Lambda_{\infty}}(x,\partial^{\text{in}}\Lambda)\rightarrow\infty$ as $\Lambda\Uparrow\Lambda_{\infty}$, by the first step, \eqref{eqn:derivativestep2.2} vanishes uniformly in the thermodynamic limit.

Finally, by the same argument as in the proof of Theorem \ref{thm:analyticity}, we conclude that \eqref{eqn:1pointhighfugacityexpansion} is (real) analytic in $z$ for $z\ge z_{0}$.
\end{proof}

\section{Examples}
\label{sec:examples}
Verifying Assumption \ref{assumption} boils down to computing and inspecting a set of \emph{local configurations} in the model that maximize the local density around a given particle.

\begin{definition}
  Without loss of generality, a {\it local configuration} is a configuration of particles that includes a particle at $\mathbf 0$.
  The {\it density} of a local configuration is the local density\-~\eqref{eqn:local_density} of the particle at $\mathbf 0$: $\rho_X(\mathbf 0)$.
  A {\it maximal-density local configuration} maximizes the local density at $\mathbf 0$.
\end{definition}

Below, we discuss two concrete models of interest: the 3- and 4-staircase model on $\Z^{2}$, and the disk model on $\Z^{2}$ of radius $5/2$.
The argument in both cases consists in computing the local configurations that maximize the local density, and extending them to close-packings on $\mathcal L_{\infty}$.
To prove that these are ground states, we use the following lemma.

\begin{lemma}\label{lemma:rholoc_bound_rho}
  If $\Lambda_{\infty}$ is a periodic graph, then
  \begin{equation}
    \rho_{\mathrm{max}}^{\mathrm{loc}}\ge
    \rho_{\mathrm{max}}
  \end{equation}
  (see \eqref{rhomax} and Definition \ref{def:local_density}).
\end{lemma}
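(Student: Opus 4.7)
The plan is to exhibit, for every $\rho < \rho_{\mathrm{max}}$, a configuration $X \in \Omega(\Lambda_\infty)$ with $\vec{0} \in X$ and $\rho_X(\vec{0}) \geq \rho$, which yields the lemma after letting $\rho \nearrow \rho_{\mathrm{max}}$ and invoking sequential compactness.

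First, I would construct, for each $n$, an auxiliary periodic configuration of high density. Using the periodicity of $\Lambda_\infty$, pick a large fundamental parallelepiped $D_n \subset \Lambda_\infty$ compatible with the period lattice and a maximum-density configuration $X_n \in \Omega(D_n)$ under periodic boundary conditions, then extend $X_n$ periodically to a configuration $\tilde X_n \in \Omega(\Lambda_\infty)$ with density $\rho_n := |\tilde X_n \cap D_n|/|D_n| \to \rho_{\mathrm{max}}$. Because $\tilde X_n$ is periodic, no site of $\Lambda_\infty$ lies arbitrarily far from a particle, so every Voronoi cell $V_{\tilde X_n}(\sigma_x)$ is bounded and $\tilde X_n$ covers $\Lambda_\infty$ in the sense that each $\lambda \in \Lambda_\infty$ lies in at least one cell.

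Next, I would run the same swap-of-sums computation as in the proof of Lemma \ref{lemma:rholoc_cst} to show that the average local density over a fundamental domain is $\rho_n$. Expanding
\begin{equation}
  \sum_{x \in \tilde X_n \cap D_n} \rho_{\tilde X_n}(x)^{-1}
  = \sum_{\lambda \in \Lambda_\infty}
    \frac{|\{x \in \tilde X_n \cap D_n : \lambda \in V_{\tilde X_n}(\sigma_x)\}|}{|\{z \in \tilde X_n : \lambda \in V_{\tilde X_n}(\sigma_z)\}|},
\end{equation}
one sees that for $\lambda$ in the interior of $D_n$ every particle whose cell contains $\lambda$ lies in $D_n$, so the summand equals $1$; the remaining contributions are supported in a $\partial^{\mathrm{in}} D_n$-neighbourhood and are $O(|\partial^{\mathrm{in}} D_n|) = o(|D_n|)$ as $D_n$ grows. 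Hence the average of $\rho_{\tilde X_n}(x)^{-1}$ over $x \in \tilde X_n \cap D_n$ is $|D_n|/|\tilde X_n \cap D_n| + o(1) = \rho_n^{-1} + o(1)$, and by pigeonhole there exists $x_n \in \tilde X_n \cap D_n$ with $\rho_{\tilde X_n}(x_n) \geq \rho_n - o(1)$.

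Finally, I would translate $\tilde X_n$ by an element of the translation group of $\Lambda_\infty$ so as to place $x_n$ at a vertex in a fixed fundamental domain of that group; since there are only finitely many orbits, a subsequence can be chosen along which the translated particle coincides with a single vertex, which we identify with $\vec{0}$ (working in the relevant orbit). Call the shifted configurations $Y_n$; they satisfy $\vec{0} \in Y_n$ and $\rho_{Y_n}(\vec{0}) \geq \rho_n - o(1)$. By the sequential compactness in Lemma \ref{lem:pointwiseconvergence}, pass to a convergent subsequence $Y_{n_k} \to Y^\star \in \Omega(\Lambda_\infty)$ with $\vec{0} \in Y^\star$, and apply the Fatou-type semicontinuity estimate used in the proof of Lemma \ref{lem:equiv_assum} (specifically \eqref{eqn:fatou}--\eqref{eqn:lhs}) to obtain $\rho_{Y^\star}(\vec{0})^{-1} \leq \liminf_k \rho_{Y_{n_k}}(\vec{0})^{-1}$, and therefore $\rho_{Y^\star}(\vec{0}) \geq \limsup_k \rho_{Y_{n_k}}(\vec{0}) \geq \rho_{\mathrm{max}}$. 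The main technical obstacle is bookkeeping the boundary contributions in the swap identity to make the averaging argument quantitative; the mild orbit issue from the translation group is handled by the finite pigeonhole described above.
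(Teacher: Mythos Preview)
Your argument is correct but substantially more elaborate than necessary, and the final compactness/Fatou step is redundant. Once you have translated so that $\vec 0\in Y_n$ and $\rho_{Y_n}(\vec 0)\ge\rho_n-o(1)$, the definition of $\rho_{\max}^{\loc}$ as a supremum already yields $\rho_{\max}^{\loc}\ge\rho_n-o(1)$ for every $n$, and letting $n\to\infty$ finishes; there is no need to construct a limiting configuration $Y^\star$.

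The paper's proof is a one-line averaging argument that short-circuits almost all of your work. Rather than exhibiting a single particle of high local density via pigeonhole, it uses that $\rho_X(x)\le\rho_{\max}^{\loc}$ for \emph{every} $x\in X$ (by translation invariance of the periodic graph), so $(\rho_{\max}^{\loc})^{-1}$ is automatically bounded by the average of $\rho_X(x)^{-1}$ over $x\in X\cap\Lambda$. The same swap-of-sums identity you use then bounds this average by $|\Lambda|/|X\cap\Lambda|$, and taking $\Lambda\Uparrow\Lambda_\infty$ gives $(\rho_{\max}^{\loc})^{-1}\le\rho_{\max}^{-1}$. No periodic extension, no pigeonhole, no compactness, no semicontinuity. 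Your route does have the minor virtue of making the boundary bookkeeping in the swap identity cleaner (exact periodicity removes the $o(1)$ entirely), but this is not worth the extra machinery; the paper's argument handles the boundary by simply passing to the Van Hove limit.
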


\begin{proof}
  Given $X\in\Omega(\Lambda_{\infty})$ and $\Lambda\Subset\Lambda_{\infty}$ with $\Lambda\cap X\ne\emptyset$, we have, by Definition \ref{def:local_density},
  \begin{equation}
  \begin{multlined}
    (\rho_{\mathrm{max}}^{\mathrm{loc}})^{-1}
    \le
    \frac1{\abs{X\cap\Lambda}}\sum_{x\in X\cap\Lambda}\rho_X(x)^{-1}
    =
    \frac1{\abs{X\cap\Lambda}}\sum_{x\in X\cap\Lambda}\sum_{\lambda\in V_X(\sigma_x)}\frac1{\abs{\set{z\in X\mid\lambda\in V_X(\sigma_z)}}}
    \\
    =\frac1{\abs{X\cap\Lambda}}\sum_{\lambda\in \Lambda}\frac{\abs{\set{x\in X\cap\Lambda\mid\lambda\in V_X(\sigma_x)}}}{\abs{\set{z\in X\mid\lambda\in V_X(\sigma_z)}}}
    \le
    \frac{\abs{\Lambda}}{\abs{X\cap\Lambda}}.
    \end{multlined}
  \end{equation}
  Therefore,
  \begin{equation}
    (\rho_{\mathrm{max}}^{\mathrm{loc}})^{-1}
    \le\rho_{\mathrm{max}}^{-1}(\Lambda)
  \end{equation}
  after which we pass to the limit $\Lambda\Uparrow\Lambda_{\infty}$ (which exists because of the periodicity of $\Lambda_{\infty}$).
\end{proof}

\subsection{$3$- and $4$-staircases}
\label{subsec:n_staircases}

For $n\ge 3$, the $n$-staircase model on $\Z^{2}$ is defined as follows: the support of the particles is
\begin{equation}
  \omega_n:=\bigcup_{\displaystyle\mathop{\scriptstyle(x,y)\in\mathbb Z^{2}}_{x,y\ge0,\ x+y\le n-1}}\left(x-{\textstyle\frac12},x+{\textstyle\frac12}\right]\times\left(x-{\textstyle\frac12},x+{\textstyle\frac12}\right].
\end{equation}
See Figure \ref{fig:staircase}.

\begin{figure}
  \centering
  \begin{subfigure}{0.3\textwidth}
      \centering
      \includegraphics[width=1.6cm]{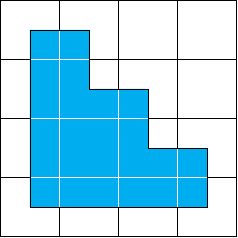}
  \end{subfigure}
  \begin{subfigure}{0.3\textwidth}
      \centering
      \includegraphics[width=2.0cm]{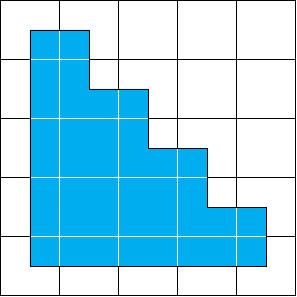}
  \end{subfigure}
  \caption{The 3- and 4-staircase.}
  \label{fig:staircase}
\end{figure}

We will prove that the $3$- and $4$-staircase models satisfy Assumption \ref{assumption}, and, therefore, crystallize at high fugacities.
The arguments can be extended to the general case of $n$-staircases.
In fact, the analog of Lemma \ref{lem:n_staircases_local_configurations} is proved in Appendix \ref{appx:n_staircases}.
Let us mention that in the general case, even and odd $n$ behave differently, which is why we discuss both examples $n=3,4$.

\begin{lemma}\label{lem:n_staircases_local_configurations}
If $n=4$, then the density of any local configuration $X$ is maximized if and only if
\begin{equation}
  \textstyle
  \set{\pm(2,2),\pm(-4,2),\pm(2,-4)}\subseteq X
  \label{n_staircase_local_even}
\end{equation}
and if $n=3$, the density is maximized if and only if
\begin{equation}
  \textstyle
  \set{\pm(2,1),\pm(-3,2),\pm(1,-3)}\subseteq X
  \quad\mathrm{or}\quad
  \set{\pm(1,2),\pm(-3,1),\pm(2,-3)}\subseteq X
  \label{n_staircase_local_odd}
\end{equation}
see Figure \ref{fig:staircase_close_packings}.
A configuration that does not include these has, for $n=3,4$,
\begin{equation}
  \rho_{X}^{-1}(\mathbf 0)\ge\rho_{\mathrm{max}}^{\mathrm{loc}}{}^{-1}+\epsilon_n
  ,\quad
  \epsilon_3=\frac13
  ,\quad
  \epsilon_4=\frac16
  .
  \label{epsilon_stairs}
\end{equation}
\end{lemma}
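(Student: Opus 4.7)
The plan is to reduce the statement to a finite combinatorial analysis: characterize the discrete Voronoi cell $V_X(\sigma_\mathbf{0})$ as a function of the particles in $X$ that sit near $\sigma_\mathbf{0}$, then identify which such ``near-neighbor'' configurations minimize the weighted volume $\sum_{\lambda \in V_X(\sigma_\mathbf{0})} 1/\abs{\set{z \in X \mid \lambda \in V_X(\sigma_z)}}$ appearing in \eqref{eqn:local_density}.

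First I would enumerate the finite set of positions at which a second particle can be placed. Since $\omega_n$ has bounded diameter, only competitors whose support lies within a bounded neighborhood of $\sigma_\mathbf{0}$ can touch $V_X(\sigma_\mathbf{0})$; the hard-core constraint $\omega_\mathbf{0} \cap \omega_x = \emptyset$ further restricts this to a small, explicitly enumerable set of \emph{competitor positions}. I would list them using the reflection symmetry of $\omega_n$ across the diagonal $y=x$ together with the $\pi$-rotation symmetry of the plane, which cuts the number of essentially distinct cases down to a manageable handful.

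Next, for each choice of competitor set, I would directly compute the weighted Voronoi volume. The minimum is attained precisely when six competitors are arranged so that each is ``pressed against'' $\sigma_\mathbf{0}$, yielding the configurations \eqref{n_staircase_local_even} for $n=4$ and \eqref{n_staircase_local_odd} for $n=3$. The difference between the two parities is geometric: for $n=4$ the staircase admits a unique tight arrangement up to symmetry, while for $n=3$ a chirality choice is free, producing the two options in \eqref{n_staircase_local_odd}. To establish the gap \eqref{epsilon_stairs}, I would show that removing or shifting any one competitor enlarges $V_X(\sigma_\mathbf{0})$ by a block of lattice sites whose total weighted contribution is bounded below by $\epsilon_n$; identifying the smallest such block in each model yields $\epsilon_3 = 1/3$ and $\epsilon_4 = 1/6$.

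The main obstacle is the combinatorial case analysis. Even with the symmetry reductions, one must verify that no seemingly ``loose'' competitor arrangement accidentally matches the optimal local density by an unexpected compensation elsewhere in the cell. A clean way to handle this would be a local monotonicity statement: sliding any competitor strictly outward (in a suitable partial order) strictly increases $\rho_X^{-1}(\mathbf{0})$ by a quantifiable amount. With such a statement in hand, one restricts to ``tight'' arrangements and finishes with a small finite check. The general-$n$ version deferred to Appendix \ref{appx:n_staircases} presumably requires a cleaner parity-aware structural argument that I would not attempt here; the explicit computation for $n=3,4$ is tractable precisely because the number of competitor slots is small enough to inspect by hand.
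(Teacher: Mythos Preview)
Your brute-force enumeration strategy is valid in principle, but the paper's proof is considerably shorter because it avoids enumerating competitor configurations for the lower bound. The key observation you are missing is this: any site $\lambda$ adjacent to $\sigma_{\mathbf 0}$ that is left uncovered can belong to the Voronoi cells of at most three particles (the one at $\mathbf 0$ and at most two others), so its contribution to $\rho_X^{-1}(\mathbf 0)$ is at least $1/3$. A short geometric inspection (three colored pairs of neighboring sites for $n=3$, three colored triples for $n=4$; see Figure~\ref{fig:staircase_max}) shows that the hard-core constraint forces at least $3$ (resp.\ $6$) such neighbors to remain empty. Adding $|\sigma_{\mathbf 0}|$ gives $\rho_X^{-1}(\mathbf 0)\ge 7$ (resp.\ $12$) immediately, which matches the values achieved by the configurations in \eqref{n_staircase_local_even}--\eqref{n_staircase_local_odd}. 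This two-line inequality replaces your full weighted-Voronoi computation over all competitor sets.

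For uniqueness the paper does run a case analysis, but a constructive one rather than an exhaustive one: fix one symmetry choice (which of the two yellow sites is covered, for $n=3$), then observe that each subsequent placement is \emph{forced} by the requirement that no additional neighbor be left empty and that every empty neighbor touch exactly three particles. For $n=4$ one first rules out the off-center yellow positions by noting they would leave a neighbor with multiplicity only $2$ (weight $1/2>1/3$), and then the same forcing argument applies. This is a far smaller tree than enumerating all admissible competitor sets.

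Your proposed ``local monotonicity'' lemma is the weakest link. Because the summand $1/|\{z:\lambda\in V_X(\sigma_z)\}|$ depends on how many particles share each boundary site, sliding a competitor outward can simultaneously enlarge $V_X(\sigma_{\mathbf 0})$ and raise the multiplicity of some boundary sites; the net effect on $\rho_X^{-1}(\mathbf 0)$ is not obviously monotone, and you would need a careful argument here. The paper sidesteps this entirely: the gap $\epsilon_n$ falls out of the same bookkeeping, since any deviation from the optimal arrangement either creates an extra empty neighbor (cost $\ge 1/3$) or, for $n=4$, drops the multiplicity of one empty neighbor from $3$ to $2$ (cost $1/2-1/3=1/6$).
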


\begin{figure}[h]
	\centering
    \begin{subfigure}{0.4\textwidth}
        \centering
        \includegraphics[width=4cm]{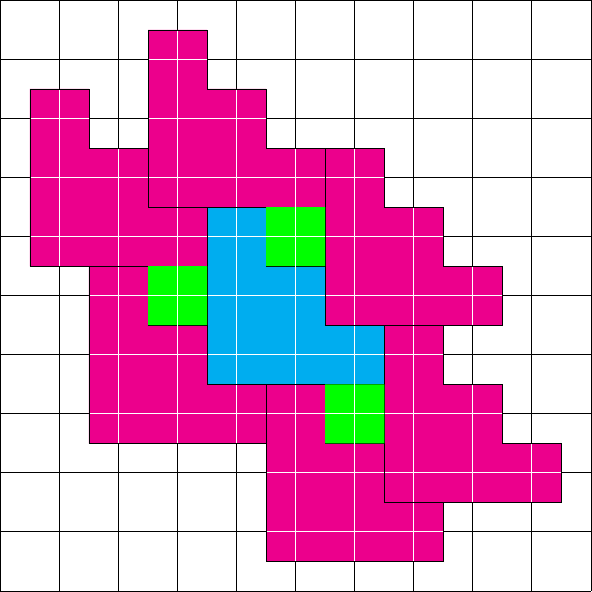}
        \caption{}
        \label{fig:staircase_close_packings_odd_n}
    \end{subfigure}
    \begin{subfigure}{0.4\textwidth}
        \centering
        \includegraphics[width=5.2cm]{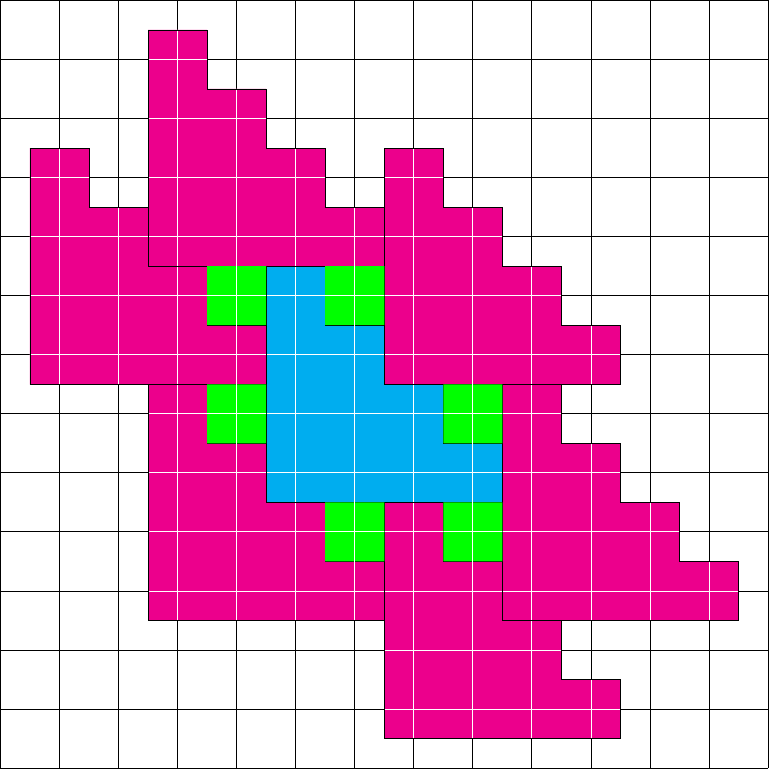}
        \caption{}
        \label{fig:staircase_close_packings_even_n}
    \end{subfigure}
    \captionsetup{font=small,width=0.8\textwidth,singlelinecheck=off}
    \caption[]{Maximal-density local configurations in the $3$- and $4$-staircase models on $\Z^{2}$.
    The Voronoi cell of the central (cyan) particle consists of the support of the particle along with the green sites.
    Each green site carries a weight of $\frac13$.
    \begin{enumerate}
    	\item If $n=3$, there are exactly two maximal-density local configurations; the one shown in Figure \ref{fig:staircase_close_packings_odd_n} and its reflection across the line $y=x$.
	The local density of the cyan particle is $\frac17$.
		\item If $n=4$, there is exactly one maximal-density local configuration as shown in Figure \ref{fig:staircase_close_packings_even_n}.
	The local density of the cyan particle is $\frac1{12}$.
    \end{enumerate}
    }
    \label{fig:staircase_close_packings}
\end{figure}

\begin{proof}
  First of all, we compute the local density for the configurations in Figure \ref{fig:staircase_close_packings}:
  \begin{equation}
  \rho_X(\mathbf 0)=\begin{cases}
      \frac17&\mathrm{for\ }n=3 \\
      \frac1{12}&\mathrm{for\ }n=4
      \end{cases}
      .
    \label{local_density_staircase}
  \end{equation}

  We seek to maximize the local density, in other words, to minimize
  \begin{equation}
  \rho_{X}^{-1}(\mathbf 0):=\sum_{\lambda\in V_{X}(\sigma_{\mathbf 0})}\frac{1}{\abs{\set{z\in X\mid\lambda\in V_{X}(\sigma_{z})}}}
  \end{equation}
  (see \eqref{eqn:local_density}).

  First, note that an uncovered site $\lambda$ that neighbors $\sigma_{\mathbf 0}$: $d_{\mathbb Z^2}(\lambda,\sigma_{\mathbf 0})=1$, can neighbor at most two other particles.
  Therefore, such a site can be in the Voronoi cell of at most 3 particles, so its contribution to $\rho_X^{-1}$ is $\ge\frac13$.

  Now, because of the hard-core repulsion, for $n=3$, one sees from Figure \ref{fig:3staircase_max} that at least 3 of the sites neighboring $\sigma_{\mathbf 0}$ must be left uncovered.
  Since each site contributes at least $\frac13$,
  \begin{equation}
    \rho_X^{-1}(\mathbf 0)\ge 7
    .
  \end{equation}
  For $n=4$, by Figure \ref{fig:4staircase_max}, at least 6 sites must be left uncovered.
  Since each site contributes at least $\frac13$,
  \begin{equation}
    \rho_X^{-1}(\mathbf 0)\ge 12
    .
  \end{equation}
  Thus, by \eqref{local_density_staircase}, the configurations \eqref{n_staircase_local_even}-\eqref{n_staircase_local_odd} maximize the local density.

  \begin{figure}[h]
	\centering
    \begin{subfigure}{0.4\textwidth}
        \centering
        \includegraphics[width=2.4cm]{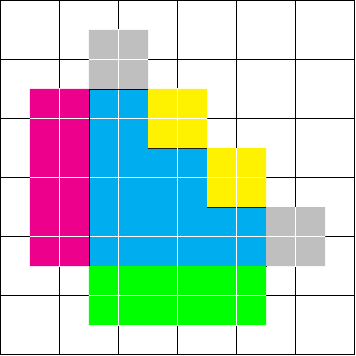}
        \caption{}
        \label{fig:3staircase_max}
    \end{subfigure}
    \begin{subfigure}{0.4\textwidth}
        \centering
        \includegraphics[width=2.8cm]{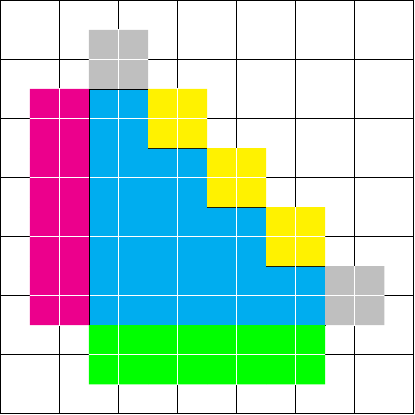}
        \caption{}
        \label{fig:4staircase_max}
    \end{subfigure}
    \caption{For $n=3$, at least one of the magenta, one of the green, and one of the yellow sites must be left uncovered.
    For $n=4$, at least {\it two} of the magenta, two of the green, and two of the yellow sites must be left uncovered.}
        \label{fig:staircase_max}
  \end{figure}

  Let us now check that these are the only ones.
  Consider $n=3$ first.
  (To follow this discussion, it may be helpful to draw the particles on a sheet of graph paper as they are added.)
  Choose which of the two yellow sites is covered.
  Without loss of generality, let us assume it is $(2,1)$.
  Since the gray sites are neither yellow, nor green, nor magenta, they must be covered to maximize the local density.
  Having placed a particle at $(2,1)$, the gray site $(3,0)$ can be covered in only one way.
  Having placed this latest particle, the green site $(2,-1)$ cannot be covered, so the other two green sites must be covered, which can only be done in one way.
  Once this is done, the magenta site $(-1,1)$ can no longer be covered, so the other magenta site $(-1,2)$ must be covered, which can only be done in one way.
  This then leaves a unique way of covering the remaining gray site $(0,3)$.
  Thus, the maximal-density local configuration is unique, once we have chosen which yellow site is to be covered.

  The argument for $n=4$ is similar.
  (Again, it is recommended to follow along with graphing paper.)
  Let us first try to place a particle in one of the yellow sites that is not $(2,2)$, say $(3,1)$.
  In this case, the other two yellow sites $(2,2)$ and $(1,3)$ are left unoccupied.
  However, $(2,3)$ will also be left unoccupied, so $(2,2)$ and $(1,3)$ will each only neighbor two particles, so their contribution to $\rho^{-1}$ will be at least $\frac12$ each, so the density will not be minimal.
  Therefore, the only yellow site that can be occupied in the maximal density configuration is $(2,2)$.
  Having fixed $(2,2)$, there are two ways of covering the gray site $(4,0)$, but one of these will leave the yellow site $(3,1)$ with only two neighbors, and will thus not be the maximal density.
  There is then just one possibility left to cover the gray site $(4,0)$.
  Having placed this particle, the green site $(3,-1)$ must be left uncovered.
  If we tried to cover the green site $(1,-1)$, then $(3,-1)$ would only neighbor two particles, and the density would not be maximal.
  Therefore, the green site $(1,-1)$ must be left uncovered, and so the green site $(2,-1)$ must be covered, which can only be done in one way.
  We repeat this argument to cover the gray site at $(0,4)$ and the magenta site $(-1,2)$.
  At this point, there is a unique way of covering the remaining sites at $(-1,0)$ and $(0,-1)$.

  Thus, for $n=4$, there is a unique way to maximize the local density.

  Finally, let us estimate $\epsilon$.
  For $n=3$, the only way to deviate from the construction above is if one had an extra empty site among the magenta, green, and yellow.
  This would increase $\rho_X^{-1}$ by $\ge\frac13$.
  (In fact, there exists a local configuration with $\rho_X^{-1}=7+\frac13$, (obtained by placing a particle at $(2,2)$) so this is optimal.)

  For $n=4$, one can deviate from the construction above in two ways: by adding an extra site, which would increase $\rho_X^{-1}$ by $\ge\frac13$, or by having one empty site neighbor $2$ particles instead of $3$, which would increase $\rho_X^{-1}$ by $\ge \frac12-\frac13=\frac16$.
  This is presumably not an optimal estimate.

\end{proof}

Using Lemma \ref{lem:n_staircases_local_configurations}, we construct configurations that have a constant maximal local density by extending the local configurations in Figure \ref{fig:staircase_close_packings}.
To do so, we apply Lemma \ref{lem:n_staircases_local_configurations} to a neighbor of $\mathbf 0$ and repeat.

We now verify Assumption \ref{assumption}.
\begin{enumerate}
  \item[\ref{asm:lattice}.] $\mathbb Z^2$ is a periodic graph with coordination number $4$ and it is easy to check (and this is also well known \cite{Timar2013}), the boundary of any connected set is $2$-connected.

  \item[\ref{asm:finitely_many_ground_states}.] By considering the possible translations of the packings in Lemma \ref{lem:n_staircases_local_configurations}, we find that there are $6\times2=12$ ground states for $n=3$ and $10$ ground states for $n=4$.

  \item[\ref{asm:isometry}.] If $n=4$, the ground states are related by translations.
  If $n=3$, they are related to each other by translations and the reflection $(x,y)\mapsto(y,x)$.
  These preserve the shapes of the particles.

  \item[\ref{asm:density_max_local}.] By Lemma \ref{lemma:rholoc_bound_rho}, $\rho_{\mathrm{max}}$ is smaller or equal to the local density of the configurations in Figure \ref{fig:staircase_close_packings} (which is the maximal local density).
  Since those can be extended to a configuration on all of $\Lambda_\infty$, $\rho_{\mathrm{max}}=\rho_{\mathrm{max}}^{\mathrm{loc}}$.

  \item[\ref{asm:imperfect_transition}.] Without loss of generality, let us consider a $\#$-correct particle at $\vec{0}$, which, by Lemma \ref{lem:n_staircases_local_configurations}, must then be in one of the local configurations (\ref{n_staircase_local_even})-(\ref{n_staircase_local_odd}).
  By Figure \ref{fig:staircase_close_packings}, no site at $\mathrm{d}_{\Lambda_{\infty}}=1$ from $\sigma_{\vec{0}}$ can neighbor any other particle than those already appearing in the local configuration. 
  In other words, the neighbors of $\vec{0}$ are precisely those specified in the local configuration.
  Now, if any neighbor of $\vec{0}$ is correct, it must too be in a local configuration specified in Lemma \ref{lem:n_staircases_local_configurations}, but it is straightforward to check that the only way this does not cause particles to overlap is if it is in the same local configuration as is $\vec{0}$, so they must be $\#$-correct for the same $\#\in\mathcal{G}$.

  \item[\ref{asm:density_local_density}.] By Lemma \ref{lem:n_staircases_local_configurations}, if a particle $x$ is incorrect, then
  \begin{equation}
    \rho_{X}^{-1}(\mathbf 0)\ge\rho_{\mathrm{max}}^{\mathrm{loc}}{}^{-1}+\epsilon_n
  \end{equation}
  where $\epsilon_n$ was defined in \eqref{epsilon_stairs}.
  Thus, Item \ref{asm:density_local_density} holds with $\mathcal R_1=\mathcal S_1=0$.
\end{enumerate}

Thus, for this model, Theorems \ref{thm:crystallization} and \ref{thm:analyticity} hold for $\abs{z}\ge z_0$.
We can get an explicit value for $z_0$ from Appendix \ref{appx:summary}; see (\ref{estimate_z}):
for $\Lambda_\infty=\mathbb Z^2$,
\begin{equation}
  \chi=4
  ,\quad
  I_d=\frac1{16}
\end{equation}
For $n=3$,
\begin{equation}
  \abs{\mathcal{G}}=12
  ,\quad
  \tau=5+\log2433024
  ,\quad
  \mu=\frac13
  ,\quad
  \varsigma=14
\end{equation}
\begin{equation}
  \mathcal R_0=2
  ,\quad
  \mathcal R_1=\mathcal S_1=0
  ,\quad
  \mathcal R_2=3
  ,\quad
  r_{\mathrm{eff}}=3
  ,\quad
  \mathcal S_0=15
\end{equation}
and since the set $\set{\lambda\in\mathbb Z^2\mid d_{\mathbb Z^2}(\lambda,0)\le \mathcal S_0}$ has area
\begin{equation}
  2\mathcal S_0^2+2\mathcal S_0-1
\end{equation}
the maximal number of particles in this set is bounded by
\begin{equation}
  \mathcal N\le\frac{2\mathcal S_0^2+2\mathcal S_0-1}{\abs{\sigma_x}}
  =\frac{479}6\le80
\end{equation}
\begin{equation}
  \rho_{\mathrm{max}}=\frac17
  ,\quad
  \rho_0\le\frac{479}{3355}
\end{equation}
and so
\begin{equation}
  z_0\ge\exp\left(\frac{70455}2(61+\log(2433024))\right)
  \approx\exp(2.67\times10^{6})
\end{equation}

For $n=4$,
\begin{equation}
  \abs{\mathcal{G}}=10
  ,\quad
  \tau=5+\log\frac{3520000}3
  ,\quad
  \mu=\frac13
  ,\quad
  \varsigma=14
\end{equation}
\begin{equation}
  \mathcal R_0=2
  ,\quad
  \mathcal R_1=\mathcal S_1=0
  ,\quad
  \mathcal R_2=4
  ,\quad
  r_{\mathrm{eff}}=4
  ,\quad
  \mathcal S_0=20
\end{equation}
\begin{equation}
  \mathcal N\le
  \frac{839}{10}
\end{equation}
\begin{equation}
  \rho_{\mathrm{max}}=\frac1{12}
  ,\quad
  \rho_0\le\frac{2517}{30209}
\end{equation}
so
\begin{equation}
  z_0\ge\exp\left(\frac{1087524}5\left(61+\log\left(\frac{3520000}3\right)\right)\right)
  \approx
  \exp(1.63\times10^{7})
  .
\end{equation}

Obviously, these values of $z_0$ are {\it far} from optimal, and could be improved with relatively little work.
Nevertheless, it is worth pointing out that our construction gives explicit values.

\subsection{Disks of radius $5/2$}

The hard-disk model on $\Z^{2}$ of radius $5/2$ has particles whose support is
\begin{equation}
  \omega:=\set{(x,y)\in\R^{2}\mid \sqrt{x^{2}+y^{2}}<5/2}.
\end{equation}
This model is the {\it 12th nearest neighbor exclusion}, and is equivalent to the {\it hard octagon model}; see Figure \ref{fig:octagon}.

\begin{figure}
  \hfil\includegraphics[width=2.4cm]{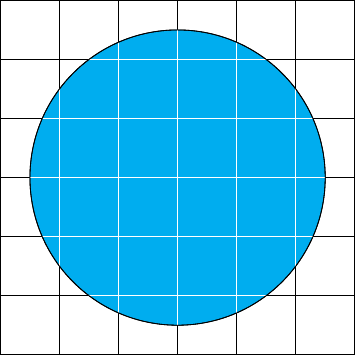}
  \hfil\includegraphics[width=2.4cm]{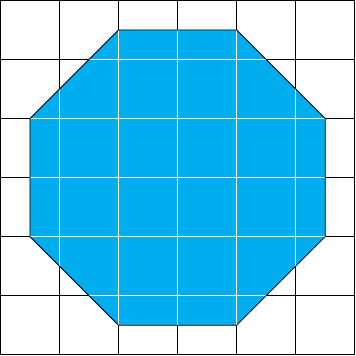}

  \caption{The hard-core model based on the disk of radius 2.5 is equivalent to that with hard-core octagons.}
  \label{fig:octagon}
\end{figure}

\begin{lemma}\label{lem:disk_local_configurations}
The density of a local configuration is maximal if and only if one of the following holds (see Figure \ref{fig:disk_close_packings}):
\begin{enumerate}
  \item
  \begin{equation}
    \Pi_1\Pi_2\set{(2,5),(-3,4),(-5,-1),(-2,-5),(3,-4),(5,1)}
    \subset X
  \end{equation}
  where $\Pi_1$ is one of the following operators: the identity, or the horizontal, or vertical reflection and $\Pi_2$ is one of the following operators: the identity, or the rotation by one of $\pi/2,\pi,-\pi/2$.

  \item
  \begin{equation}
    \Pi\set{(0,5),(-5,1),(-3,-4),(3,-4),(5,1)}
    \subset X
  \end{equation}
  where $\Pi$ is one of the following operators: the identity, or the rotation by one of $\pi/2,\pi,-\pi/2$.
\end{enumerate}
\end{lemma}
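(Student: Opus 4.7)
The plan is to follow the same strategy as in Lemma \ref{lem:n_staircases_local_configurations}: reduce the problem to a finite combinatorial search around the origin, compute the local density of each candidate directly, and rule out alternatives by case analysis. The key observation is that $\rho_X(\mathbf 0)^{-1}=\sum_{\lambda\in V_X(\sigma_\mathbf 0)}v_X(\lambda)$ depends only on those $y\in X$ whose support is close enough to $\mathbf 0$ to influence the Voronoi cell there. Since two disks of radius $5/2$ overlap if and only if $|x-y|<5$, the hard-core constraint forces every pair of centers to satisfy $|x-y|\ge 5$ (the $12$th-nearest-neighbor exclusion on $\mathbb Z^2$), so within any bounded region around $\mathbf 0$ there are only finitely many admissible local configurations.

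For each configuration listed in the two families, I would compute $\sum_\lambda v_X(\lambda)$ directly: the $21$ sites that constitute $\sigma_\mathbf 0$ each contribute weight $1$ (no other particle's support intersects them), and the remaining boundary sites of $V_X(\sigma_\mathbf 0)$ carry fractional weights $1/k$, where $k$ counts the particles whose reference Voronoi cells also contain that site. A straightforward (if tedious) check shows both families yield the same value, which we denote $(\rho_{\max}^{\mathrm{loc}})^{-1}$.

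The bulk of the work is to show that no other configuration achieves this minimum. I would enumerate the possible near-neighbor positions of $\mathbf 0$ by shell: the $12$ lattice points at distance exactly $5$ are $(\pm5,0),(0,\pm5),(\pm3,\pm4),(\pm4,\pm3)$, and the next shells at distances $\sqrt{26},\sqrt{29},\sqrt{32},\dots$ contain $(\pm5,\pm1),(\pm1,\pm5),(\pm5,\pm2),(\pm2,\pm5),(\pm4,\pm4),\dots$ The pairwise exclusion $|x-y|\ge5$ further restricts which subsets are admissible. For each admissible subset, one computes the resulting Voronoi cell at $\mathbf 0$ and verifies that either the subset coincides, up to the dihedral symmetries of $\mathbb Z^2$, with one of the families in the statement, or there is at least one boundary site whose weight is strictly larger than in the listed minimizers, forcing $\rho_X(\mathbf 0)^{-1}>(\rho_{\max}^{\mathrm{loc}})^{-1}$.

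The main obstacle is the sheer number of cases: many subsets of the first shell are compatible with pairwise distance $\ge5$, and each must be either matched to a configuration in the statement or excluded on the grounds that it leaves an unoccupied gap forcing extra weight into the cell. To keep the enumeration manageable, I would exploit the $D_4$ symmetry of $\mathbb Z^2$ to reduce to representative cases, and organize the search by the number of first-shell neighbors chosen --- this naturally separates type 1 (six first-shell neighbors forming a hexagonal-like ring) from type 2 (fewer first-shell neighbors, compensated by particles in the second shell). If any subcase remains stubborn, the whole enumeration can be delegated to a short computer-assisted search over all admissible configurations in a bounded neighborhood of $\mathbf 0$, since by the first observation only finitely many positions need be checked.
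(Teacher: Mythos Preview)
Your overall strategy---reduce to a finite local search, compute the value $23$ for the listed configurations, and eliminate all alternatives by case analysis under the $D_4$ symmetry---is correct and is essentially what the paper does. However, the paper organizes the enumeration quite differently, and your proposed organization contains a factual error that would send the search in the wrong direction.

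The paper does not enumerate by Euclidean shells of the neighboring particle centers. Instead it focuses on the $16$ lattice sites at graph distance $1$ from $\sigma_{\mathbf 0}$ and analyzes (i) how many of these a single neighboring particle can cover (at most $3$, and $3$ only if the particle sits at $(0,\pm5)$ or $(\pm5,0)$), and (ii) the weight contribution of each empty one (at least $\tfrac13$, and at least $\tfrac12$ in certain forced positions). From these two observations a short counting argument shows that a maximal-density local configuration must have either five or six neighboring particles, with at most one of them covering three sites. The dichotomy ``one particle covers three sites'' versus ``none does'' is exactly what separates the two families in the statement, and in each branch only a handful of placements survive.

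Your description of the two families is incorrect: type~1 is \emph{not} ``six first-shell neighbors''. The six centers $(2,5),(-3,4),(-5,-1),(-2,-5),(3,-4),(5,1)$ lie at Euclidean distances $\sqrt{29},5,\sqrt{26},\sqrt{29},5,\sqrt{26}$ respectively---only two are on the first shell. Type~2 has five neighbors, three of which are first-shell. So organizing the search by ``number of first-shell neighbors chosen'' does not cleanly separate the two families and would make the enumeration messier than necessary. If you reorganize around the $16$ adjacent sites and the $\tfrac13$/$\tfrac12$ weight bounds as above, the case analysis becomes short enough that no computer assistance is needed.
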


\begin{proof}
  We will prove this lemma by first reducing it to a finite (and somewhat small)  number of cases, and will then leave it up to the reader to check each case.

  First of all, the inverse of the local density of the configurations in Figure \ref{fig:disk_close_packings} is 23, that is, the 21 sites covered by the particle at $\mathbf 0$ plus 2.
  Thus, any maximal-density local configuration must leave empty sites near the particle at $\mathbf 0$ whose weight totals at most 2.

  Let us now emphasize a few properties.
  \begin{enumerate}
    \item\label{16sites} There are 16 sites that neighbor the particle at $\mathbf 0$ (that is, that are at distance 1 from $\sigma_{\mathbf 0}$).
    See Figure \ref{fig:disk_neighbors}.

    \item\label{cover3} A particle can cover at most 3 of these neighboring sites, and each empty site among these neighbors has a weight that is at least $1/3$ (since it can neighbor at most three particles).

    \item\label{leave2} Whenever a particle covers at least one neighbor of $\sigma_{\mathbf 0}$, there are at least 2 sites that must be left empty, and these sites neighbor $\sigma_{\mathbf 0}$ and either neighbor the particle or are at $(\pm1,\pm1)$ from the particle.
    If the particle covers 3 sites, then the two sites left uncovered have weight at least $1/2$.
    See Figure \ref{fig:disk_max}.
  \end{enumerate}

  \begin{figure}
    \centering
    \begin{subfigure}{0.3\textwidth}
      \hfil\includegraphics[width=3.2cm]{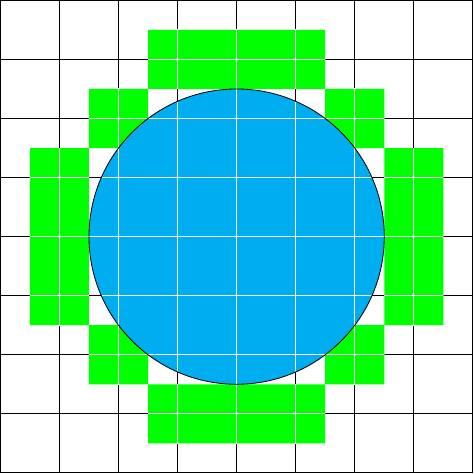}
      \caption{}
      \label{fig:disk_neighbors}
    \end{subfigure}
    \begin{subfigure}{0.3\textwidth}
      \hfil\includegraphics[width=2.4cm]{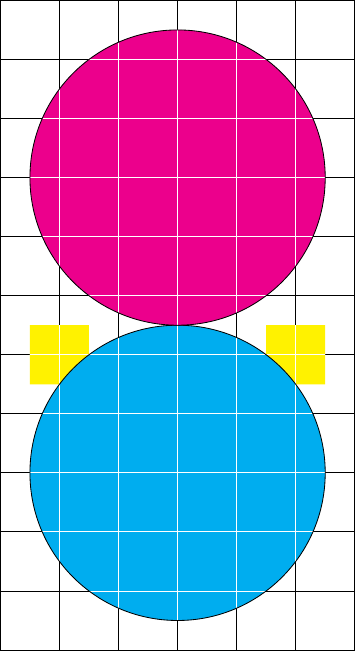}
      \caption{}
      \label{fig:disk_max_1}
    \end{subfigure}
    \begin{subfigure}{0.3\textwidth}
      \hfil\includegraphics[width=2.8cm]{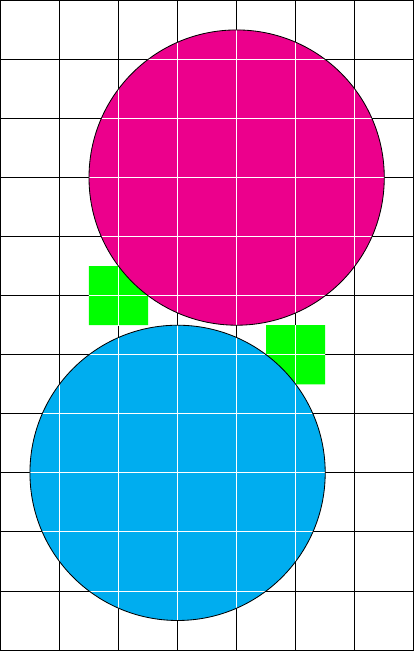}
      \caption{}
      \label{fig:disk_max_2}
    \end{subfigure}
    \caption{\ref{fig:disk_neighbors}: the 16 neighbors of the disk.
    \ref{fig:disk_max_1}: the magenta particle covers 3 neighbors of the cyan particle, and the two yellow sites must be left uncovered.
    In addition, the yellow sites cannot neighbor more than 2 particles, so their weight is at least $\frac12$.
    \ref{fig:disk_max_2}: the magenta particles cover 2 neighbors of the cyan particle, and the two green sites must be left uncovered.
    In this case, the green sites can neighbor up to 3 particles, so the weight is at least $\frac13$.}
    \label{fig:disk_max}
  \end{figure}

  By Items \ref{16sites} and \ref{cover3}, a close-packing configuration must involve at least 4 particles other than the one at $\mathbf 0$, as $(16-3\times3)/3>2$.

  Similarly, for a configuration with 4 particles other than the one at $\mathbf 0$ to be a close-packing, it must have at least 2 particles in a position in which they cover 3 neighbors, as $(16-2\times3-2\times2)/3=2$.
  Now, the only positions in which a particle covers 3 neighbors are $(0,5)$, $(0,-5)$, $(5,0)$ and $(-5,0)$.
  Let us first attempt to place a particle at $(0,5)$ and one at $(5,0)$: this leaves a site uncovered whose weight is 1, and at least two more with weight $1/2$.
  However, adding an extra particle will generate yet another empty site, so the density cannot be maximal.
  Let us now try to place a particle at $(0,5)$ and one at $(0,-5)$: by Item \ref{leave2}, this leaves four sites uncovered that each have a weight of at least $1/2$, so their total weight is 2.
  Because any extra particle that is placed will leave extra sites uncovered, this cannot be a maximal density configuration.

  Thus, maximal density configurations have to have at least 5 particles in addition to the one at $\mathbf 0$.
  The argument above still holds that we cannot have 2 particles covering 3 sites, so there must be at most 1 particle that covers 3 sites, and the others cover at most 2 sites.

  Let us first attempt the case in which 1 particle covers 3 sites (without loss of generality, assume it is at $(0,5)$ and use the symmetry of the model), and the others cover at most 2.
  By Item \ref{leave2}, this particle leaves two sites uncovered with a weight of at least $1/2$ each.
  By Item \ref{leave2} again, if there were 6 particles, there would also be at least 6 empty sites, and if two of these have weight $1/2$, the total weight of the empty sites is at least $7/3>2$, and so the density would not be maximal.
  Thus, when 1 particle covers 3 sites, we can only have 5 particles that neighbor $\sigma_{\mathbf 0}$.
  By Item \ref{leave2}, the total number of empty sites is at least 5, two of which have a weight $1/2$, so for the density to be maximal, the remaining 3 must have weight $1/3$, and so must neighbor 3 particles.
  At this stage, there are not very many possibilities: we have 5 particles other than that at $\mathbf 0$, one of which is at $(0,5)$, the other 4 must cover two neighbors of $\sigma_x$, and three of the empty sites that neighbor $\sigma_{\mathbf 0}$ must all neighbor three particles each.
  It is straightforward to check that there is only one way to do this, which is the one in Figure \ref{fig:disk_close_packing_2}.

  Now, if no particle covers 3 sites, then all cover at most 2.
  Thus, if there are 5 particles other than the one at $\mathbf 0$, then there must be at least $6$ empty sites that neighbor $\sigma_{\mathbf 0}$.
  If there are 6 particles, then by Item \ref{leave2}, there are at least 6 empty sites that neighbor $\sigma_{\mathbf 0}$ as well.
  Therefore, in either case, there must be exactly 6 empty sites (if there were more, then their total weight would be $>2$ and the density would not be maximal), whose weight is exactly $1/3$ each.
  By symmetry, we can assume without loss of generality that two of the sites $\set{(0,3),(1,3),(-1,3)}$ are covered (since there are $6$ vacancies, at least one of the 4 possible rotations of $\set{(0,3),(1,3),(-1,3)}$ must have two sites covered).
  There are 5 ways to cover two of those sites.
  \begin{itemize}
  \item
    First, consider placing particles at $(3,4)$ and $(-3,4)$.
    Following the prescription that every empty site must neighbor 3 particles, there are few possibilities for what goes next.
    In fact, one easily checks that there are no possible local configurations that follow the prescription.

  \item
    Next, consider placing particles at $(2,5)$ and $(-3,4)$.
    One then checks that there is only one way to follow the prescription: the one in Figure \ref{fig:disk_close_packing_1}.

  \item
    Otherwise, one can place the particles at $(-2,5)$ and $(3,4)$, which is the symmetric of the previous case.

  \item
    Next, one could place a particle at $(1,5)$.
    Following the prescription we find only one possibility: the $x\leftrightarrow y$ symmetric version of Figure \ref{fig:disk_close_packing_1} (that is, a $\pi/2$ rotation followed by a horizontal reflection).

  \item
    Finally, one could place a particle at $(-1,5)$, which is symmetric to the previous point.
  \end{itemize}
\end{proof}

Using Lemma \ref{lem:disk_local_configurations}, we construct configurations that have a constant maximal local density by extending those in Figure \ref{fig:disk_close_packings}.
We first note that the configuration in Figure \ref{fig:disk_close_packing_2} cannot be extended: consider the particle at $(3,-4)$, if its local density is to be maximal, then it must be part of a local configuration of the form of Figure \ref{fig:disk_close_packings} and its reflections and rotations, but because of the presence of the particle at $(-3,-4)$, it is clear that this cannot be the case.
Thus, Figure \ref{fig:disk_close_packing_2} cannot lead to a close-packing configuration.

On the other hand Figure \ref{fig:disk_close_packing_1} can be extended: each pink particle in the figure looks locally like the cyan one.
The completion of the close-packing is unique.
Indeed, consider the particle at $(2,5)$.
Because of the presence of the particle at $(5,1)$, it can only have a maximal density if it is surrounded in the same way as the cyan particle in Figure \ref{fig:disk_close_packing_1}, or its rotation by $\pi$.
But since Figure \ref{fig:disk_close_packing_1} is symmetric under rotations by $\pi$, this construction is unique.
We can make the same argument for each of the particles surrounding $\sigma_{\mathbf 0}$, which makes the extension of the figure unique.

Therefore, taking into account the horizontal and vertical reflections, as well as the $\pi/2$ rotation, there are 6 minimal density local configurations that can be extended to distinct close-packings.
Taking into account the translations of the particle at $\mathbf 0$, this yields a total of $18\times6=108$ close-packing configurations.

We now verify Assumption \ref{assumption}.
Items \ref{asm:lattice}, \ref{asm:density_max_local}, and \ref{asm:imperfect_transition} are the same arguments as for the staircases; see above.

\begin{enumerate}
  \item[\ref{asm:finitely_many_ground_states}.] By considering the possible symmetries of the packings in Lemma \ref{lem:disk_local_configurations}, we find that there are $18\times6=108$ ground states.

  \item[\ref{asm:isometry}.] The ground states are related by translations, rotations, and reflections.
  These preserve the shapes of the particles.

  \item[\ref{asm:density_local_density}.] We take $\mathcal R_1=0$ and consider a particle $x$ that is incorrect (and thus also $0$-incorrect).
  By the construction in the proof of Lemma \ref{lem:disk_local_configurations}, there are two ways the particle at $x$ can be incorrect.
  The first is that the local configuration around $x$ is not like those in Figure \ref{fig:disk_close_packings}, in which case
  \begin{equation}
    \rho_{X}^{-1}(x)\ge\rho_{\mathrm{max}}^{\mathrm{loc}}{}^{-1}+\frac16
  \end{equation}
  where $\frac16=\frac12-\frac13$ is the smallest possible defect contribution: instead of an empty site being surrounded by 3 particles, it is surrounded by 2.
  The second is that the local configuration around $x$ is Like those in Figure \ref{fig:disk_close_packing_2}, in which case, as was argued above, the particle at $x+(3,-4)$ cannot have a maximal density, so that particle has a density that is
  \begin{equation}
    \rho_{X}^{-1}(x+(3,-4))\ge\rho_{\mathrm{max}}^{\mathrm{loc}}{}^{-1}+\frac16
    .
  \end{equation}
  Thus, Item \ref{asm:density_local_density} holds with $\mathcal R_1=0$, $\mathcal S_1=d_{\mathbb Z^2}(\mathbf 0,(3,-4))=7$ and $\epsilon=\frac16$.
\end{enumerate}

Similarly to the staircases, we can, in principle, compute $z_0$ for this model.
The details are omitted here, as the computation is very similar to the staircases.

\begin{figure}[h]
	\centering
    \begin{subfigure}[t]{0.44\textwidth}
        \centering
        \includegraphics[width=6.4cm]{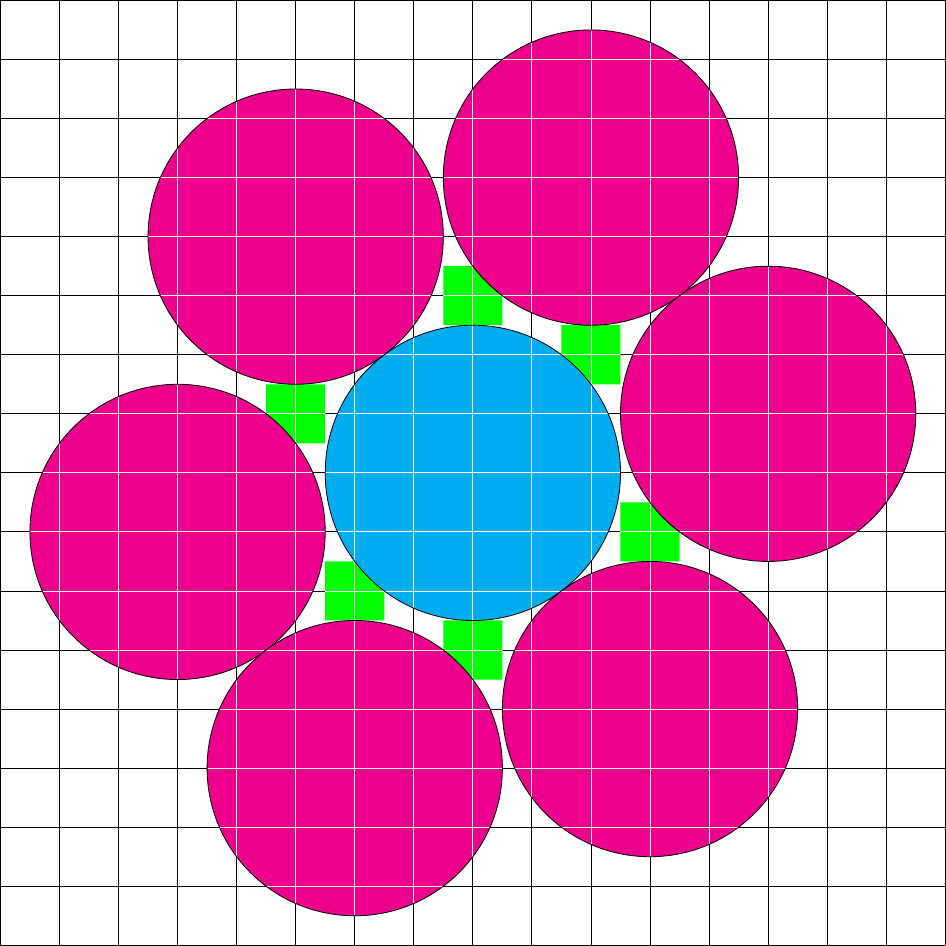}
        \caption{}
        \label{fig:disk_close_packing_1}
    \end{subfigure}
    \begin{subfigure}[t]{0.44\textwidth}
        \centering
        \includegraphics[width=6.4cm]{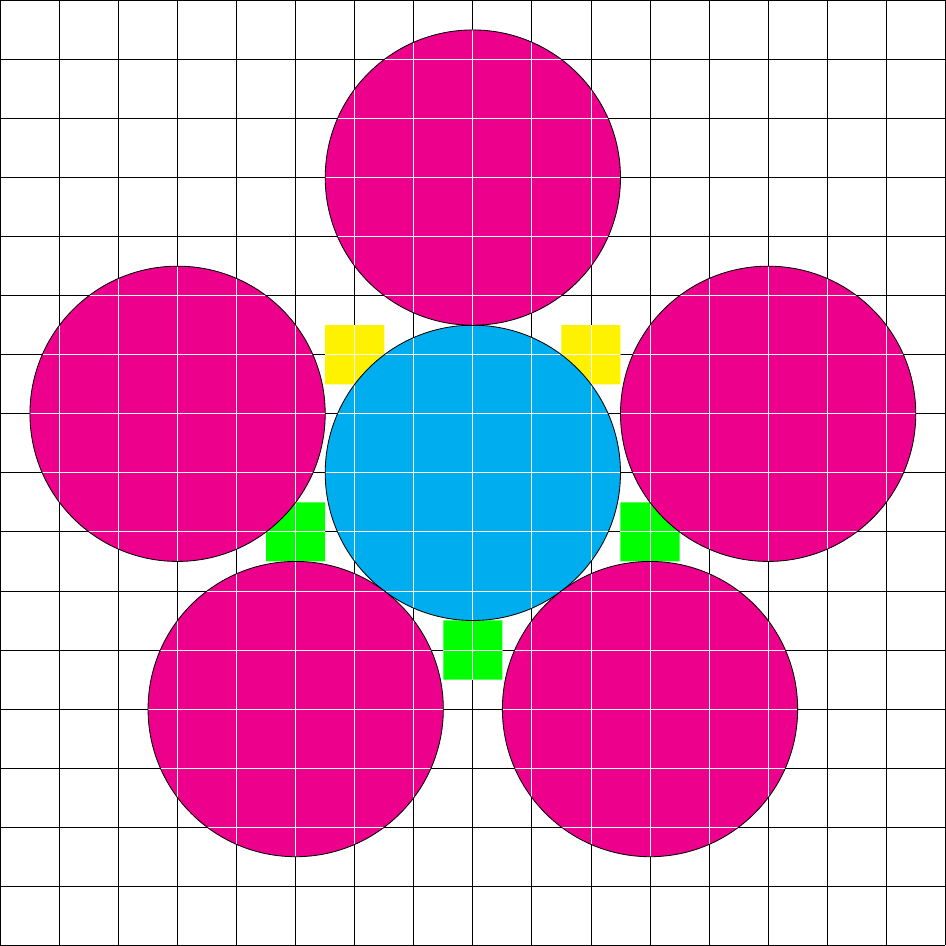}
        \caption{}
        \label{fig:disk_close_packing_2}
    \end{subfigure}
    \captionsetup{font=small,width=0.8\textwidth}
    \caption{Two possible maximal-density local configurations for the hard-disk model.
    The Voronoi cell of the central (cyan) particle consists of the support of the particle along with the green and yellow sites.
    The green sites have a weight $\frac13$ and the yellow sites $\frac12$.
    Thus, since the particle itself covers 21 sites, the local density in these configurations is $\frac1{23}$.}
    \label{fig:disk_close_packings}
\end{figure}

\newpage
\bibliographystyle{plain}
\bibliography{Bibliography.bib}

\newpage

\appendix

\section{Proof of Proposition \ref{prop:canonical_configuration_gfc_correspondence}}
\label{appx:canonical_configuration_gfc_correspondence}

The proof of Proposition \ref{prop:canonical_configuration_gfc_correspondence} requires several preliminary results.
Although their statements are intuitively obvious, the proofs are quite long and tedious.
Nevertheless, we present them in full detail in this appendix.

First, we prove that any particle in $X_{\gamma}$ has the same neighbors in any configuration having $\gamma$ as a GFc as in the canonical configuration $\xi_{\gamma}$.

\begin{lemma} \label{lem:GFc_particles_same_neighbors}
If a configuration $X$ has $\gamma$ as a GFc and $x\in X_{\gamma}$, then $\mathcal{N}_{X}(x)=\mathcal{N}_{\xi_{\gamma}}(x)$.
\end{lemma}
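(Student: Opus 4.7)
The plan is to show that $X$ and $\xi_\gamma$ coincide as subsets of $\Lambda_{\infty}$ on a region around $x$ large enough to force $V_X(\sigma_x)=V_{\xi_\gamma}(\sigma_x)$ together with the analogous equality for every Voronoi cell adjacent to $V_X(\sigma_x)$. Given these equalities, the conclusion $\mathcal N_X(x)=\mathcal N_{\xi_\gamma}(x)$ is immediate from Definition \ref{def:correct}.

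First I would establish agreement on $\bar\gamma$: from the definition of $X_\gamma$ and \eqref{eqn:canonical_configuration}, $X\cap\bar\gamma=X_\gamma=\xi_\gamma\cap\bar\gamma$. Next, I would show that the two configurations also agree on an $\mathcal R_2$-thick collar of $\bar\gamma$ lying in $\bar\gamma^c$. For each connected component $A_j$ of $\bar\gamma^c$, set $\#_j:=\mu_\gamma(A_j)$. The construction of $\mu_\gamma$ in Definition \ref{def:gfc}, which relies on \eqref{ineq_R2_1}, forces every particle of $X$ lying in $A_j$ whose Voronoi cell meets $\partial^{\mathrm{in}}A_j$ to be $(\#_j,\mathcal R_2)$-correct. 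Unwinding Definitions \ref{def:correct} and \ref{def:R_correct}, every such particle together with its $\mathcal R_2$-neighbors in $X$ coincides as sites of $\Lambda_{\infty}$ with the corresponding particles of $\mathcal L^{\#_j}$. Since $\xi_\gamma\cap A_j=\mathcal L^{\#_j}\cap A_j$ by \eqref{eqn:canonical_configuration}, $X$ and $\xi_\gamma$ agree on this collar.

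I would then verify that $V_X(\sigma_x)\subseteq\bar\gamma$: since $x\in X_\gamma\subseteq\bar\gamma$ and $\bar\gamma$ is the union of Voronoi cells of $\mathcal R_2$-incorrect particles of $X$, we must have $x\in V_X(\sigma_y)$ for some such $y$; but then $\mathrm{d}_{\Lambda_{\infty}}(x,\sigma_y)=0$, so the hard-core repulsion forces $x=y$, meaning $x$ itself is $\mathcal R_2$-incorrect and $V_X(\sigma_x)$ is one of the cells comprising $\bar\gamma$. An analogous argument for $\xi_\gamma$, using the collar agreement together with Lemma \ref{lemma:V_local} to localize the reference Voronoi cells of the $\mathcal L^{\#_j}$-particles in $A_j$, will give $V_{\xi_\gamma}(\sigma_x)\subseteq\bar\gamma$. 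Neighbors $y\in\mathcal N_X(x)$ then lie either in $X_\gamma$ (handled identically) or in the collar of some $A_j$ as $(\#_j,\mathcal R_2)$-correct particles, in which case by Lemma \ref{lemma:V_local} their Voronoi cells are the reference cells $\sigma_y^{\#_j}$, which are contained in $\bar\gamma$ together with the collar. Since $X$ and $\xi_\gamma$ agree on that region, all these Voronoi cells coincide and the lemma follows.

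The hard part will be the locality claim: verifying that no particle of $X$ or $\xi_\gamma$ sitting outside $\bar\gamma$ and its $\mathcal R_2$-collar can encroach on $V_X(\sigma_x)$ or on the cells of $x$'s neighbors. This reduces to a geometric estimate comparing $\mathcal R_2$ to $r_{\mathrm{eff}}$ (Lemma \ref{lem:finiteeffectiveparticle}): ground-state particles deep inside each $A_j$ must automatically be farther from every point of $V_X(\sigma_x)\subseteq\bar\gamma$ than the $\mathcal L^{\#_j}$-particles sitting in the collar. Once this is quantified, the agreement of $X$ and $\xi_\gamma$ on $\bar\gamma$ together with its collar determines the full Voronoi structure around $x$, and in particular $\mathcal N_X(x)=\mathcal N_{\xi_\gamma}(x)$.
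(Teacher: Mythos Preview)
Your plan is sound and would work, but it is more laborious than what the paper does, and in particular the ``hard part'' you flag---a quantitative comparison of $\mathcal R_2$ with $r_{\mathrm{eff}}$---is not actually needed. The paper bypasses your locality argument entirely by splitting into two cases according to whether $V_X(\sigma_x)$ touches $\partial^{\mathrm{in}}\bar\gamma$ or not.

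If $V_X(\sigma_x)$ is adjacent to $\bar\gamma^c$, then $x$ is an $\mathcal R_2$-neighbor of one of the $(\#,\mathcal R_2)$-correct particles lining the exterior of $\bar\gamma$, hence $x$ itself is $\#$-correct in $X$. This immediately gives $\mathcal N_X(x)=\mathcal N_{\mathcal L^\#}(x)$, and one checks directly that these same $\#$-neighbors are also the neighbors of $x$ in $\xi_\gamma$ (the other $\mathcal L^\#$-particles in $\xi_\gamma$ are strictly farther from $\sigma_x$, and particles from holes with a different label are separated by the collar). If instead $V_X(\sigma_x)$ sits strictly inside $\bar\gamma$, every point of $\partial^{\mathrm{in}}\bar\gamma$ lies in $V_X(\sigma_y)$ for some $y\in X_\gamma$ and is therefore strictly closer to $\sigma_y$ than to $\sigma_x$; so no particle outside $\bar\gamma$ can be a Voronoi neighbor of $x$, and the neighbors of $x$ are determined entirely by $X_\gamma$, which is common to $X$ and $\xi_\gamma$.

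So the paper's case split lets one read off $\mathcal N_X(x)$ directly---either as $\mathcal N_{\mathcal L^\#}(x)$ or as a subset of $X_\gamma$---without ever needing to delimit a collar on which $X$ and $\xi_\gamma$ agree or to control how far exterior particles can encroach. Your approach, by contrast, would need to make the screening argument in your last paragraph precise; this can be done, but it effectively reproves the surface/deep dichotomy inside the locality estimate, so you gain nothing for the extra work.
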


\begin{proof}

The GFc $\gamma$ is fully surrounded by $\mathcal R_2$-correct particles, in the sense that every particle $z\in X\setminus X_{\gamma}$ with $\mathrm{d}_{\Lambda_{\infty}}(V_{X}(\sigma_{z}),\bar{\gamma})\le 1$ is $(\#,\mathcal R_2)$-correct in $X$, where $\#\in\mathcal{G}$ is given by the labeling function $\mu_{\gamma}$ applied to the connected component of $\bar{\gamma}^{c}$ to which $z$ belongs.
We divide into two cases.
\begin{enumerate}

\item $x$ lies on the surface of the GFc: $\mathrm{d}_{\Lambda_{\infty}}(V_{X}(x),\bar{\gamma}^{c})=1$.
Since $\gamma$ is fully surrounded by $\mathcal R_2$-correct particles, by Definition \ref{def:R_correct}, $x$ is $\#$-correct in $X$ for some $\#$.
Thus, its neighbors in $X$ coincide with its $\#$-neighbors, which are either located in $X_{\gamma}$ or immediately surrounding $\gamma$.
Either way, they are contained in $\xi_{\gamma}$.
Moreover, all the other particles in $\mathcal{L}^{\#}\cap\xi_{\gamma}$ are strictly farther from $\sigma_{x}$, so none can neighbor $x$ in $\xi_{\gamma}$.
Finally, notice that the particles in $\xi_{\gamma}$ from holes with a different label than $\#$ cannot neighbor $x$ in $\xi_{\gamma}$, since differently labeled holes are well-separated by construction.

\item $x$ lies deep inside the GFc: $\mathrm{d}_{\Lambda_{\infty}}(V_{X}(x),\bar{\gamma}^{c})>1$.
Since each point on the interior boundary of $\bar{\gamma}$ is contained in $V_{X}(y)$ for some $y\in X_{\gamma}$, each such point is strictly closer to $\sigma_{y}$ than to $\sigma_{x}$.
Hence, a particle outside of $\bar{\gamma}$ cannot possibly be a neighbor of $x$.\qedhere

\end{enumerate}

\end{proof}

Second, we prove that any particle in $\xi_{\gamma}\setminus X_{\gamma}$ has the same neighbors in $\xi_{\gamma}$ and $\mathcal{L}^{\#}$, where $\#$ is the label that $\mu_{\gamma}$ assigns to the hole containing the particle.

\begin{lemma}\label{lem:canonical_non_GFc_particles_same_neighbors}
For each $x\in\xi_{\gamma}\setminus X_{\gamma}$, $\mathcal{N}_{\xi_{\gamma}}(x)=\mathcal{N}_{\mathcal{L}^{\#}}(x)$, where $\#$ is the label that $\mu_{\gamma}$ assigns to the hole containing $x$.
\end{lemma}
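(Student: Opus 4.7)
The plan is to reduce the claim to showing that the Voronoi cells agree, $V_{\xi_\gamma}(\sigma_x) = V_{\mathcal{L}^\#}(\sigma_x)$, from which equality of the neighbor sets follows directly from Definition \ref{def:correct}. To set up: by the construction \eqref{eqn:canonical_configuration}, $x$ lies in a unique connected component $A$ of $\bar\gamma^c$ with $x \in A \cap \mathcal{L}^\#$, where $\# := \mu_\gamma(A)$. By Lemma \ref{lem:finiteeffectiveparticle}, $V_{\mathcal{L}^\#}(\sigma_x)$ is contained in a ball of radius $r_{\mathrm{eff}}$ around $\sigma_x$, so the entire comparison is local in nature.

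Following the two-case strategy used in the proof of Lemma \ref{lem:GFc_particles_same_neighbors}, I would split the argument based on whether $V_{\mathcal{L}^\#}(\sigma_x)$ sits well inside $A$ or abuts $\bar\gamma$. In the interior case, where $V_{\mathcal{L}^\#}(\sigma_x)$ together with its $1$-neighborhood lies inside $A$, the particles of $\xi_\gamma$ close enough to influence $V_{\xi_\gamma}(\sigma_x)$ are exactly those of $A \cap \mathcal{L}^\#$ restricted to that neighborhood, which coincides with the corresponding restriction of $\mathcal{L}^\#$. Hence $V_{\xi_\gamma}(\sigma_x) = V_{\mathcal{L}^\#}(\sigma_x)$. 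In the boundary case, where $V_{\mathcal{L}^\#}(\sigma_x)$ comes close to or meets $\bar\gamma$, the particle $x$ must be close to $\partial^{\mathrm{in}} A$; by Item \ref{asm:lattice} of Assumption \ref{assumption} combined with \eqref{ineq_R2_1}, the interior boundary $\partial^{\mathrm{in}} A$ is $\mathcal R_0$-connected and entirely lined with $(\#, \mathcal R_2)$-correct particles in $X$. This rigid $\mathcal{L}^\#$-shell just outside $\bar\gamma$, together with the hard-core separation imposed by \eqref{ineq_R2_2}, would be used to rule out that any $X_\gamma$-particle, or any $\mathcal L^{\#'}$-particle in a component $A'$ with $\#' \ne \#$, can be closer to a point of $V_{\mathcal{L}^\#}(\sigma_x)$ than $\sigma_x$ is.

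The main obstacle is the boundary case, and in particular handling points $\lambda \in V_{\mathcal{L}^\#}(\sigma_x)$ that themselves lie inside $\bar\gamma$. For such $\lambda$, one must exclude the possibility that an $X_\gamma$-particle sits closer to $\lambda$ than $\sigma_x$. The idea would be to pick the $X_\gamma$-particle $z$ whose Voronoi cell in $X$ contains $\lambda$ and to invoke Lemma \ref{lem:GFc_particles_same_neighbors} to transfer distance inequalities from $X$ to $\xi_\gamma$; combined with the $(\#, \mathcal R_2)$-correctness of the boundary shell, which anchors the distances from $\sigma_x$ to the $X_\gamma$-particles on $\partial^{\mathrm{ex}} A$ to match the corresponding ground-state distances in $\mathcal{L}^\#$, one can then close the comparison. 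As the authors note, the bookkeeping is elementary but tedious, which is precisely why the proof is relegated to this appendix.
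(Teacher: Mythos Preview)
Your approach is essentially the same as the paper's: a two-case split according to whether $x$ lies deep in the hole or on its surface, with the deep case handled by locality and the surface case by invoking the $(\#,\mathcal R_2)$-correctness of the boundary shell in a configuration $X$ having $\gamma$ as a GFc. One caveat: your opening reduction is not quite correct as stated---equality of the single Voronoi cell $V_{\xi_\gamma}(\sigma_x)=V_{\mathcal L^\#}(\sigma_x)$ does not by itself yield $\mathcal N_{\xi_\gamma}(x)=\mathcal N_{\mathcal L^\#}(x)$, since the neighbor relation in Definition~\ref{def:correct} also depends on the Voronoi cells of the surrounding particles $z$, which may differ between $\xi_\gamma$ and $\mathcal L^\#$. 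Your subsequent case analysis implicitly repairs this (you argue that the configurations agree locally, not just that one cell does), so this is a framing imprecision rather than a genuine gap.
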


\begin{proof}
We divide into two cases.
\begin{enumerate}

\item $x$ lies on the surface of the hole: $\mathrm{d}_{\Lambda_{\infty}}(V_{\mathcal{L}^{\#}}(x),\bar{\gamma})\le 1$.
Consider any configuration $X$ containing $\gamma$ as a GFc.
As a particle immediately surrounding $\gamma$, $x$ is $(\#,\mathcal R_2)$-correct in $X$, so, in particular, the neighbors of $x$ in $X$ coincide with its $\#$-neighbors, which lie either in $X_{\gamma}$ or inside a hole of $\gamma$ with label $\#$, hence all contained in $\xi_{\gamma}$.
Again, the particles in holes with a different label than $\#$ cannot neighbor $x$ in $\xi_{\gamma}$.

\item $x$ lies deep inside the hole: $\mathrm{d}_{\Lambda_{\infty}}(V_{\mathcal{L}^{\#}}(x),\bar{\gamma})>1$.
Consider the neighbors of $x$ in $\xi_{\gamma}$.
By the assumption on $x$, none is in $X_{\gamma}$, so all have to be from within the hole.
However, the latter is filled with particles in $\mathcal{L}^{\#}$. \qedhere

\end{enumerate}
\end{proof}

Third, we prove that all the particles in $\xi_{\gamma}\setminus X_{\gamma}$ are $\mathcal R_2$-correct in $\xi_{\gamma}$.

\begin{lemma} \label{lem:canonical_non_GFc_particles_R_correct}
Every particle $x\in\xi_{\gamma}\setminus X_{\gamma}$ is $\mathcal R_2$-correct in $\xi_{\gamma}$.
\end{lemma}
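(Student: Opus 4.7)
The plan is to propagate $\#$-correctness outward from $x$ along a chain of adjacent Voronoi cells, where $\#=\mu_\gamma(A)$ and $A$ is the connected component of $\bar\gamma^c$ containing $x$.

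By Lemma \ref{lem:canonical_non_GFc_particles_same_neighbors}, $\mathcal N_{\xi_\gamma}(x)=\mathcal N_{\mathcal L^\#}(x)$, so $x$ is already $\#$-correct in $\xi_\gamma$, and Lemma \ref{lemma:V_local} gives $V_{\xi_\gamma}(\sigma_x)=\sigma_x^\#$. Given any $y\in\mathcal N_{\xi_\gamma}^{(\mathcal R_2)}(x)$, unfolding the definition produces a sequence of sites $\lambda_0,\ldots,\lambda_m\in\Lambda_\infty$ of length $m\le\mathcal R_2$ linking $V_{\xi_\gamma}(\sigma_x)$ to $V_{\xi_\gamma}(\sigma_y)$ with $\mathrm d_{\Lambda_\infty}(\lambda_k,\lambda_{k+1})\le 1$. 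Because the Voronoi cells cover $\Lambda_\infty$, for each $k$ one may pick $z_k\in\xi_\gamma$ with $\lambda_k\in V_{\xi_\gamma}(\sigma_{z_k})$, $z_0=x$ and $z_m=y$, so that consecutive cells are adjacent and $z_{k+1}\in\mathcal N_{\xi_\gamma}(z_k)$.

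The core of the proof is an induction on $k$ showing each $z_k$ is $\#$-correct in $\xi_\gamma$. Assuming this for $z_k$, one has $\mathcal N_{\xi_\gamma}(z_k)=\mathcal N_{\mathcal L^\#}(z_k)\subseteq\mathcal L^\#$, hence $z_{k+1}\in\mathcal L^\#$. If $z_{k+1}\in\xi_\gamma\setminus X_\gamma$, Lemma \ref{lem:canonical_non_GFc_particles_same_neighbors} makes it $\#'$-correct with $\#'=\mu_\gamma(A_{z_{k+1}})$, and Item \ref{asm:imperfect_transition} of Assumption \ref{assumption}, applied to the neighboring correct particles $z_k,z_{k+1}$ in $\xi_\gamma$, forces $\#'=\#$. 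If instead $z_{k+1}\in X_\gamma$, the plan is to identify $z_{k+1}$ as a surface particle of the GFc in the sense of Case 1 of the proof of Lemma \ref{lem:GFc_particles_same_neighbors}: since $\mathcal N_X(z_{k+1})=\mathcal N_{\xi_\gamma}(z_{k+1})$ by Lemma \ref{lem:GFc_particles_same_neighbors}, the Voronoi cells $V_X(\sigma_{z_{k+1}})$ and $V_{\xi_\gamma}(\sigma_{z_{k+1}})$ coincide, and adjacency with $\sigma_{z_k}^\#$---which contains points of $\bar\gamma^c$ whenever $z_k\in\xi_\gamma\setminus X_\gamma$---places $V_X(\sigma_{z_{k+1}})$ within graph distance $1$ of $\bar\gamma^c$. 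The argument of that case of Lemma \ref{lem:GFc_particles_same_neighbors} then yields $\#'$-correctness of $z_{k+1}$ in $X$, which transfers to $\xi_\gamma$ by the same lemma, and Item \ref{asm:imperfect_transition} once more forces $\#'=\#$.

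The main obstacle is the case $z_{k+1}\in X_\gamma$: one must verify the surface characterization rigorously, and, in addition, handle the subcase in which $z_k$ itself already lies in $X_\gamma$, where $\sigma_{z_k}^\#$ need not reach $\bar\gamma^c$ a priori. Both subtleties reduce to geometric bookkeeping about Voronoi cells near the boundary of $\bar\gamma$---showing that $V_X$ and $V_{\xi_\gamma}$ agree on $X_\gamma$-particles (so that ``surface'' can be checked in either configuration) and that $\#$-correct $X_\gamma$-particles must themselves be surface particles---and this fiddly but routine verification is where the bulk of the technical work of the lemma lies.
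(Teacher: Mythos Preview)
Your chain-propagation strategy is a genuinely different route from the paper's proof. The paper does not induct along a path of adjacent Voronoi cells; instead it fixes a configuration $X$ having $\gamma$ as a GFc, splits into the two cases ``$x$ on the surface of the hole'' versus ``$x$ deep inside the hole'', and in each case compares the $\mathcal R_2$-neighbourhoods of $x$ in $\xi_\gamma$, in $X$, and in $\mathcal L^\#$ directly, transferring information back and forth via Lemmas~\ref{lem:GFc_particles_same_neighbors} and~\ref{lem:canonical_non_GFc_particles_same_neighbors}. Your approach is more self-contained (it tries to stay inside $\xi_\gamma$ and argue one step at a time), while the paper's buys simplicity by leaning on the known $(\#,\mathcal R_2)$-correctness of the particles in $X$ that line $\partial\bar\gamma$.

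There is, however, a real gap in your handling of the case $z_{k+1}\in X_\gamma$. Your proposed closure hinges on the claim that ``$\#$-correct $X_\gamma$-particles must themselves be surface particles'', but this is false in general: nothing prevents a GFc from containing, deep inside $\bar\gamma$, a local island of particles arranged exactly as in $\mathcal L^\#$; the central particle of such an island is $\#$-correct in $X$ (hence in $\xi_\gamma$ by Lemma~\ref{lem:GFc_particles_same_neighbors}) yet has $\mathrm d_{\Lambda_\infty}(V_X(\sigma_\cdot),\bar\gamma^c)>1$. So you cannot, at the inductive step, simply declare $z_{k+1}$ a surface particle and invoke Case~1 of Lemma~\ref{lem:GFc_particles_same_neighbors}. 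What actually saves the situation is not a property of $\#$-correct $X_\gamma$-particles per se, but the global constraint that your path has length $\le\mathcal R_2$ and begins at $x\in\bar\gamma^c$: the path must cross $\partial\bar\gamma$, and at the last such crossing the adjacent site in $\bar\gamma^c$ lies in $V_X(\sigma_u)$ for some $u\in X$ that is $(\#',\mathcal R_2)$-correct in $X$; every later $z_j$ (in particular $z_{k+1}$ and $y$) is then an $\mathcal R_2$-neighbour of $u$ in $X$ and hence $\#'$-correct there. Making this precise requires working in $X$ rather than in $\xi_\gamma$ and is essentially the paper's Case~2 argument, so your ``fiddly but routine'' verification is neither routine nor independent of the paper's method.
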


\begin{proof}
By Lemma \ref{lem:canonical_non_GFc_particles_same_neighbors}, every particle $x\in\xi_{\gamma}\setminus X_{\gamma}$ is correct in $\xi_{\gamma}$, so we only need to check that all these particles have the right $\mathcal R_2$-neighbors.
Let $x\in\xi_{\gamma}\setminus X_{\gamma}$ be from a hole of $\gamma$ with label $\#$.
We divide into two cases.
\begin{enumerate}

\item $x$ lies on the surface of the hole: $\mathrm{d}_{\Lambda_{\infty}}(V_{\mathcal{L}^{\#}}(x),\bar{\gamma})\le 1$.
Consider any configuration $X$ containing $\gamma$ as a GFc.
As a particle immediately surrounding $\gamma$, $x$ is $(\#,\mathcal R_2)$-correct in $X$, so its $\mathcal R_2$-neighbors in $X$ coincide with its $\mathcal R_2$-neighbors in $\mathcal{L}^{\#}$, which are located either in $X_{\gamma}$ or in the holes of $\gamma$ with label $\#$.
Notice that all of these particles are in $\xi_{\gamma}$.
Moreover, since they are all $\#$-correct in $X$, by Lemmas \ref{lem:GFc_particles_same_neighbors} and \ref{lem:canonical_non_GFc_particles_same_neighbors}, their Voronoi cells in $X$ coincide with their Voronoi cells in $\xi_{\gamma}$, so they are all $\mathcal R_2$-neighbors of $x$ in $\xi_{\gamma}$.
By the same argument as before, $x$ has no additional $\mathcal R_2$-neighbors in $\xi_{\gamma}$.

\item $x$ lies deep inside the hole: $\mathrm{d}_{\Lambda_{\infty}}(V_{\mathcal{L}^{\#}}(x),\bar{\gamma})>1$.
Consider an $\mathcal R_2$-neighbor $y$ of $x$ in $\xi_{\gamma}$. 
It is located either in a hole of $\gamma$ labeled $\#$ or in $X_{\gamma}$.
In the former case, $y$ has the same Voronoi cell in $\xi_{\gamma}$ and $\mathcal{L}^{\#}$ by Lemma \ref{lem:canonical_non_GFc_particles_same_neighbors}.
In the latter case, $y$ is an $\mathcal R_2$-neighbor of a particle in the same hole as $x$ which surrounds $\gamma$.
Using Lemma \ref{lem:GFc_particles_same_neighbors}, we see that, again, $y$ has the same Voronoi cell in $\xi_{\gamma}$ and $\mathcal{L}^{\#}$.
Therefore, $y$ is an $\mathcal R_2$-neighbor of $x$ in $\mathcal{L}^{\#}$.

Conversely, let $z$ be an $\mathcal R_2$-neighbor of $x$ in $\mathcal{L}^{\#}$.
It is located either in a hole of $\gamma$ labeled $\#$ or in $\bar{\gamma}$.
In the former case, certainly $z\in\xi_{\gamma}$, and $z$ is an $\mathcal R_2$-neighbor of $x$ in $\xi_{\gamma}$ by Lemma \ref{lem:canonical_non_GFc_particles_same_neighbors}.
In the latter case, the same consideration as before shows that $z\in X_{\gamma}$ and has the same Voronoi cell in $\xi_{\gamma}$ and $\mathcal{L}^{\#}$, hence again an $\mathcal R_2$-neighbor of $x$ in $\xi_{\gamma}$.
\qedhere

\end{enumerate}
\end{proof}

Fourth, we prove that all the particles in $X_{\gamma}$ remain $\mathcal R_2$-incorrect in $\xi_{\gamma}$.

\begin{lemma} \label{lem:GFc_particles_R_incorrect}
Every particle $x\in X_{\gamma}$ is $\mathcal R_2$-incorrect in $\xi_{\gamma}$.
\end{lemma}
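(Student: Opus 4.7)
My plan is to argue by contradiction: assume some $x\in X_{\gamma}$ is $(\#,\mathcal R_2)$-correct in $\xi_{\gamma}$ for some $\#\in\mathcal G$, and derive a contradiction with the fact that $X_{\gamma}\subseteq\mathcal I^{(\mathcal R_2)}(X)$. The inclusion $X_{\gamma}\subseteq\mathcal I^{(\mathcal R_2)}(X)$ is worth recording up front: hard-core repulsion forces each $y\in X$ to lie only in its own Voronoi cell $V_{X}(\sigma_{y})$, so $X\cap\bar\gamma$ consists precisely of those $\mathcal R_2$-incorrect particles of $X$ whose Voronoi cells make up $\bar\gamma$.

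The main engine is to show that, under the contradictory hypothesis, every $y\in\mathcal N_{X}^{(\mathcal R_2)}(x)$ is $\#$-correct in $X$, contradicting that $x$ is $\mathcal R_2$-incorrect in $X$. For this, I aim to establish $\mathcal N_{X}(y)=\mathcal N_{\xi_{\gamma}}(y)$ (hence $V_{X}(\sigma_{y})=V_{\xi_{\gamma}}(\sigma_{y})$) for each such $y$; combined with $V_{X}(\sigma_{x})=V_{\xi_{\gamma}}(\sigma_{x})$ from Lemma \ref{lem:GFc_particles_same_neighbors} applied to $x$, this places $y\in\mathcal N_{\xi_{\gamma}}^{(\mathcal R_2)}(x)$, so the assumed $\#$-correctness of $y$ in $\xi_{\gamma}$ transfers verbatim to $X$. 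I would split on whether $y\in X_{\gamma}$: when yes, Lemma \ref{lem:GFc_particles_same_neighbors} applied to $y$ does the job immediately. When $y\notin X_{\gamma}$, $y$ lies in some component $A$ of $\bar\gamma^{c}$, and once one knows $y\in\mathcal L^{\mu_{\gamma}(A)}$ and is $\mu_{\gamma}(A)$-correct in $X$, Lemma \ref{lem:canonical_non_GFc_particles_same_neighbors} delivers the reduction.

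The main obstacle is this last step. The plan is to fix a geodesic in $\Lambda_{\infty}$ of length $\le\mathcal R_2$ from a suitable $\lambda\in V_{X}(\sigma_{y})$ to a point $\lambda'\in V_{X}(\sigma_{x})\subseteq\bar\gamma$, and to examine the last vertex $\mu_j$ of this geodesic lying in $A$ before its first entry into $\bar\gamma$. Because distinct components of $\bar\gamma^{c}$ are not adjacent in $\Lambda_{\infty}$, one has $\mu_{j+1}\in\bar\gamma$, so the unique particle $z\in X$ with $\mu_j\in V_{X}(\sigma_{z})$ is forced to lie in $A$ (a $z$ in $\bar\gamma$ would be $\mathcal R_2$-incorrect with $V_{X}(\sigma_{z})\subseteq\bar\gamma$, contradicting $\mu_j\in A$) and to satisfy $V_{X}(\sigma_{z})\cap\partial^{\mathrm{in}}A\ne\emptyset$. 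By Definition \ref{def:gfc} (using $\mathcal R_2\ge\mathcal R_0$), $z$ is then $(\mu_{\gamma}(A),\mathcal R_2)$-correct in $X$, and since $\mu_j$ and $\lambda$ both lie on a common geodesic of length $\le\mathcal R_2$, $y\in\mathcal N_{X}^{(\mathcal R_2)}(z)$, so $y$ is $\mu_{\gamma}(A)$-correct in $X$, as required. The subtlety I expect to require care is the selection of $\lambda$: if $V_{X}(\sigma_{y})$ straddles $\bar\gamma$, one must pick $\lambda$ so that the geodesic genuinely starts in $A$, which requires a small additional argument leveraging $y\in V_{X}(\sigma_{y})\cap A$ and the bound $d_{\Lambda_{\infty}}(V_{X}(\sigma_{y}),V_{X}(\sigma_{x}))\le\mathcal R_2$.
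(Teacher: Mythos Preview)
Your plan is correct and follows the same contradiction strategy as the paper: assume $x$ is $(\#,\mathcal R_2)$-correct in $\xi_{\gamma}$ and derive that $x$ is $(\#,\mathcal R_2)$-correct in $X$. The execution differs in two minor ways. First, for $y\notin X_{\gamma}$ the paper splits into two further subcases according to whether $y$ lies in another GFc of $X$; when it does not, $y$ is already $(\#',\mathcal R_2)$-correct in $X$ for some $\#'$, and since $x\in\mathcal N_X^{(\mathcal R_2)}(y)$ is $\#$-correct one gets $\#'=\#$ immediately---no geodesic needed. Your geodesic argument treats both subcases at once, which is tidy but slightly heavier in the easy case. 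Second, once your geodesic argument yields that $y$ is $\mu_{\gamma}(A)$-correct in $X$, the detour through $\xi_{\gamma}$ via Lemma~\ref{lem:canonical_non_GFc_particles_same_neighbors} is unnecessary: the same lining particle $z$ has $x\in\mathcal N_X^{(\mathcal R_2)}(z)$ as well, so $x$ is $\mu_{\gamma}(A)$-correct in $X$, whence $\mu_{\gamma}(A)=\#$ and $y$ is $\#$-correct in $X$ directly. (Conversely, the paper redundantly verifies an inclusion $\mathcal N_{\mathcal L^{\#}}^{(\mathcal R_2)}(x)\subseteq\mathcal N_X^{(\mathcal R_2)}(x)$ that the definition of $(\#,\mathcal R_2)$-correctness does not require.)

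On the subtlety you flag about picking $\lambda\in A$: this is cleanly resolved by a dichotomy. Voronoi cells are connected (any geodesic from $\mu\in V_X(\sigma_y)$ to $\sigma_y$ stays in $V_X(\sigma_y)$), so if $V_X(\sigma_y)\not\subseteq A$ then $V_X(\sigma_y)$ meets $\partial^{\mathrm{in}}A$, which makes $y$ itself one of the lining particles of $A$ and hence $(\mu_{\gamma}(A),\mathcal R_2)$-correct in $X$ by Definition~\ref{def:gfc}. Otherwise $V_X(\sigma_y)\subseteq A$, and any $\lambda$ realizing $d_{\Lambda_\infty}(V_X(\sigma_y),V_X(\sigma_x))$ already lies in $A$, so your geodesic argument goes through verbatim.
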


\begin{proof}
By contradiction, suppose that $x\in X_{\gamma}$ is $(\#,\mathcal R_2)$-correct in $\xi_{\gamma}$.
Consider a configuration $X$ containing $\gamma$ as a GFc.
By Lemma \ref{lem:GFc_particles_same_neighbors}, $x$ has the same neighbors in $\xi_{\gamma}$ and $X$, hence $\#$-correct in $X$.
We will take one step further and show that $x$ is $(\#,\mathcal R_2)$-correct in $X$, which will contradict the assumption that $\gamma$ is a GFc of $X$.

First, suppose that $y$ is an $\mathcal R_2$-neighbor of $x$ in $X$. 
There are three cases.
\begin{enumerate}

\item $y\in X_{\gamma}$.
By Lemma \ref{lem:GFc_particles_same_neighbors}, $y$ has the same Voronoi cell in $X$ and $\xi_{\gamma}$, hence an $\mathcal R_2$-neighbor of $x$ in $\xi_{\gamma}$.
Since $x$ is $(\#,\mathcal R_2)$-correct in $\xi_{\gamma}$, this implies that $y$ is $\#$-correct in $\xi_{\gamma}$. 
Again, by Lemma \ref{lem:GFc_particles_same_neighbors}, $y$ is also $\#$-correct in $X$.
In particular, $y$ has the same Voronoi cell in $X$ and $\mathcal{L}^{\#}$, hence an $\mathcal R_2$-neighbor of $x$ in $\mathcal{L}^{\#}$.

\item $y\in X_{\gamma'}$, where $\gamma'$ is another GFc of $X$.
By construction, $\gamma$ and $\gamma'$ are disconnected.
Hence, joining $V_{X}(x)$ and $V_{X}(y)$ by a path in $\Lambda_{\infty}$ with minimum length, the path necessarily contains a point that does not lie in the support of any GFc of $X$.
This point must then be contained in the Voronoi cell of a $(\#',\mathcal R_2)$-correct particle $z$ in $X$, so that $\mathrm{d}_{\Lambda_{\infty}}(V_{X}(x),V_{X}(z))\le\mathcal R_2$ and $\mathrm{d}_{\Lambda_{\infty}}(V_{X}(y),V_{X}(z))\le\mathcal R_2$.
Hence, $x$ and $y$ are both $\#'$-correct in $X$, which forces $\#'=\#$ because $x$ is also $\#$-correct.

\item $y\not\in X_{\gamma'}$ for any GFc $\gamma'$ of $X$. 
Then, $y$ must be $\mathcal R_2$-correct in $X$.
Since it has the $\#$-correct $\mathcal R_2$-neighbor $x$, $y$ must be $(\#,\mathcal R_2)$-correct.

\end{enumerate}

Conversely, suppose that $y$ is an $\mathcal R_2$-neighbor of $x$ in $\mathcal{L}^{\#}$. 
Since $x$ is $(\#,\mathcal R_2)$-correct in $\xi_{\gamma}$, this means that $y\in\xi_{\gamma}$ and is $\#$-correct in $\xi_{\gamma}$.
There are two cases.

\begin{enumerate}

\item $y\in X_{\gamma}$.
Since $\gamma$ is a GFc of $X$, $y\in X$. 
By Lemma \ref{lem:GFc_particles_same_neighbors}, the Voronoi cells of $x$ and of $y$ are the same in $X$ and $\xi_{\gamma}$, but by their $\#$-correctness in $\xi_{\gamma}$, they have the same Voronoi cells in $\xi_{\gamma}$ and $\mathcal{L}^{\#}$. 
Since $x$ and $y$ are $\mathcal R_2$-neighbor in $\mathcal{L}^{\#}$, so must they be in $X$.

\item $y\in \xi_{\gamma}\setminus X_{\gamma}$.
By Lemma \ref{lem:canonical_non_GFc_particles_R_correct}, $y$ has the same Voronoi cell in $\mathcal{L}^{\#}$ and $\xi_{\gamma}$. 
If $\mathrm{d}_{\Lambda_{\infty}}(V_{\mathcal{L}^{\#}}(y),V_{\mathcal{L}^{\#}}(x))\le 1$, then $y$ immediately surrounds $\gamma$, so necessarily $y\in X$. 
Else, joining $V_{\mathcal{L}^{\#}}(y)$ and $V_{\mathcal{L}^{\#}}(x)$ by a path in $\Lambda_{\infty}$ of minimum length ($\le\mathcal R_2$), the path contains a point in the exterior boundary of $\bar{\gamma}$ (see Definition \ref{def:boundaries}).
This point is then contained in the Voronoi cell of an $(\#,\mathcal R_2)$-correct particle $z$ in $X$.
Moreover, $\mathrm{d}_{\Lambda_{\infty}}(V_{\mathcal{L}^{\#}}(z),V_{\mathcal{L}^{\#}}(y))\le\mathcal R_2$, so $y$ is an $\mathcal R_2$-neighbor of $z$ in $\mathcal{L}^{\#}$.
Since $z$ is $(\#,\mathcal R_2)$-correct in $X$, this implies that $y\in X$.

\end{enumerate}

Having verified the definition of $(\#,\mathcal R_2)$-correctness, we conclude that $x$ is $(\#,\mathcal R_2)$-correct in $X$.
\end{proof}

\begin{proof}[Proof of Proposition \ref{prop:canonical_configuration_gfc_correspondence}]
The proposition follows immediately from Lemmas \ref{lem:canonical_non_GFc_particles_R_correct} and \ref{lem:GFc_particles_R_incorrect}.
\end{proof}

\section{Analysis of the $n$-staircase model} 
\label{appx:n_staircases}

In this appendix, we prove the following generalization of Lemma \ref{lem:n_staircases_local_configurations} to general $n$.

\begin{lemma}\label{thm:n_staircases_local_configurations}
If $n$ is even, then the density of any local configuration $X$ is maximized if and only if
\begin{equation}
  \textstyle
  \set{\pm(\frac{n}{2},\frac{n}{2}),\pm(-n,\frac{n}{2}),\pm(\frac{n}{2},-n)}\subseteq X,
  \label{n_staircase_local_even}
\end{equation}
and if $n$ is odd, the density is maximized if and only if
\begin{equation}
  \textstyle
  \set{\pm(\frac{n+1}{2},\frac{n-1}{2}),\pm(-n,\frac{n+1}{2}),\pm(\frac{n-1}{2},-n)}\subseteq X
  \quad\mathrm{or}\quad
  \set{\pm(\frac{n-1}{2},\frac{n+1}{2}),\pm(-n,\frac{n-1}{2}),\pm(\frac{n+1}{2},-n)}\subseteq X
  .
  \label{n_staircase_local_odd}
\end{equation}
\end{lemma}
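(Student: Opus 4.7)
The plan is to extend the two pieces of the argument in Lemma~\ref{lem:n_staircases_local_configurations} (lower bound on $\rho_X^{-1}(\mathbf 0)$, and uniqueness of the minimizer) to arbitrary $n$, keeping parity considerations explicit throughout. First, I would compute the local density of each configuration claimed in \eqref{n_staircase_local_even}--\eqref{n_staircase_local_odd} by drawing the six (translated) staircases surrounding $\sigma_{\mathbf 0}$, explicitly identifying $V_X(\sigma_{\mathbf 0})$, and applying Definition~\ref{def:local_density}: a short inspection shows that $V_X(\sigma_{\mathbf 0})$ consists of $\sigma_{\mathbf 0}$ together with a parity-dependent number of sites on $\partial^{\mathrm{ex}}\sigma_{\mathbf 0}$, each equidistant from exactly three particles, hence contributing weight $1/3$.

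For the matching lower bound, I would use the same elementary observation as in the $n=3,4$ case: any uncovered site $\lambda$ at graph-distance $1$ from $\sigma_{\mathbf 0}$ contributes $\ge 1/3$ to $\rho_X^{-1}(\mathbf 0)$, since by hard-core repulsion it can neighbor at most three particles. I would then partition $\partial^{\mathrm{ex}}\sigma_{\mathbf 0}$ into three ``arcs'': one along the horizontal leg of the triangular support, one along the vertical leg, and one along the hypotenuse. For each arc, hard-core repulsion together with the geometry of the staircase forces a precise minimum number of uncovered sites; summing over the three arcs gives a lower bound on $\rho_X^{-1}(\mathbf 0)$ that matches the value computed in the first step. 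The parity of $n$ enters here, because the legs have $n$ lattice points whose optimal ``coverage particle'' sits centered (for even $n$) or offset by one (for odd $n$).

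Uniqueness follows by the deterministic propagation argument already used for $n=3,4$: once a particle is fixed on one arc in a position consistent with the vacancy count saturating the lower bound (i.e., every uncovered site on $\partial^{\mathrm{ex}}\sigma_{\mathbf 0}$ neighbors exactly three particles), the hard-core constraint leaves a unique choice at the adjacent corner of $\sigma_{\mathbf 0}$, then at the next arc, and so on, travelling once around $\sigma_{\mathbf 0}$. For even $n$ the initial choice is forced, so the configuration is unique; for odd $n$ there is a binary choice of which leg the first offset particle leans toward, and the two resulting configurations are exchanged by the reflection $(x,y)\mapsto(y,x)$, yielding precisely the two possibilities in \eqref{n_staircase_local_odd}.

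The main obstacle is the vacancy-count step on the hypotenuse arc: unlike the legs, the hypotenuse has a diagonal geometry, and the particles that can cover sites adjacent to it are those at positions like $\pm(n/2,n/2)$ (even $n$) or $\pm((n\pm1)/2,(n\mp1)/2)$ (odd $n$), each covering the hypotenuse at a different offset. Verifying that no alternative arrangement (e.g., inserting an ``extra'' particle near a corner, or skewing the hypotenuse particle) can match the bound requires a careful coordinate-level enumeration near each of the three corners of the triangle, with cases distinguished by the parity of $n$. Once this bookkeeping is set up--most efficiently by indexing neighboring particle positions modulo $n$--the rest of the proof reduces to the same local propagation already carried out for $n=3,4$.
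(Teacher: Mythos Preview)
Your plan extends the $n=3,4$ argument in the natural way, but that argument does not survive to $n\ge 5$, and the failure is exactly at the step you flag as ``the main obstacle.''

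The first concrete error is in the computation of the optimal local density: you assert that in the claimed configurations each empty site of $V_X(\sigma_{\mathbf 0})$ is equidistant from exactly three particles and carries weight $1/3$. This is false already for $n=5$. For instance, take the configuration in \eqref{n_staircase_local_odd} with $(3,2)\in X$; the site $(2,3)$ is at distance $1$ from $\sigma_{\mathbf 0}$ and from $\sigma_{(3,2)}$, but at distance $\ge 2$ from every other particle, so its weight is $1/2$. For $n=6$ one checks that \emph{all} twelve uncovered distance-$1$ sites have weight $1/2$, not $1/3$. So the ``short inspection'' does not give what you claim.

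More seriously, the lower-bound step cannot work as stated. Your bound is
\[
\rho_X^{-1}(\mathbf 0)\ \ge\ |\sigma_{\mathbf 0}|\ +\ \tfrac13\,(\text{minimum number of uncovered sites on }\partial^{\mathrm{ex}}\sigma_{\mathbf 0}).
\]
But $|\partial^{\mathrm{ex}}\sigma_{\mathbf 0}|=3n+1$, so the right-hand side is at most $\tfrac{n(n+1)}2+\tfrac{3n+1}3$, which grows like $\tfrac{n^2}2+n$, whereas the true optimum is $\tfrac{3n^2}4+O(1)$. For $n=6$ your bound gives at best $21+\tfrac{12}3=25$, while the actual minimum of $\rho_X^{-1}(\mathbf 0)$ is $27$. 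The gap is not a matter of bookkeeping near the corners: the distance-$1$ collar is simply too thin to account for $\Theta(n^2)$ worth of excess inverse density. What is missing is a mechanism that charges the large empty triangular ``pockets'' between $\sigma_{\mathbf 0}$ and its neighbours---regions of area $\Theta(n^2)$, not $\Theta(n)$---to the particle at $\mathbf 0$.

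The paper's proof does precisely this: it works not with $\partial^{\mathrm{ex}}\sigma_{\mathbf 0}$ but with the three triangular regions $U^{n-1}_{(n-1,n-1)}$, $U^{n-1}_{(-1,n-1)}$, $U^{n-1}_{(n-1,-1)}$ of side $n-1$ abutting the three edges of $\sigma_{\mathbf 0}$, and shows (via an order-$3$ affine symmetry of each triangle that cyclically permutes the distances to its three sides) that suitably chosen empty sub-triangles always contribute at least one third of their \emph{area} to $\rho_X^{-1}(\mathbf 0)$. This is what produces the correct $n^2$ scaling and the sharp constant. The propagation/uniqueness step you outline is morally right, but it only becomes meaningful once the tight lower bound is in hand.
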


Our strategy is to study the contribution to the inverse of the local density from three upper triangular regions surrounding the particle at $\vec{0}$.
Specifically, let
\begin{equation}
U_{(a,b)}^{N}:=\set{(x,y)\in\Z^{2}\mid x\le a,y\le b,x+y\ge a+b-N+1}.
\end{equation}
We will study the contribution to the local density \eqref{eqn:local_density} from
\begin{equation}\label{eqn:regions_of_interest}
U_{(n-1,n-1)}^{n-1},\ U_{(-1,n-1)}^{n-1}\text{, and }U_{(n-1,-1)}^{n-1},
\end{equation}
which are, respectively, the upper triangular regions to the northeast, northwest, and southeast of the staircase at $\vec{0}$; see Figure \ref{fig:5staircase_regions}.

  \begin{figure}[h]
	\centering
    \includegraphics[width=6cm]{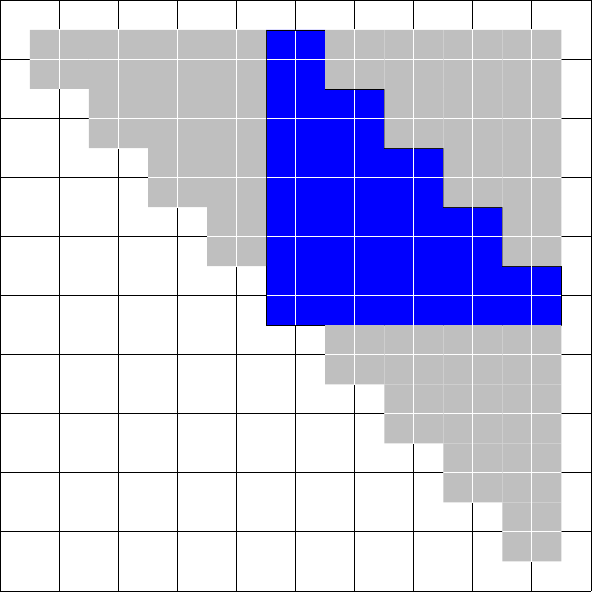}
    \caption{We study the contribution to the local density at the blue particle from the three upper triangular regions around it, which are shown in light gray.}
    \label{fig:5staircase_regions}
  \end{figure}

\begin{proof}[Proof of Lemma \ref{thm:n_staircases_local_configurations}]
We seek to minimize the quantity 
\begin{equation}\label{eqn:quantityofinterest}
\rho_{X}(\vec{0})^{-1}=\sum_{\lambda\in V_{X}(\sigma_{\vec{0}})}\frac{1}{\abs{\set{z\in X\mid\lambda\in V_{X}(\sigma_z)}}}
\end{equation}
over all configurations $X\in\Omega(\Lambda_{\infty})$ with $X\ni \vec{0}$. 
As discussed, we focus on the contribution to the RHS of \eqref{eqn:quantityofinterest} from the three upper triangular regions of \eqref{eqn:regions_of_interest}.
Notice that each of these regions intersects at most one $n$-staircase (which is clear by inspecting Figure \ref{fig:5staircase_regions}).
Depending on the nature of the intersection, we further restrict to the contribution to \eqref{eqn:quantityofinterest} from particular upper triangular subregions of empty sites.
The benefit of these subregions is that each contributes at least one-third of its volume to \eqref{eqn:quantityofinterest}; see Lemma \ref{lem:contribution} below.
Our choice of these subregions is formally described in Table \ref{tab:nstaircase} and illustrated in Figure \ref{fig:5staircase_cases}.
Informally:
\begin{enumerate}
\item Consider first $U^{n-1}_{(n-1,n-1)}$, the upper triangular region to the northeast of the staircase at $\vec{0}$. 
Notice that, if this region intersects another staircase, then the bottom left corner of that staircase must lie inside this region.
If this intersection does occur:
\begin{enumerate}
\item If the bottom left corner of the staircase is right adjacent to the staircase at $\vec{0}$, then the staircase cuts the upper triangular region $U^{n-1}_{(n-1,n-1)}$ into two smaller upper triangular regions (one of which might be empty) consisting entirely of empty sites.
We select these two regions; see Figure \ref{subfig:5staircase_case_1-1}.
\item Else, we select the largest upper triangular subregion of $U^{n-1}_{(n-1,n-1)}$ lying to the left of the intersecting staircase, as well as the largest upper triangular subregion of $U^{n-1}_{(n-1,n-1)}$ that lies strictly below the first; see Figure \ref{subfig:5staircase_case_1-2}.
\end{enumerate}
Finally, if no staircase intersects $U^{n-1}_{(n-1,n-1)}$, we select its top left and bottom right corners, as well as the largest upper triangular region sandwiched between them; see Figure \ref{subfig:5staircase_case_1-3}.

\item Consider next $U^{n-1}_{(-1,n-1)}$, the upper triangular region to the northwest of the staircase at $\vec{0}$. 
Notice that, if this region intersects another staircase, then the bottom right corner of that staircase must lie inside this region.
If this intersection does occur:
\begin{enumerate}
\item If the bottom right corner of the staircase is right adjacent to the staircase at $\vec{0}$, then the staircase cuts the upper triangular region $U^{n-1}_{(-1,n-1)}$ into two smaller upper triangular regions (one of which might be empty) consisting entirely of empty sites.
We select these two regions; see Figure \ref{subfig:5staircase_case_2-1}.
\item Else, we select the largest upper triangular subregion of $U^{n-1}_{(-1,n-1)}$ lying to the right of the intersecting staircase, as well as the largest upper triangular subregion of $U^{n-1}_{(-1,n-1)}$ that lies strictly below the first; see Figure \ref{subfig:5staircase_case_2-2}.
\end{enumerate}
Finally, if no staircase intersects $U^{n-1}_{(-1,n-1)}$, we select its top right and bottom right corners, as well as the largest upper triangular region sandwiched between them; see Figure \ref{subfig:5staircase_case_2-3}.

\item The choice of the subregions for $U^{n-1}_{(n-1,-1)}$ is completely analogous by the $x\leftrightarrow y$ reflection symmetry.
\end{enumerate}

\begin{table}[h]
\centering
\renewcommand{\arraystretch}{1.5}
\begin{tabular}{|c|M{0.4\linewidth}|M{0.3\linewidth}|}
\hline
Region & 
Case & 
Subregions \\ \hline

\multirow{3}{*}{$U_{(n-1,n-1)}^{n-1}$} & 
Intersects a particle located at $(a,b)$ with $a+b=n$ & 
$U_{(a-1,n-1)}^{a-1}$ and $U_{(n-1,b-1)}^{b-1}$ \\ \cline{2-3} 

& 
Intersects a particle located at $(a,b)$ with $a+b\ge n+1$ &
$U^{a-1}_{(a-1,n-1)}$ and $U^{n-a}_{(n-1,n-a)}$ \\ \cline{2-3} 

&
Does not intersect any particle & 
$\set{(1,n-1)}$, $\set{(n-1,1)}$, and $U_{(n-2,n-2)}^{n-3}$ \\ \hline
 
\multirow{3}{*}{$U_{(-1,n-1)}^{n-1}$} & 
Intersects a particle located at $(a,b)$ with $a=-n$ & 
$U_{(-1,n-1)}^{n-b-1}$ and $U_{(-1,b-1)}^{b-1}$ \\ \cline{2-3} 

& 
Intersects a particle located at $(a,b)$ with $a\le -n-1$ & 
$U_{(-1,n-1)}^{-a-b-1}$ and $U_{(-1,a+b+n)}^{a+b+n}$ \\ \cline{2-3} 

& 
Does not intersect any particle & 
$\set{(-1,1)}$, $\set{(-1,n-1)}$, and $U_{(-1,n-2)}^{n-3}$ \\ \hline

$U_{(n-1,-1)}^{n-1}$ & 
\multicolumn{2}{c|}{Analogous to $U_{(-1,n-1)}^{n-1}$ by the $x\leftrightarrow y$ reflection symmetry}  \\ \hline
\end{tabular}
\caption{We further restrict our attention to the contribution to the local density from particular upper triangular subregions around the particle, which consist entirely of empty sites.}
\label{tab:nstaircase}
\end{table}

\begin{figure}[]
\begin{subfigure}[t]{0.3\textwidth}
\centering
\includegraphics[width=0.7\linewidth]{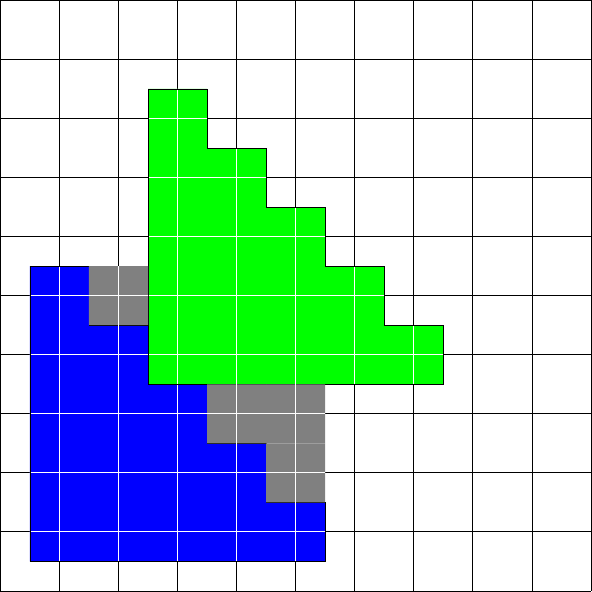}
\caption{$U_{(4,4)}^4$ intersects a particle located at $(2,3)$.}
\label{subfig:5staircase_case_1-1}
\end{subfigure}
\hspace{\fill}
\begin{subfigure}[t]{0.3\textwidth}
\centering
\includegraphics[width=0.7\linewidth]{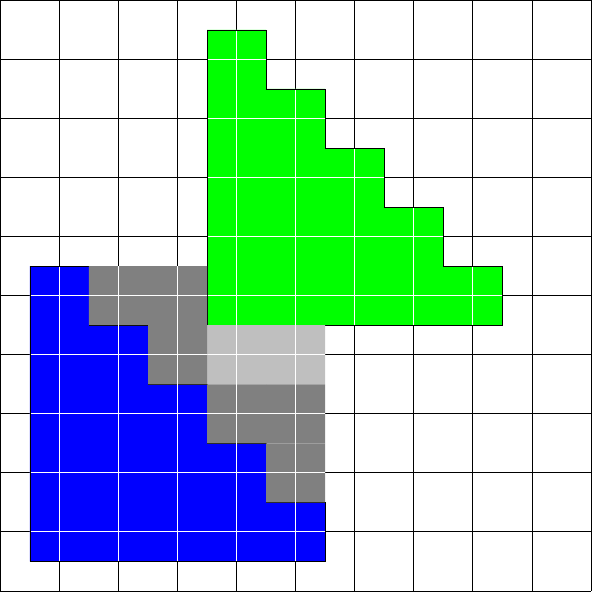}
\caption{$U_{(4,4)}^4$ intersects a particle located at $(3,4)$.}
\label{subfig:5staircase_case_1-2}
\end{subfigure}
\hspace{\fill}
\begin{subfigure}[t]{0.3\textwidth}
\centering
\includegraphics[width=0.7\linewidth]{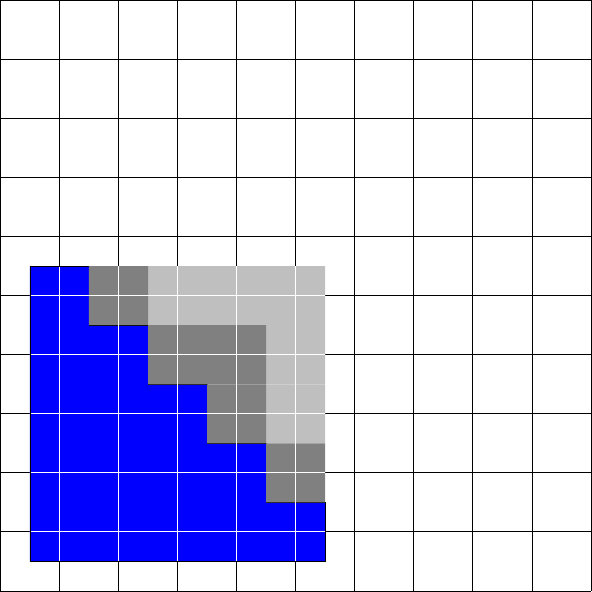}
\caption{$U_{(4,4)}^4$ does not intersect any particle.}
\label{subfig:5staircase_case_1-3}
\end{subfigure}

\bigskip

\begin{subfigure}[t]{0.3\textwidth}
\centering
\includegraphics[width=0.98\linewidth]{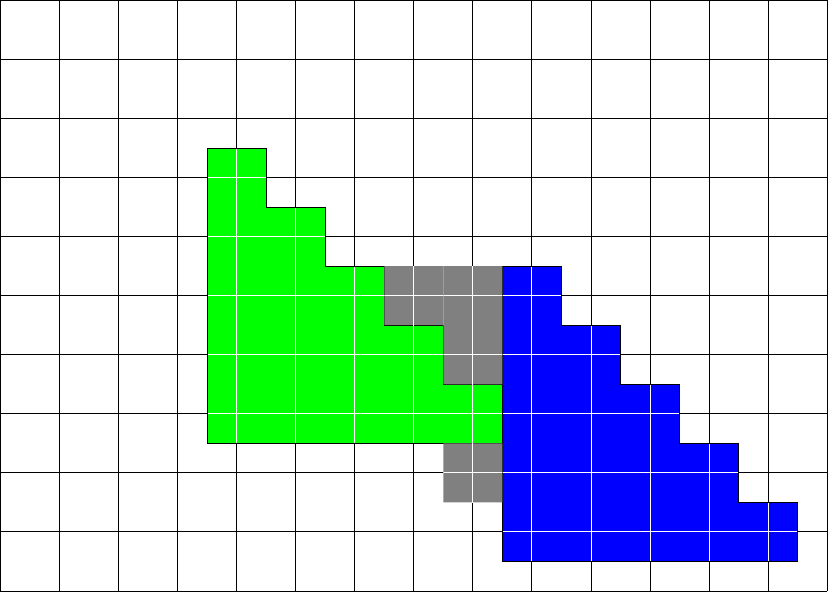}
\caption{$U_{(-1,4)}^4$ intersects a particle located at $(-5,2)$.}
\label{subfig:5staircase_case_2-1}
\end{subfigure}
\hspace{\fill}
\begin{subfigure}[t]{0.3\textwidth}
\centering
\includegraphics[width=0.98\linewidth]{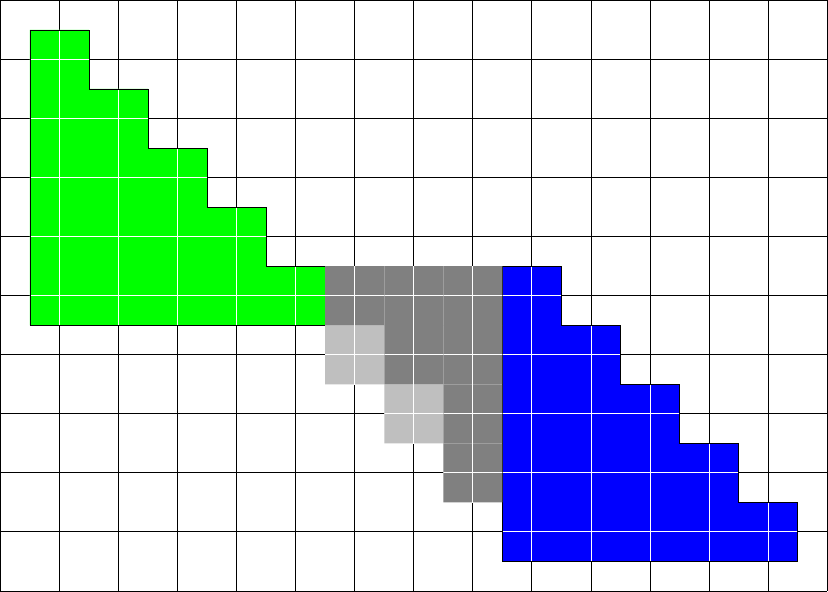}
\caption{$U_{(-1,4)}^4$ intersects a particle located at $(-8,4)$.}
\label{subfig:5staircase_case_2-2}
\end{subfigure}
\hspace{\fill}
\begin{subfigure}[t]{0.3\textwidth}
\centering
\includegraphics[width=0.98\linewidth]{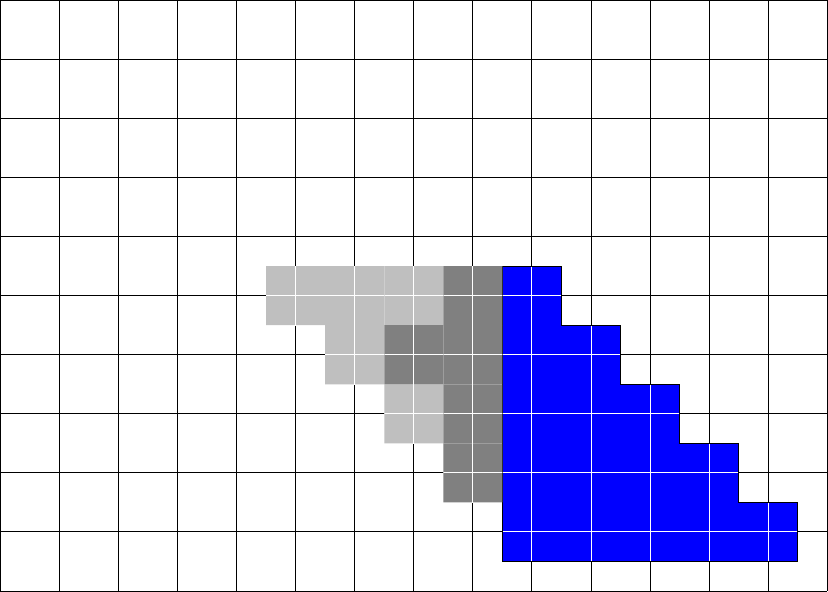}
\caption{$U_{(-1,4)}^4$ does not intersect any particle.}
\label{subfig:5staircase_case_2-3}
\end{subfigure}

\caption{Illustrations of the cases described in Table \ref{tab:nstaircase}. 
The particle at $\vec{0}$ is in blue. 
The particle that intersects one of the upper triangular regions around the particle at $\vec{0}$ is in green.
Light gray represents an empty site within the upper triangular region which we neglect. 
The subregions that we select are in dark gray.}
\label{fig:5staircase_cases}
\end{figure}

Comparing the sums of the volumes of the subregions in the last column of Table \ref{tab:nstaircase}, we bound the contribution from each of the original regions \eqref{eqn:regions_of_interest} by
\begin{equation}\label{eqn:single_ut_region_contribution_bound}
\begin{cases}
\frac{1}{12}n(n-2)&\mbox{$n$ even}
\\
\frac{1}{12}(n-1)^{2}&\mbox{$n$ odd}
\end{cases}.
\end{equation}
Table \ref{tab:nstaircasesummary} lists the necessary conditions for \eqref{eqn:single_ut_region_contribution_bound} to be attained.

\begin{table}[h]
\centering
\renewcommand{\arraystretch}{1.5}
\renewcommand\theadgape{\Gape[4pt]}
\renewcommand\cellgape{\Gape[4pt]}
\begin{tabular}{|M{0.1\linewidth}|M{0.46\linewidth}|}
\hline
Region &
Necessary condition for minimum contribution \\ \hline
$U_{(n-1,n-1)}^{n-1}$ & 
\makecell{$n$ even: $(\frac{n}{2},\frac{n}{2})\in X$ \\ $n$ odd: $(\frac{n+1}{2},\frac{n-1}{2})\in X$ or $(\frac{n-1}{2},\frac{n+1}{2})\in X$} \\ \cline{1-1} \cline{2-2} 
$U_{(-1,n-1)}^{n-1}$ & 
\makecell{$n$ even: $(-n,\frac{n}{2})\in X$ \\ $n$ odd: $(-n,\frac{n+1}{2})\in X$ or $(-n,\frac{n-1}{2})\in X$} \\ \cline{1-1} \cline{2-2} 
$U_{(-1,n-1)}^{n-1}$ & 
\makecell{$n$ even: $(\frac{n}{2},-n)\in X$ \\ $n$ odd: $(\frac{n-1}{2},-n)\in X$ or $(\frac{n+1}{2},-n)\in X$} \\ \hline
\end{tabular}
\caption{Necessary conditions for optimizing the contribution to the local density from each upper triangular region around the particle.}
\label{tab:nstaircasesummary}
\end{table}

Taking into account the volume of the particle $\sigma_{\vec{0}}$ itself, we obtain the following lower bound of \eqref{eqn:quantityofinterest}:
\begin{equation}\label{eqn:nstaircasefinalbound}
\frac{1}{2}n(n+1)+\begin{cases}
\frac{1}{4}n(n-2) & \mbox{$n$ even}\\
\frac{1}{4}(n-1)^{2} & \mbox{$n$ odd}
\end{cases},
\end{equation}
which is attained only if
\begin{enumerate}
\item all the necessary conditions in Table \ref{tab:nstaircasesummary} are satisfied, and
\item no site outside of $\sigma_{\vec{0}}\sqcup U_{(n-1,n-1)}^{n-1}\sqcup U_{(-1,n-1)}^{n-1}\sqcup U_{(n-1,-1)}^{n-1}$ contributes to \eqref{eqn:quantityofinterest}.
\end{enumerate}
It is easy to see that the local configurations in the statement of Lemma \ref{thm:n_staircases_local_configurations} are the only local configurations in which both of the above conditions are satisfied.
Moreover, in these local configurations, the bound \eqref{eqn:nstaircasefinalbound} is indeed attained.
\end{proof}

It remains to prove that each upper triangular subregion we selected in Table \ref{tab:nstaircase} contributes at least one-third of its volume to \eqref{eqn:quantityofinterest}.
To this end, we will exploit the following \emph{triangular} symmetry.
Very soon, we will partition each upper triangular subregion into  orbits under the symmetry, from which the factor of $1/3$ will appear naturally.

\begin{lemma}[triangular symmetry]\label{lem:utsymmetry}
Let $(a,b)\in\Z^{2}$ and $N\in\N$.
Restricted to $\vec{x}\in U^{N}_{(a,b)}$, the functions 
\begin{equation}
\Delta_{1}(x,y):= a+1-x,\ \Delta_{2}(x,y):= b+1-y,\ \text{and }\Delta_{3}(x,y):= (x+y)-(a+b-N),
\end{equation} 
give the (graph) distance from $\vec{x}$ to the sets of points on $\partial^{\text{ex}}U^{N}_{(a,b)}$,
\begin{equation}
\begin{split}
P_{1}(U^{N}_{(a,b)})&:=\set{(a+1,y)\in\Z^{2}\mid b-N+1\le y\le b},\\
P_{2}(U^{N}_{(a,b)})&:=\set{(x,b+1)\in\Z^{2}\mid a-N+1\le x\le a},\\
P_{3}(U^{N}_{(a,b)})&:=\set{(x,a+b-N-x)\in\Z^{2}\mid a-N\le x\le a},
\end{split}
\end{equation}
respectively; see Figure \ref{fig:upper_triangular_boundary}.
There exists a unique order-3 automorphism $T$ of $\Z^{2}$ that fixes $U_{(a,b)}^{N}$ and permutes the functions $\Delta_{i}$ by $\Delta_{i}\circ T=\Delta_{i+1}$, where the subscripts are understood modulo $3$.
\end{lemma}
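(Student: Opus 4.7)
The plan is to write down $T$ explicitly, forced by the distance-cycling conditions, and then to verify the remaining assertions by short direct computations. Before that I would quickly dispose of the auxiliary claim that the $\Delta_i$'s really are graph distances to $P_i$. For $\Delta_1$ and $\Delta_2$ this is obvious once one checks, using the three inequalities defining $U_{(a,b)}^{N}$, that the naive $\ell^1$-closest points $(a+1,y)$ and $(x,b+1)$ on the vertical and horizontal lines actually lie in the finite arcs $P_1,P_2$. For $\Delta_3$, the distance to $P_3$ is a one-variable $\ell^1$-minimization over $x'\in[a-N,a]$ with $y'=a+b-N-x'$; introducing $v:=y+x'-(a+b-N)$ and $s:=\Delta_3(x,y)$ reduces it to $\min_v(|s-v|+|v|)=s$, where the admissible range for $v$ is verified to contain $[0,s]$ using the region inequalities.

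Next I construct $T$ and verify the remaining assertions. The two equations $\Delta_1\circ T=\Delta_2$ and $\Delta_2\circ T=\Delta_3$ are linear in the coordinates of $T(x,y)$ and solve uniquely to
\[
T(x,y)=\bigl(y+a-b,\;a+2b-N+1-x-y\bigr),
\]
which proves existence and uniqueness simultaneously. The linear part $L\colon(x,y)\mapsto(y,-x-y)$ has characteristic polynomial $\lambda^2+\lambda+1$, so $L^2+L+I=0$ and $L^3=I$; a one-line check shows the translational part of $T^3$ also vanishes, giving $T^3=\mathrm{id}$. Substituting $T(x,y)$ into the three defining inequalities of $U_{(a,b)}^{N}$ produces respectively $y\le b$, $x+y\ge a+b-N+1$, and $x\le a$, i.e.\ exactly the defining inequalities for $(x,y)$, so $T(U_{(a,b)}^{N})\subseteq U_{(a,b)}^{N}$, and $T^3=\mathrm{id}$ upgrades this to a bijection of $U_{(a,b)}^{N}$. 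The third relation $\Delta_3\circ T=\Delta_1$ then follows either by direct substitution or by iterating the other two together with $T^3=\mathrm{id}$.

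There is no real obstacle; the only conceptual point worth flagging is that ``automorphism of $\Z^2$'' in the lemma must be read in the affine-bijection sense (a $\Z$-module isomorphism combined with a translation), not in the nearest-neighbor graph sense. Indeed, the linear part $L$ sends the edge vector $(0,1)$ to $(1,-1)$, which has $\ell^1$-length $2$, so $T$ is not a graph automorphism; and in fact no genuine graph automorphism of $\Z^2$ can have order $3$, since the graph-automorphism group $\Z^2\rtimes D_4$ has no element of that order. Under the affine reading, however, $T$ is an order-$3$ bijection of $\Z^2$ restricting to one of $U_{(a,b)}^{N}$, which is all that the partitioning argument in the proof of Lemma~\ref{thm:n_staircases_local_configurations} requires.
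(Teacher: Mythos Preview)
Your proof is correct and follows essentially the same approach as the paper: the paper's proof simply writes ``Solving the system of equations $\Delta_{i}\circ T=\Delta_{i+1}$, $i=1,2,3$, yields the affine transformation $T(x,y):=(y+a-b,-x-y+a+2b-N+1)$, which satisfies the remaining properties by direct computation,'' so you have in fact supplied the details the paper omits. Your closing observation that ``automorphism of $\Z^{2}$'' must be read as an affine bijection rather than a nearest-neighbor graph automorphism is a genuinely useful clarification that the paper leaves implicit.
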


  \begin{figure}[h]
	\centering
    \includegraphics[width=6cm]{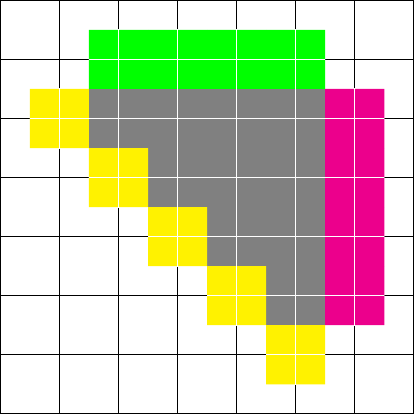}
    \caption{The upper triangular region $U^{4}_{(0,0)}$ is drawn in gray.
    The point sets $P_{1}(U^{4}_{(0,0)})$, $P_{2}(U^{4}_{(0,0)})$, and $P_{3}(U^{4}_{(0,0)})$ are respectively to the east (magenta), north (green), and southwest (yellow) of $U^{4}_{(0,0)}$.}
    \label{fig:upper_triangular_boundary}
  \end{figure}

\begin{proof}
Solving the system of equations $\Delta_{i}\circ T=\Delta_{i+1}$, $i=1,2,3$, yields the affine transformation
\begin{equation}
T(x,y):=(y+a-b,-x-y+a+2b-N+1),
\end{equation} 
which satisfies the remaining properties by direct computation.
\end{proof}

We are now ready to prove that each subregion we selected in Table \ref{tab:nstaircase} contributes at least $1/3$ of its volume to the local density.

\begin{lemma}[contribution from a subregion]\label{lem:contribution}
Consider an upper triangular region $U_{(a,b)}^{N}$. Let $(i,j,k)$ be any permutation of $(1,2,3)$. If a configuration $X\in\Omega(\Lambda_{\infty})$ with $X\ni\vec{0}$ is such that
\begin{enumerate}
\item $U_{(a,b)}^{N}$ consists only of empty sites,
\item $P_{i}(U^{N}_{(a,b)})$ is fully covered by the $n$-staircase located at $\vec{0}$, and
\item $P_{j}(U^{N}_{(a,b)}),P_{k}(U^{N}_{(a,b)})$ each intersects no more than one $n$-staircase,
\end{enumerate}
then
\begin{equation}\label{eqn:minimumcontribution}
\sum_{e\in U_{(a,b)}^{N}\cap V_{X}(\sigma_{\vec{0}})}\frac{1}{\abs{\set{z\in X\mid e\in V_{X}(\sigma_z)}}}\ge\frac{1}{3}\abs{U_{(a,b)}^{N}}.
\end{equation}
Notice that, by inspecting Figure \ref{fig:5staircase_cases}, each subregion we selected in Table \ref{tab:nstaircase} satisfies these conditions.
\end{lemma}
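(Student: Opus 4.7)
My approach is to exploit the triangular symmetry $T$ from Lemma \ref{lem:utsymmetry} to compare the Voronoi share of $\vec{0}$ in $U := U_{(a,b)}^N$ with $\abs{U}/3$. For each $e \in U$, set $w(e) := \min_l \Delta_l(e)$ and $t(e) := \abs{\set{l \in \set{1,2,3} : \Delta_l(e) = w(e)}}$, and write $U_l := \set{e \in U : \Delta_l(e) = w(e)}$. Let $z_j, z_k$ denote the (possibly coinciding, possibly absent) staircases whose supports intersect $P_j$ and $P_k$.

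The first step is a geometric distance bound: for any $e \in U$ and any point $p \notin U$, the Manhattan distance satisfies $d_{\Lambda_\infty}(e, p) \ge w(e)$, with equality only when $p \in P_l$ for some $l$ realizing $\Delta_l(e) = w(e)$; otherwise $d_{\Lambda_\infty}(e, p) \ge w(e) + 1$. This holds because any shortest path from $e$ to $p$ must leave $U$ through the exterior boundary $\partial^{\text{ex}} U = P_1 \cup P_2 \cup P_3$. Combined with the hypotheses, this implies that only $\vec{0}, z_j, z_k$ can realize the Voronoi minimum at points of $U$: $U$ is empty by hypothesis, and hard-core exclusion with $\vec{0}$ prevents any other particle from touching $P_i \subset \sigma_{\vec{0}}$.

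Next, I would show that for every $e \in U_i$, (i) $e \in V_X(\sigma_{\vec{0}})$ and (ii) $c(e) := \abs{\set{z \in X : e \in V_X(\sigma_z)}} \le t(e)$. For (i), one has $d_{\Lambda_\infty}(e, \sigma_{\vec{0}}) \le \Delta_i(e) = w(e)$, and by step one $d_{\Lambda_\infty}(e, \sigma_{z_l}) \ge w(e)$ for $l \in \set{j,k}$ (since $\sigma_{z_l}$ cannot touch $P_i$). For (ii), any particle other than $\vec{0}$ that attains the minimum distance $w(e)$ must have its support touch some $P_m$ with $m \in \set{j,k}$ and $\Delta_m(e) = w(e)$; since each such $P_m$ is touched by at most one staircase, the number of such particles is at most $\abs{\set{m \in \set{j, k} : \Delta_m(e) = w(e)}} = t(e) - 1$, so $c(e) \le t(e)$. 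Finally, I invoke the symmetry. Setting $s_l(e) := \indicator{e \in U_l}/t(e)$, we have $\sum_l s_l \equiv 1$ on $U$, and the relations $\Delta_l \circ T = \Delta_{l+1}$ together with $T(U) = U$ imply $s_l \circ T = s_{l+1}$; bijectivity of $T$ on $U$ then yields $\sum_{e \in U} s_l(e) = \abs{U}/3$ for each $l$, so
\begin{equation*}
\sum_{e \in U \cap V_X(\sigma_{\vec{0}})} \frac{1}{c(e)} \ge \sum_{e \in U_i} \frac{1}{c(e)} \ge \sum_{e \in U_i} \frac{1}{t(e)} = \sum_{e \in U} s_i(e) = \frac{\abs{U}}{3}.
\end{equation*}

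The main obstacle is the geometric distance bound in step one, specifically ruling out that a particle $z \notin \set{\vec{0}, z_j, z_k}$ placed not adjacent to $U$ could still tie $\vec{0}$ for minimum distance at some $e \in U_i$. This requires a careful case analysis on where $\sigma_z$ may sit relative to $\partial^{\text{ex}} U$, exploiting both hard-core exclusion with $\sigma_{\vec{0}}$ (which blocks $P_i$) and the fact that Manhattan geodesics leaving $U$ must first visit $\partial^{\text{ex}} U$. Once this step is secure, the $T$-symmetry accounting and the bound $c(e) \le t(e)$ are essentially formal.
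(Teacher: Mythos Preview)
Your approach is correct and is essentially the paper's argument in a different (and slightly slicker) wrapping: the paper partitions $U$ into $T$-orbits and does a case analysis on the multiset $\{\Delta_1,\Delta_2,\Delta_3\}$ to show each orbit contributes at least $|O|/3$, whereas your averaging identity $\sum_{e\in U}s_i(e)=|U|/3$ together with $c(e)\le t(e)$ on $U_i$ achieves the same thing without the case split. You are also more explicit than the paper about the geometric input (your ``step one''), which the paper invokes only implicitly when asserting contributions of $1$, $1/2$, or $1/3$; your worry about this step is overstated, since each $\Delta_l$ is $1$-Lipschitz for the $\ell_1$ metric and $U$ is cut out by $\Delta_l\ge 1$, so any $p\notin U$ has $\Delta_m(p)\le 0$ for some $m$ and hence $d_{\Lambda_\infty}(e,p)\ge\Delta_m(e)-\Delta_m(p)\ge w(e)$, with equality forcing $\Delta_m(p)=0$ and $\Delta_m(e)=w(e)$, i.e.\ $p\in P_m$.
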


\begin{proof}
Let $T$ be the automorphism associated to $U_{(a,b)}^{N}$ by Lemma \ref{lem:utsymmetry}. 
Let $O\in U_{(a,b)}^{N}/\gen{T}$ be an orbit and $\vec{x}\in O$.
Consider the multiset 
\begin{equation}
\label{eqn:distance_multiset}
M:=\set{\Delta_{1}(\vec{x}),\Delta_{2}(\vec{x}),\Delta_{3}(\vec{x})}
=\set{\Delta_{i}(\vec{x}),\Delta_{i}(T\vec{x}),\Delta_{i}(T^{2}\vec{x})},
\end{equation}
which is independent of the choice of $\vec{x}$. 
If $O$ is a singleton, then \eqref{eqn:distance_multiset} implies that $\vec{x}$ is equidistant to $P_{\ell}(U^{N}_{(a,b)})$, $\ell=1,2,3$, so $\vec{x}$ contributes at least $1/3$ to the LHS of \eqref{eqn:minimumcontribution}.
Otherwise, $\abs{O}=3$, and there are three cases:
\begin{enumerate}
\item $M$ contains three equal numbers.
Then, each point in $O$ contributes at least $1/3$ to the LHS of \eqref{eqn:minimumcontribution}.
\item $M$ has a unique minimum $\Delta_{i}(T^{\ell}\vec{x})$.
Then, $T^{\ell}\vec{x}$ contributes $1$ to the LHS of \eqref{eqn:minimumcontribution}.
\item $M$ has exactly two minima: $\Delta_{i}(T^{\ell}\vec{x})$ and $\Delta_{i}(T^{\ell+1}\vec{x})$.
Then, $T^{\ell}\vec{x}$ and $T^{\ell+1}\vec{x}$ each contribute at least $1/2$ to the LHS of \eqref{eqn:minimumcontribution}.\qedhere
\end{enumerate}
\end{proof}

\section{Summary of key estimates}
\label{appx:summary}

Here, we recapitulate the important constants and estimates in our analysis of the high-density behavior of hard-core lattice particle models satisfying Assumption \ref{assumption}.

In the proof of Proposition \ref{prop:central_estimates}, we show that the weights of the GFc's that we construct in \S\ref{subsec:gfc} satisfy the Peierls condition
\begin{equation}\tag{\ref{eqn:weightestimate}}
\abs{w_{\mathbf{z}}^{\#}(\gamma)}\le e^{-\tau\abs{\bar{\gamma}}},
\end{equation}
where the Peierls constant $\tau$ satisfies 
\begin{equation}\tag{\ref{eqn:velenik5-28}}
\sum_{n=1}^{\infty} e^{-(\tau-\chi)n}\left(2\chi^{2}\abs{\mathcal{G}}^\chi\right)^{n}
\le 1
\end{equation}
to ensure the convergence of the cluster expansion on the GFc model, and
\begin{equation}\tag{\ref{eqn:derivativecondition2}}
\sum_{s=1}^{\infty} e^{-(\tau-\chi-1)s}\left(2\chi^{2}\abs{\mathcal{G}}^\chi\right)^s\le\frac{\eta}{3I_dd!}
\end{equation}
which ultimately gives us control on the one-point correlation functions as required by Theorem \ref{thm:crystallization}.
In \eqref{eqn:velenik5-28} and \eqref{eqn:derivativecondition2}, the maximal coordination number $\chi$ of the underlying graph $\Lambda_{\infty}$ and the number of ground states $\abs{\mathcal{G}}$ arise out of the need to control the entropy of the GFc's.
The constant $I_{d}$ appears in the $d$-dimensional isoperimetric inequality
\begin{equation}\tag{\ref{eqn:isoperimetric}}
\abs{\Int\gamma}\le I_d\abs{\bar{\gamma}}^{d}.
\end{equation}
Finally, $\eta$, which controls the following derivatives of the partition functions:
\begin{equation}\tag{\ref{eqn:derivative_estimate}}
\abs{\frac{\partial}{\partial\log \mathbf{z}(x_{i})}\log\frac{\Xi^{\#}_{\mathbf{z}}(\Lambda)}{\mathbf{z}^{\#}(\Lambda)}}\le \eta\indicator{x_{i}\in\Lambda},
\end{equation} 
satisfies simply that
\begin{equation}\tag{\ref{eqn:derivativecondition1}}
\eta\le 1.
\end{equation}

The proof of the Peierls bound for the weights of the GFc's requires control of the ratio of two partition functions with possibly different boundary conditions.
In the end, we prove that, up to certain powers of the fugacity $z$, such ratios are essentially boundary terms:
\begin{equation}\tag{\ref{eqn:ratio_estimate}}
\abs{\frac{\Xi^{\#'}_{\mathbf{z}}(\Lambda)}{\Xi^{\#}_{\mathbf{z}}(\Lambda)}}\le\frac{\abs{z}^{\abs{\Lambda\cap \mathcal{L}^{\#'}}}}{\abs{z}^{\abs{\Lambda\cap \mathcal{L}^{\#}}}}e^{\varsigma\abs{\partial^{\text{in}}\Lambda}},
\end{equation}
where the coefficient $\varsigma$ satisfies
\begin{equation}
\tag{\ref{eqn:boundary-coefficient-condition-1}}
\varsigma\ge 2c
\end{equation}
and
\begin{equation}
\tag{\ref{eqn:boundary-coefficient-condition-2}}
\varsigma\ge 3n(e^{\frac{c}{n}}+1)+2+2\mu^{-1}.
\end{equation}
Finally, the constants $n$ and $c$ appear out of the need to compute correlation functions by differentiating the partition functions.
Recall from the statement of Proposition \ref{prop:central_estimates} that we assume that the site-wise fugacity $\mathbf{z}(x)$ is equal to $z$ for all but $n$ sites, on which it is allowed to deviate only slightly from $z$: $e^{-\frac{c}{n}}\abs{z}\le\abs{\mathbf{z}}\le e^{\frac{c}{n}}\abs{z}$.
Note that $c$ can be taken to be arbitrarily small but must remain positive.

All the above computations take place at high fugacities:
\begin{equation}
\tag{\ref{eqn:high-fugacity}}
\mu(\rho_{\max}-\rho_{0})\log z-2c-\varsigma\chi\ge\tau,
\end{equation}
where $\mu$, defined in Lemma \ref{lem:pointwiselowerbound}, denotes the minimum fractional weight assigned to a point in the reference Voronoi cells of a ground state, and $\rho_{\max},\rho_{0}$ are the only places where the formulation of the Peierls condition in terms of the effective volume enters; see Lemma \ref{lem:Peierls}.

We note that, as made evident by \eqref{eqn:boundary-coefficient-condition-2} and \eqref{eqn:high-fugacity}, our analysis requires progressively higher fugacities to control higher-order correlation functions.
Theorem \ref{thm:crystallization}, however, requires only that we consider the case $n=1$.

\medskip

In summary, we prove analyticity in the domain (see \eqref{eqn:high-fugacity})
\begin{equation}
  \abs{z}>z_0:=\exp\left(\frac{\tau+\varsigma\chi}{\mu(\rho_{\mathrm{max}}-\rho_0)}\right)
  \label{estimate_z}
\end{equation}
where $\rho_{\mathrm{max}}$ is the maximum density \eqref{rhomax};
and (see \eqref{rho0}, \eqref{calN}, \eqref{calS0})
\begin{equation}
  \rho_0:=\frac1{\rho_{\mathrm{max}}^{-1}+\frac\epsilon{\mathcal N}}
  ,\quad
  \mathcal N:=\max_{X\in\Omega(\Lambda_{\infty})}\abs{
  \set{x\in X\mid\mathrm d_{\Lambda_{\infty}}(x,0)\le\mathcal S_0}
  },
  \quad
  \mathcal S_0:=\mathcal S_1+\mathcal R_2+4 r_{\mathrm{eff}},
\end{equation}
in which $\epsilon$, $\mathcal R_1$ and $\mathcal S_1$ appear in Item \ref{asm:density_local_density} of Assumption \ref{assumption}, $\mathcal R_2$ satisfies \eqref{ineq_R2_1}, \eqref{ineq_R2_2}, and \eqref{ineq_R2_3}:
\begin{equation}
  \mathcal R_2\ge \max\set{\mathcal R_0,\mathcal R_1}
  ,\quad
  \mathcal R_2>\max\set{\mathrm{d}_{\Lambda_{\infty}}(x,y)\mid x,y\in\Lambda,\ \omega_{x}\cap\omega_{y}\neq\emptyset},
\end{equation}
and $r_{\mathrm{eff}}$ appears in Lemma\-~\ref{lem:finiteeffectiveparticle};
$\mu$ is defined in Lemma \ref{lem:pointwiselowerbound}: (see also \eqref{sigmasharp} and \eqref{vsharp})
\begin{equation}
  \mu=\min_{\lambda\in V_{\mathcal L^\#}(\sigma_x)}\frac{1}{\abs{\set{z\in \mathcal{L}^{\#}\mid\lambda\in\sigma^{\#}_{z}}}};
\end{equation}
$\chi$ is the maximal coordination number of $\Lambda_{\infty}$ (the number of neighbors of each site);
$\tau$ satisfies \eqref{eqn:velenik5-28} and \eqref{eqn:derivativecondition2}:
\begin{equation}
  \tau=\chi+1+\log\left(2\chi^2\abs{\mathcal{G}}^\chi\left(1+\frac1{3I_dd!}\right)\right)
\end{equation}
where we used\-~\eqref{eqn:derivativecondition1}, $\abs{\mathcal{G}}$ is the number of ground states, and (see \eqref{eqn:isoperimetric})
\begin{equation}
  I_d:=\sup_{\Lambda\Subset\Lambda_{\infty},\ \mathrm{connected}}\frac{\abs{\Lambda}}{\abs{\partial^{\ex} \Lambda}^d}
\end{equation}
where $\partial^{\ex} \Lambda$ is the exterior boundary of $\Lambda$ (see Definition \ref{def:boundaries});
and (see \eqref{eqn:boundary-coefficient-condition-2})
\begin{equation}
  \varsigma=8+2\mu^{-1}
  .
\end{equation}

\end{document}